\newtheorem{mydef}{Definition}
\newtheorem{theorem}{Theorem}
\newtheorem{prop}{Proposition}
\def\checkmark{\tikz\fill[scale=0.4](0,.35) -- (.25,0) -- (1,.7) -- (.25,.15) -- cycle;}
\newcommand\hai[1]{\textcolor{blue}{#1}}
\newcommand{\cmt}[1]{{\color{blue} \footnotesize \# #1}}
\begin{document}
%-------------------------------------------------------------------------------

%don't want date printed
\date{}

% make title bold and 14 pt font (Latex default is non-bold, 16 pt)
\title{\Large \bf \textsc{NOIR}: Privacy-Preserving Generation of Code with Open-Source LLMs}

%for single author (just remove % characters)
\author{
{\rm Khoa Nguyen$^{1\star}$, Khiem Ton$^{1\star}$, NhatHai Phan$^{1\#}$, Issa Khalil$^{2}$, Khang Tran$^{1\star}$,} \and {\rm Cristian Borcea$^{1}$, Ruoming Jin$^{3}$, Abdallah Khreishah$^{1}$, My T. Thai$^{4}$}\\
$^1$ New Jersey Institute of Technology, $^2$ Hamad Bin Khalifa University, \\ $^{3}$ Kent State University, $^{4}$ University of Florida
\\ Emails: \{nk569, kt477, kt36, borcea, abdallah\}@njit.edu; ikhalil@hbku.edu.qa; \\ rjin1@kent.edu; mythai@cise.ufl.edu \\
$^{\star}$Equal Contributions, $^{\#}$Corresponding Author (Email: phan@njit.edu)
% copy the following lines to add more authors
% \and
% {\rm Name}\\
%Name Institution
} % end author

\maketitle

%-------------------------------------------------------------------------------
\begin{abstract}
%-------------------------------------------------------------------------------
Although boosting software development performance, large language model (LLM)-powered code generation introduces intellectual property and data security risks rooted in the fact that a service provider (cloud) observes a client's prompts and generated code, which can be proprietary in commercial systems. To mitigate this problem, we propose \textsc{NOIR}, the first framework to protect the client's prompts and generated code from the cloud. \textsc{NOIR} uses an encoder and a decoder at the client to encode and send the prompts' embeddings to the cloud to get enriched embeddings from the LLM, which are then decoded to generate the code locally at the client. Since the cloud can use the embeddings to infer the prompt and the generated code, \textsc{NOIR} introduces a new mechanism to achieve indistinguishability, a local differential privacy protection at the token embedding level, in the vocabulary used in the prompts and code, and a data-independent and randomized tokenizer on the client side. These components effectively defend against reconstruction and frequency analysis attacks by an honest-but-curious cloud. Extensive analysis and results using open-source LLMs show that \textsc{NOIR} significantly outperforms existing baselines on benchmarks, including the Evalplus (MBPP and HumanEval, Pass@1 of 76.7 and 77.4), and BigCodeBench (Pass@1 of 38.7, only a 1.77\% drop from the original LLM) under strong privacy against attacks.
\end{abstract}

%-------------------------------------------------------------------------------
\section{Introduction}
%-------------------------------------------------------------------------------

Commercial LLM-powered code generation tools have greatly boosted developer productivity \cite{copilot}. Yet, over 80\% of companies using cloud-hosted generative AI cite intellectual property (IP) leakage and data security as major concerns, with nearly 45\% reporting unintended data exposure \cite{BiggestRiskGenAI,SecurityRisksCopilot,TooMuchAccess,SEJ}, including real cases of proprietary code leakage \cite{SamsungIncident}. 
These risks are rooted in the prompts a client sends to the cloud-hosted LLMs to get generated code. The prompts and the generated code allow the cloud operators to observe the exact functions and techniques behind commercial systems. Therefore, addressing these concerns and risks is critical given the proliferation of LLM-powered code generation across industry sectors \citep{Market2032}.

\begin{figure}[t]
\centering
\resizebox{1.0\linewidth}{!}{
\includegraphics{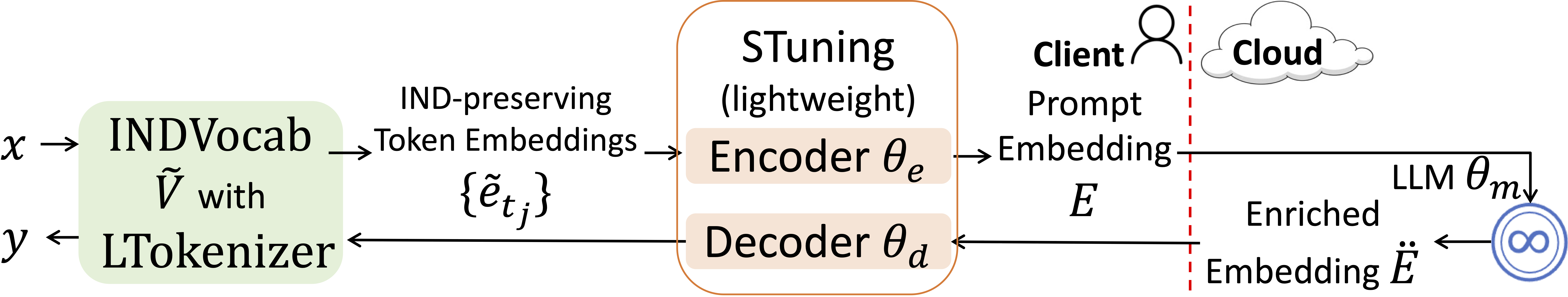}
}
\caption{\textsc{NOIR}: Privacy-preserving Generation of Code.}
\label{XCode}
\end{figure}

\textbf{Private Generation of Code.} A vital solution is protecting both client prompts and generated code from cloud observation. Hosting proprietary LLMs on the client side is impractical, even in compressed form \citep{xiao2023offsitetuning}, as it risks model ownership loss. An alternative is client-side data centers to train, host, and maintain open-source LLMs for code generation, benefiting clients with safety-critical or sensitive data \cite{MediumOpenSourceLLM,OpenSourceLLM,paloaltonet,DigitalTrade}. However, this remains financially unfeasible for most clients and enterprises \citep{GenAIDataCenter,AIDataCenter}. 

\textbf{Prior approaches} include prompt tuning \citep{NEURIPS2023_f26119b4,wu2024privacypreserving,hong2024dpopt} and example ensembling \citep{tang2024privacypreserving} under differential privacy (DP) to protect membership information of data points in the training set; that is, prevent the cloud from inferring whether a data point is used in training prompt-tuning models.
Other methods to protect the client's prompts are to alter tokens \citep{10.1145/3459637.3482281,10.1145/3336191.3371856,li2023privacypreserving} or their embeddings \citep{mai2024splitanddenoise} in every client's prompt to achieve \textit{metric} DP (i.e., $d_x$-privacy \citep{10.1007/978-3-642-39077-7_5}).
These state-of-the-art (SoTA) approaches focus on either classification tasks \citep{10.1145/3459637.3482281,10.1145/3336191.3371856,li2023privacypreserving,mai2024splitanddenoise} instead of code generation tasks as pointed out in \citep{mai2024splitanddenoise} or protecting the training data for prompt tuning \citep{NEURIPS2023_f26119b4,wu2024privacypreserving,hong2024dpopt,tang2024privacypreserving} without protecting the content of client's prompts and generated outcomes. 

Therefore, a novel approach is desired for the private generation of code, where the sweet-spot is balanced between rigorous privacy-preserving guarantees for prompts and generated code, cost effectiveness, and high model performance.

\textbf{Challenge.} Unlike generating general text, the main challenge in private code generation is that small changes in clients' prompts for privacy protection—at the token or its embedding level—can severely degrade generated code functionality, as LLMs are highly sensitive to such changes. Ensuring good generated code functionality often weakens $d_x$-privacy protection in SoTA methods \citep{10.1145/3459637.3482281,10.1145/3336191.3371856,li2023privacypreserving,mai2024splitanddenoise}, leaving prompts and generated code vulnerable to reconstruction attacks \citep{morris-etal-2023-text,chen2024unveilingvulnerabilityprivatefinetuning} by privacy-untrusted clouds. In addition, we prioritize code generation over general text due to its severe risks to IP and security. Leaks of proprietary algorithms, sensitive system code implementation, and trade secrets create a massive blind spot for zero-day vulnerabilities, IP theft, and supply chain compromises causing security, legal, and economic damages far greater than in general text generation \cite{sourcecodeleaks}.

\textbf{Contributions.} This paper proposes \textsc{NOIR} (Figure \ref{XCode}), the first framework to protect client prompts and generated code from cloud observation using an open-source LLM. Leveraging the competitive performance of open-source LLMs \cite{chen2021evaluating,zhuo2024bigcodebench}, service providers can offer \textsc{NOIR} to clients and enterprises demanding strong privacy guarantees \cite{MediumOpenSourceLLM,OpenSourceLLM,paloaltonet,DigitalTrade}. \textsc{NOIR} uses open-source LLMs to provide privacy guarantees to clients, and it complements closed-source LLM-based code generation tools, where it is difficult to provide privacy protection without exposing closed-source models to clients.

Given an open-source LLM, \textsc{NOIR} splits it into three parts: 1) \textbf{encoder} (first few attention blocks), 2) \textbf{middle part} (most attention blocks), and 3) \textbf{decoder} (last few attention blocks).
On the client side, \textsc{NOIR} includes the encoder, decoder, a fine-tuning method (\textsc{STuning}), and an indistinguishability (IND)-preserving vocabulary (\textsc{INDVocab}), which provides a local differential privacy (LDP) protection at the token embedding level, associated with a local randomized tokenizer (\textsc{LTokenizer}). Privacy protection arises from running the encoder and decoder locally: prompts are encoded before being sent to the cloud-hosted (middle part of the) LLM, which enriches prompt embeddings using its latent knowledge. The enriched embeddings enhance model performance on a range of downstream tasks because the middle part of LLMs can capture even richer representations from large datasets compared with the final layers of LLMs \cite{skean2025layer}. These embeddings are returned to the client’s decoder to generate code. This design avoids sending raw prompts or code to the cloud.

%The main privacy-protection idea in \textsc{NOIR} is that the pair of encoder and decoder, which are lightweight attention blocks, exclusively runs on the client side, enabling the client to interact with the cloud-hosted \hai{(middle part of the open-source)} LLM through the embedding space rather than sending the raw input prompts and output code to the cloud. The client-side encoder encodes a prompt before sending it to the cloud-hosted LLM. The LLM then enriches the prompt embedding received from the client's encoder through the knowledge in the latent space learned by the LLM. The enriched embedding offered by the cloud-hosted LLM enhances model performance since the model has learned features from large datasets. The enriched embeddings from the cloud are sent to the client's decoder, which decodes them to generate the output code.

To optimize the encoder and decoder to client tasks, we develop \textsc{STuning}, a cost and performance-effective split learning approach \citep{DBLP:journals/corr/abs-1812-00564,10529950,10.1145/3460120.3485259}, to fine-tune them locally with client datasets, mitigating utility loss from privacy protection. Since the encoder and decoder are lightweight compared with the cloud-hosted model, \textsc{STuning} is cost-effective. Optionally, the cloud may fine-tune a low-rank adaptation (LoRA \citep{hu2022lora}) of the hosted model for improved performance.

\begin{table*}[t]
\centering
\caption{A Summary of Differences between \textsc{NOIR} and Related Works.}
\resizebox{0.885\textwidth}{!}{
\begin{tabular}{lcccclr}
\toprule
 & Generation & Denoise & Fine-tuning & A Token Embedding in Multiple Prompts & Tokens in a Prompt & Privacy Guarantee \\ 
\midrule
%PCFT & $\mathsf{x}$ & $\mathsf{x}$ & unchanged, consistent & unchanged, consistent & SMPC \\ \hline
TokEmbPriv \citep{10.1145/3459637.3482281} & $\mathsf{x}$ & $\mathsf{x}$ & $\mathsf{x}$ & changed, inconsistent & changed, inconsistent & $d_x$-privacy \\ % \hline
Text2Text \citep{10.1145/3336191.3371856} & $\mathsf{x}$ & $\mathsf{x}$ & $\mathsf{x}$ & changed, inconsistent & changed, inconsistent & $d_x$-privacy \\ %\hline
RAPT \citep{li2023privacypreserving} & $\mathsf{x}$ & \checkmark & \checkmark & changed, inconsistent & changed, inconsistent & $d_x$-privacy \\ % \hline
Split-and-Denoise \citep{mai2024splitanddenoise} & $\mathsf{x}$ & \checkmark & $\mathsf{x}$ & changed, inconsistent & changed, inconsistent & $d_x$-privacy \\ % \hline
\textbf{\textsc{NOIR} (ours)} & {\checkmark} & ${\mathsf{x}}$ & {\checkmark} & {unchanged, consistent} & {unchanged, consistent} & {$\epsilon$-IND} \\
\bottomrule
\end{tabular}
}
\label{RelatedWorkTable}
\end{table*}

To prevent the cloud from inferring sensitive and proprietary content from the prompts and the generated code via prompt embeddings and back-propagated gradients under SoTA reconstructing attacks \citep{morris-etal-2023-text,chen2024unveilingvulnerabilityprivatefinetuning} in both the \textsc{STuning} and inference phases, we propose a client-side privacy mechanism with two components: \textsc{INDVocab} and \textsc{LTokenizer}. \textsc{INDVocab} adaptively randomizes the token embeddings in the vocabulary making them indistinguishable to the cloud so the probability of the cloud inferring ground-truth tokens, prompts, and code given the prompt embeddings is upper-bounded, while minimizing injected randomness for better utility.
To protect the one-hot vectors of tokens used in the client tokenizer, which the cloud can exploit during \textsc{STuning} to reconstruct the client's data from the back-propagated gradients \citep{chen2024unveilingvulnerabilityprivatefinetuning}, \textsc{NOIR} develops \textsc{LTokenizer}, which uniformly assigns every token and its IND-preserving embedding to a random index in the \textsc{INDVocab}. This data-independent tokenizer remains secret (i.e., no extra privacy cost), misleading the cloud to observe meaningless tokens from the back-propagated gradients.

\textsc{LTokenizer} and \textsc{INDVocab} fully protect client prompts and output code during fine-tuning and inference. \textsc{INDVocab} keeps tokens in prompts, (prompt) instructions, (prompt) templates, and code intact while randomizing token embeddings only once with negligible noise. Thus, it maintains the correlation among prompts, instructions, templates, and code while providing indistinguishability protection to every token, given its token embedding. This correlation enhances \textsc{STuning} by mitigating utility drops from encoder-(cloud-hosted) LLM-decoder misalignment when using the \textsc{INDVocab} and \textsc{LTokenizer}. Consequently, \textsc{NOIR} generates code with high functionality under strong client-side IND protection.

Extensive experiments with LLMs (CodeLlama-7B, CodeQwen1.5-7B-Chat, Llama3-8B-instruct) on benchmarks (Evalplus: MBPP \citep{austin2021program}, HumanEval \citep{chen2021evaluating}; BigCodeBench \citep{zhuo2024bigcodebench}) show that \textsc{NOIR} achieves a Pass@1 scores of 76.7 and 77.4 on MBPP \citep{austin2021program} and HumanEval \citep{chen2021evaluating} and 38.7 on BigCodeBench (a marginal drop of 1.77\% from the original LLM) while significantly outperforming SoTA baselines (T2T \citep{10.1145/3459637.3482281,10.1145/3336191.3371856,li2023privacypreserving}, SnD \citep{mai2024splitanddenoise}) against reconstruction and frequency analysis attacks \citep{morris-etal-2023-text,chen2024unveilingvulnerabilityprivatefinetuning}. Regarding cost effectiveness, \textsc{NOIR} reduces client inference/fine-tuning costs by $\backsim10$x compared with local LLM hosting, thanks to its lightweight encoder–decoder (1–4 attention blocks). GPU memory use, fine-tuning time, and equivalent AWS hosting costs grow sub-linearly with dataset size, making fine-tuning scalable on larger datasets without sharply increasing client communication costs.

To show practicality, we open-source \textsc{NOIR} ({\color{blue}\url{https://tinyurl.com/NOIR-Artifact}}) based on Qwen2.5-Coder-32B-Instruct and provide an application programming interface (API) via a privacy-preserving coding agent, accessible through a web service ({\color{blue}\url{https://noir.oppyai.com}}) and a Visual Studio (VS) extension ({\color{blue}\url{https://tinyurl.com/NOIR-Artifact}}), integrated into the development pipeline.

\section{Related Work}
\label{Related Work}

% Recent studies show that LLM-powered code generation commercial tools, such as Microsoft Github Copilot, Google Gemini, Amazon CodeWhisperer, etc., have significantly improved software developer productivity in terms of speed (55\% less time in completing tasks) and satisfaction (up to 75\%) while conserving their mental energy to better stay in the flow (73\%) and preserve mental effort during repetitive tasks (87\%) \citep{copilot}.

% Although boosting software development performance \citep{copilot}, 

% \textbf{Proprietary Code and IP Leaks.} A critical concern raised by over 80\% of companies using cloud-hosted generative AI is IP leakage and data security, with nearly 45\% companies encountering unintended data exposure \citep{BiggestRiskGenAI}. Real-world incidents have also been reported \citep{SamsungIncident}. The cloud's observation of clients' prompts and generated code is the root cause of this risk. 

% This section discusses related approaches in preserving DP \citep{dwork2014} in prompts and prompt tuning.

%CB: we talk about them below - save some space here.
% $d_x$-privacy protection at the token-level for prompts \citep{10.1145/3459637.3482281,10.1145/3336191.3371856,li2023privacypreserving}, and \citep{NEURIPS2023_f26119b4,wu2024privacypreserving,hong2024dpopt} with denoising \citep{mai2024splitanddenoise} in LLMs, discussed next.

% There is increasing interest in protecting prompts in LLMs.

\textbf{Privacy-preserving Prompts.} SnD \citep{mai2024splitanddenoise} is the most recent related work. The client injects independent draws of Laplace noise into the token embeddings of every prompt to achieve $d_x$-privacy \citep{10.1007/978-3-642-39077-7_5} (relaxed LDP) before sending the embeddings to the cloud. Then, it receives output embeddings in return and denoises them for downstream tasks. 
% Other studies are \citep{10.1145/3459637.3482281,10.1145/3336191.3371856,li2023privacypreserving}. 
The common point among other studies \citep{10.1145/3459637.3482281,10.1145/3336191.3371856,li2023privacypreserving} is achieving $d_x$-privacy by replacing tokens with randomized
tokens, after injecting noise into token embeddings of a prompt.

Table \ref{RelatedWorkTable} summarizes key differences between \textsc{NOIR} and prior methods.
These methods were designed for classification, and extending them to sequence-to-sequence generation is difficult, as errors in previous token prediction will compound the deviation of the following tokens and substantially degrade performance \citep{mai2024splitanddenoise}. We observe similar issues in our study. Therefore, instead of focusing on achieving $d_x$-privacy at the prompt level as in SoTA, \textsc{NOIR} enables private code generation—a harder task requiring protection of both prompts and generated code while preserving functionality by achieving indistinguishability of the vocabulary (\textsc{INDVocab}) and the \textsc{LTokenizer}. NOIR adaptively randomizes token embeddings only once, making them indistinguishable to the cloud, and keeping them consistent in \textsc{INDVocab} used across prompts and code. 
This reduces the randomness injected while maintaining tokens' essential correlation in the client's data; hence, \textsc{NOIR} allows the client to fine-tune the encoder and decoder locally for code generation.

In the scenario where users/organizations need to provide data to an enterprise (client) in \textsc{NOIR}’s setting, they must trust the enterprise as a data curator in real-world deployments. This setting is different from classical LDP protection, in which the user/organization's data, i.e., prompts and generated code, is protected from the data curator \cite{10.1145/2660267.2660348}, i.e., the client in \textsc{NOIR}. If the users/organizations have sufficient resources and data to host the encoder/decoder in practice, they can skip the enterprise (client) and directly use \textsc{NOIR}.

\textbf{Privacy-preserving Prompt Tuning (P3T).} The most recent P3T \citep{NEURIPS2023_f26119b4,wu2024privacypreserving,hong2024dpopt} is DP-OPT \citep{hong2024dpopt}, which fine-tunes a local LLM for ensemble prompt engineering, with DP to protect the membership information of prompts in the local LLM's training set. 
\textbf{This objective is different from protecting the content of the client's prompts and generated code} in both fine-tuning and inference phases of \textsc{NOIR}.

\section{Preliminaries}
\label{Preliminaries}

% This section briefly reviews code generation, proprietary code and IP leaks, and prompt and outcome protection in LLMs.

% \begin{figure}[t]
% \centering
% \resizebox{0.5\columnwidth}{!}{
% \includegraphics{images/RelatedWork.png}
% }
% \caption{Related Works and \textsc{NOIR}.}
% \label{RelatedWork} 
% \end{figure}

% such as CodeLlama \citep{rozière2024codellamaopenfoundation}, GPT-4 \citep{openai2024gpt4technicalreport}, DeepSeek-Coder \citep{guo2024deepseekcoderlargelanguagemodel}, CodeQwen \citep{qwen}, etc.,

%CB: GPT-4 is not just for code generation, like the others. Perhaps say that some LLMs are tailored for code generation, while others are general purpose, but can be used for code generation

\textbf{LLM-based Code Generation.}
An LLM uses a vocabulary of tokens (human-readable words) and their token embeddings: $V=\{t, e_t\}$, where $|V|$ is the vocabulary size and $e_t \in \mathbb{R}^m$ with $m$ features. A prompt $x$ is a sequence of tokens from $V$: $x = \{t_j\}_{j =1}^{|x|}$ where $t_j \in V$ and $|x|$ is the number of tokens in $x$. Combined with a task instruction $\pi$ in a template $\mathcal{T}$\footnote{For instance, $\pi$: ``You are an expert in Python:'' and $\mathcal{T}$: ``Complete the request. \#\#\# Instruction: \{instruction $\pi$\} \{prompt $x$\} \#\#\# Response:''}, this guides the LLM to generate output code $y$, modeled as $P^h[y|\mathcal{T}(x, \pi)]$, where the higher the temperature $h$ ($\geq 0$), the more diverse the generated code. For simplicity, we fix $\pi$ and $\mathcal{T}$, treating $x$ as the input. The prompt $x$ is then represented as a sequence of token embeddings $\{e_{t_j}\}_{j = 1}^{|x|}$ fed to the LLM: $P^h[y|\{e_{t_j}\}_{j = 1}^{|x|}]$.

\textbf{Tokenization} consists of the first and the last steps of text processing and modeling in LLMs \citep{MistralAITokenizer}. A tokenizer breaks down text into tokens and assigns each token $t \in V$ a unique numerical index represented by a one-hot vector $v_t \in \mathbb{I}^{|V|}$ without affecting the generality and correctness.
Given the input $\{e_{t_j}\}_{j = 1}^{|x|}$, the LLM iteratively generates the one-hot vector $v_t$ of the next token $t$ in the output code $y$. Finally, the tokenizer detokenizes the one-hot vectors $\{v_t\}_{t \in y}$ back to human-readable text by mapping those vectors to their corresponding tokens using the vocabulary $V$.

\textbf{Differential Privacy.} DP \citep{dwork2014} is widely-used for data privacy, and LDP \citep{4690986,6686179} particularly protects the values of data inputs against an untrusted data curator. In the classical LDP definition, an LDP-preserving mechanism produces similar output distributions, preventing the curator from distinguishing the outcomes of the data inputs.
\begin{mydef}{$\epsilon$-LDP \cite{6686179}.} A randomized algorithm $\mathcal{M}$ fulfills $\epsilon$-LDP, if for any two inputs $x$ and $x'$, and for all possible outputs $\mathcal{O}$ of $\mathcal{M}$ $\big(\mathcal{O} \in Range(\mathcal{M})\big)$, we have:
$Pr[\mathcal{M}(x) = \mathcal{O}] \leq e^{\epsilon} Pr[\mathcal{M}(x') = \mathcal{O}]$, where $\epsilon$ is a privacy budget.
\label{LDP}
\end{mydef}
\noindent A smaller $\epsilon$ enforces a stronger privacy guarantee controlling the difference of the distributions induced by $x$ and $x'$. Our setting is different from the classical LDP, where users/organizations with limited resources must trust the client to directly use \textsc{NOIR} (Section \ref{Related Work}).

\textbf{Split Learning} \citep{DBLP:journals/corr/abs-1812-00564} enables distributed learning by decomposing a neural network, e.g., an LLM in \textsc{NOIR}, into non-overlapping client and cloud segments. In \textsc{NOIR}, we consider a client (e.g., a small enterprise with limited resources) and a cloud, without loss of generality. Typical split learning frameworks use two architectures \citep{10529950,10.1145/3460120.3485259}; \textsc{NOIR} adopts the setting that protects both the input prompt $x$ and output code $y$: the client holds the first and last several attention blocks, while the cloud hosts the middle blocks of the LLM. 

% as follows: \textbf{(1)} The client owns the first several layers (attention blocks) of the LLM, whereas the cloud hosts the remaining attention blocks. In this setting, the client share the intermediate network's outputs, which are prompt embeddings in our study; rather than the raw input data, which is the prompt $x$. Also, the client shares the output $y$ with the cloud; and \textbf{(2)} The client owns the first and the last several attention blocks of the LLM while the cloud hosts the remaining middle blocks of the LLM. \textsc{NOIR} leverages the second setting, given its ability to protect both the input prompt $x$ and the output code $y$ from being observed by the cloud.

\textbf{Reconstruction Attacks (RAs).} RAs aim to recover input text from its embedding \citep{ni2021largedualencodersgeneralizable,li2023towards,10.1145/3372297.3417270,li-etal-2023-sentence} or both input and output text from their associated back-propagated gradient \citep{chen2024unveilingvulnerabilityprivatefinetuning}.
Vec2Text \citep{morris-etal-2023-text}, the most advanced embedding-based RA, has base and refining steps: a trained conditional language model converts the embedding to a text corpus (the ``base hypothesis''), then recursively re-embeds and corrects the base hypothesis to increase cosine similarity with the original text embedding. 
The Vec2Text model is trained on this generated data. BiSR \citep{chen2024unveilingvulnerabilityprivatefinetuning}, the latest gradient-based RA, initializes a dummy label and iteratively improves it through backward gradient matching and forward embedding matching.

\begin{algorithm}[t]
\footnotesize
\caption{\textsc{NOIR}: Private Generation of Code}
\label{CodeX - Psuedo Code}
\begin{algorithmic}[1]
\STATE \textbf{Input}: Vocabulary $V$, Encoder $\theta_e$, Decoder $\theta_d$, Cloud-hosted Model $\theta_m$, IND budget $\epsilon$, Training Data $D = \{x, y\}$, learning hyper-parameter $\gamma$
\STATE \textbf{Output}: $\epsilon$-IND-preserving $\tilde{V}$, \textsc{LTokenizer}, $\theta_e, \theta_d$
\STATE \textbf{Def Client($V, \theta_e, \theta_d, D$)} \\
\STATE \textbf{Initialize} $\tilde{V}$ as a copy of the vocabulary $V$ \cmt{Creating \textsc{INDVocab}} \\
\FOR{token $t \in V$}
    \FOR{$i^\text{th}$-feature of token embedding $e_t$ of $t$}
        \STATE \textbf{Compute} $\beta_i$ using Theorem \ref{theorem-beta-bound} \\
        \STATE \textbf{Assign} a value to $\tilde{e}^i_t$ with ARR using Eq. \ref{ARR} with $\beta_i$
    \ENDFOR
    \STATE \textbf{Assign} $\{t, \tilde{e}_t\}$ a random, unique index in the client's tokenizer \cmt{\textsc{LTokenizer}}
\ENDFOR
\FOR{$\text{round} \in [1,T]$ \cmt{\textsc{STuning} with \textsc{LTokenizer}, \textsc{INDVocab}}}
    \STATE \textbf{Sampling} a batch $B$ of data points $\{x, y\} \in D$ \\
    \STATE $\{\mathcal{E} \leftarrow Enc\big(\mathcal{T}(x, \pi), \theta_e\big)\}_{x \in B}$ \cmt{Get prompt embeddings using $\tilde{V}$} \\
    \STATE $\ddot{\mathcal{E}} \leftarrow \textbf{Cloud}(\{\mathcal{E}\})$ \cmt{Send $\{\mathcal{E}\}$ to the cloud and get $\ddot{\mathcal{E}}$}
    \STATE $\theta_d \leftarrow \theta_d - \lambda \nabla \mathcal{L}(\theta_d)$ \cmt{The client fine-tunes $\theta_d$ using $\ddot{\mathcal{E}}$ and $y$} \\
    \STATE $\theta^{LoRA}_m \leftarrow \theta^{LoRA}_m - \lambda \nabla \mathcal{L}(\theta^{LoRA}_m)$ \cmt{The cloud fine-tunes $\theta_m$'s LoRA} \\
    \STATE $\theta_e \leftarrow \theta_e - \lambda \nabla \mathcal{L}(\theta_e)$ \cmt{The client fine-tunes $\theta_e$ using $\mathcal{L}$}
\ENDFOR
\STATE \textbf{Def Cloud($\{\mathcal{E}\}$)} \\
\STATE \textbf{\ \ \ Return} $\{f_{emb}(\mathcal{E}, \theta_m)\}$ \cmt{Extract and return enriched embeddings}
\end{algorithmic} 
\end{algorithm}

\section{\textsc{NOIR}: Overview, Feasibility, and Threats}
\label{X-Code}

\textbf{Overview.} IP leakage and data security motivate \textsc{NOIR}, designed to enforce \textbf{CPC} (data \underline{C}onfidentiality, information \underline{P}rivacy, and code \underline{C}onfidentiality) constraints.
\textbf{Raw Data Confidentiality:} Client data—built from open/private sources, prompts, or generated code during inference—must not be shared with the (privacy-untrusted) cloud.
\textbf{Encoded-Information Privacy:} Sensitive content in client prompts and code, across training and inference, must remain uninterpretable to the cloud.
\textbf{Code Confidentiality:} Code syntax, semantics, and functionality in client data and generated outputs must not be reconstructable by the cloud.

\textbf{General Design.} To accommodate the CPC constraints in \textsc{NOIR} (Figure \ref{XCode}, Alg. \ref{CodeX - Psuedo Code}),
the cloud decomposes an open-source LLM $\theta$ into three parts: an encoder with the first few attention blocks $\theta_e$, a decoder with the last few attention blocks $\theta_d$, and the remaining middle blocks $\theta_m$. The lightweight encoder and the decoder associated with the vocabulary $V$ and its tokenizer are known to the client so that the client can use them at a negligible cost. Meanwhile, the cloud hosts the remaining blocks $\theta_m$. The pair of an encoder and a decoder satisfies the data confidentiality constraint. By integrating \textsc{INDVocab} and \textsc{LTokenizer}, the design satisfies information privacy and code confidentiality constraints. 

\textbf{Feasibility.} Our setting is feasible because: \textbf{(1)} Most LLM-providers (e.g., OpenAI, Meta, Google, DeepSeek, Alibaba, etc.) release open-source versions of their closed-source models. Service-providers thus have an incentive to offer open-source encoders and decoders to clients concerned with IP leakage and data security \cite{BiggestRiskGenAI,SecurityRisksCopilot,TooMuchAccess,SEJ,MediumOpenSourceLLM,OpenSourceLLM,paloaltonet,DigitalTrade}, attracting new customers without affecting existing closed-source offerings; \textbf{(2)} Operation costs—hosting, fine-tuning, and inference—are notably reduced with lightweight encoders and decoders, making deployment practical for most clients and enterprises; and \textbf{(3)} Open-source LLMs now achieve highly competitive performance \cite{chen2021evaluating,zhuo2024bigcodebench}. Hence, open-source LLMs align well with the design of \textsc{NOIR}.

The client uses the open-source vocabulary $V$ and the tokenizer to represent every prompt $x$ as a sequence of token embeddings $\{e_{t_j}\}_{j = 1}^{|x|}$. The encoder $\theta_e$ extracts a prompt embedding $\mathcal{E} = Enc\big(\{e_{t_j}\}_{j = 1}^{|x|}, \theta_e\big)$, which is sent to the cloud-hosted model $\theta_m$ to obtain an enriched embedding $\ddot{\mathcal{E}} = f_{emb}(\mathcal{E}, \theta_m)$. The client feeds $\ddot{\mathcal{E}}$ to the decoder $\theta_d$ to generate the output code $y = Dec(\ddot{\mathcal{E}}, \theta_d)$. The encoder and decoder are fine-tuned (\textsc{STuning}, Section \ref{STuning}) for local tasks using curated data from open and/or the client's private sources. The cloud can either \textbf{1) keep $\theta_m$ fixed} to reduce computation complexity or \textbf{2) fine-tune a lightweight LoRA} \citep{hu2022lora} of the middle block $\theta_m$ with the client for improved utility at marginal cost.

\subsection{Threat Model}
\label{Threat Model}

\begin{figure}[h]
\centering
\resizebox{0.99\columnwidth}{!}{
\includegraphics{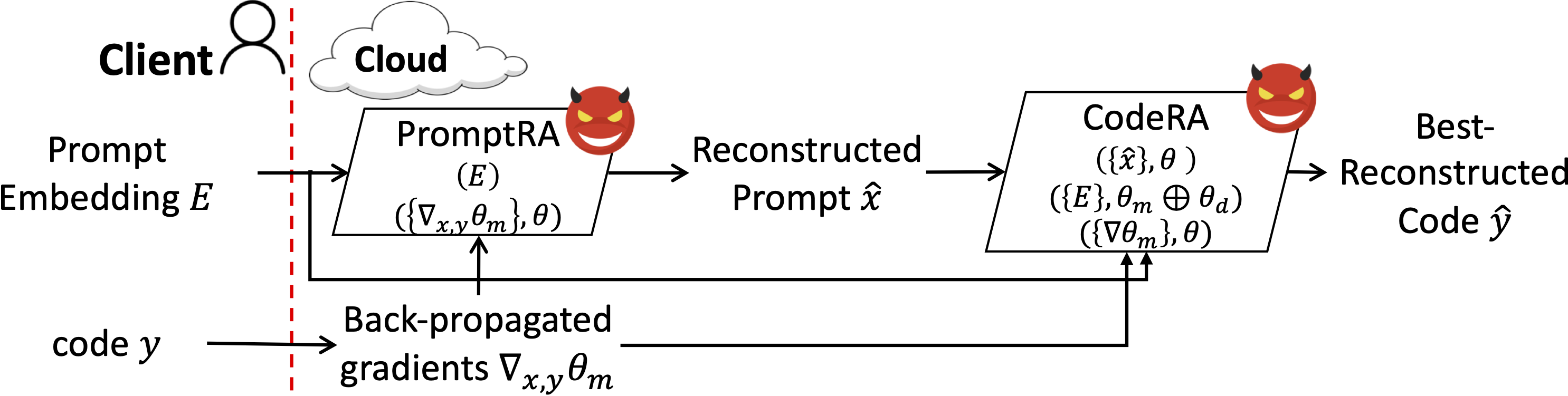}
}
\caption{Threat Model of an Honest-but-Curious Cloud.}
\label{ThreatModel}
\end{figure}

In a \textbf{defense-free environment}, the client uses the open-source vocabulary and tokenizer during fine-tuning and inference. Figure \ref{ThreatModel} shows the attack surface in \textsc{NOIR}. 
The honest-but-curious cloud seeks to reconstruct prompts, code used in the client's fine-tuning, and code generated in the inference phase.
To reconstruct the prompts (PromptRA), the cloud applies SoTA RAs \citep{morris-etal-2023-text} on the client's prompt embeddings, then feeds them to the LLM model $\theta$ to generate reconstructed code $\hat{y}$ (CodeRA). Due to noisy prompt embeddings, the reconstructed prompts are typically imperfect, so the cloud alternatively feeds prompt embeddings directly into $\theta_m$ concatenated with the original decoder $\theta_d$.  
During fine-tuning, the cloud can also feed the back-propagated gradient of each training sample $\{x, y\}$ through the middle block $\theta_m$, denoted as $\nabla_{x,y} \theta_m$, into BiSR \citep{chen2024unveilingvulnerabilityprivatefinetuning} to reconstruct the prompts $x$ and their output code $y$ in the client's training set $D$. We evaluate attack performance using the best-reconstructed code in both methods, considering maximal information leak in each metric, including information privacy, code confidentiality, and code functionality. 

We exclude attacks involving compromised employees in the client's organization. These employees can collude with the cloud for the cloud to query the client's encoder or disclose the input prompts $x$, the output code $y$, the \textsc{INDVocab}, the \textsc{LTokenizer}, the fine-tuned encoder and decoder, and their associated gradients to the cloud. Such insiders can gather all raw data during training and inference (out of \textsc{NOIR}'s scope). \textbf{Token sequence-based frequency analysis attacks}, e.g., the codebook attack \cite{Biryukov2011}, \textbf{do not apply to \textsc{NOIR}} since: \textbf{(1)} the encoder's attention mechanism yields varied token embeddings for the same input token given different token positions; and \textbf{(2)} The cloud cannot query the client's encoder or access raw inputs. Therefore, \textbf{adaptive attacks} (manipulating prompts to observe output changes) \textbf{are not applicable}.

\begin{figure*}[t] 
\captionsetup[subfigure]{justification=centering}
    \centering
      \begin{subfigure}[t]{0.495\textwidth}
        \centering
\begin{tcblisting}{colback=white,colframe=black,listing only,listing engine=minted,minted language=text,minted options={breaklines,breaksymbolleft=,fontsize=\tiny},  left=1pt,
  right=1pt,
  top=1pt,
  bottom=6pt,
  boxsep=1pt,
}
Below is an instruction that describes a task. Complete the request.
### Instruction:
You are an expert Python programmer. Write a function to find the first duplicate element in a given array of integers. Your code should pass these tests:
['assert find_first_duplicate(([1, 2, 3, 4, 4, 5]))==4']
### Response: 
\end{tcblisting}
       \caption{A Client's Prompt}
       \vspace{0.3cm}
       \label{REPrompt}
       \end{subfigure}
       \hfill
\begin{subfigure}[t]{0.495\textwidth}
      \centering
\begin{tcblisting}{colback=white,colframe=black,listing only,listing engine=minted,minted language=python,minted options={breaklines,breaksymbolleft=,fontsize=\tiny},  left=1pt,
  right=1pt,
  top=1pt,
  bottom=1pt,
  boxsep=1pt,
}
def find_first_duplicate(arr):
    seen = set()
    for num in arr:
        if num in seen:
            return num
        seen.add(num)
    return -1 
\end{tcblisting} 
    \caption{\textbf{The Client's} Generated Code \textbf{with \textsc{NOIR}}}
    \vspace{0.3cm}
        \label{RECodeX}
    \end{subfigure}  \\
      \begin{subfigure}[t]{0.55\textwidth}
      \centering
\begin{tcblisting}{colback=white,colframe=black,listing only,listing engine=minted,minted language=python,minted options={breaklines,breaksymbolleft=,fontsize=\tiny},  left=1pt,
  right=1pt,
  top=1pt,
  bottom=1pt,
  boxsep=1pt,
}
def find_first_duplicate(nums):
  for i in range(len(nums)):
    if nums[abs(nums[i])] > 0:
      return abs(nums[i])
    else:
      nums[abs(nums[i])] = -nums[abs(nums[i])]
  return -1
\end{tcblisting} 
        \caption{\textbf{The Cloud's} Reconstructed Code \textbf{in a Defense-free Environment}}
        \label{RECloudNoDefense}
    \end{subfigure} \hfill
      \begin{subfigure}[t]{0.44\textwidth}
      \centering
\begin{CJK*}{UTF8}{gbsn}
\begin{tcblisting}{colback=white,colframe=black,listing only,listing engine=minted,minted language=text, minted options={breaklines,breakanywhere=true,breaksymbolleft=, fontsize=\tiny},  left=1pt,
  right=1pt,
  top=1pt,
  bottom=1pt,
  boxsep=1pt,}
def find\n\n\n\n\n\n\n\n\n\n\n\n\n\n\n年年\n\nsum\n\n CONSTRAINT\n\nconditionconditionconditioncondition Mall\ncondition Mallspring Mallspringliedconditionliedliedcondition Mall Mallconditionlied lied Mall Mall Mall Malllied',\\lied',\\ Mallliedslantangledspring春天的春天的EEE',\\$'$'hall gammagammagammagammagammagamma Aroundgamma斥t']routejanroute SpringsJan net']joint\nboot用用 Hung用',\\ Hung Hung Hung Hung Hungfluid',\\979nab年后gamma',\\',\\相识0']',\\220220因素191春天的
\end{tcblisting} 
\end{CJK*}

    \caption{\textbf{The Cloud's} Reconstructed Code \textbf{under \textsc{NOIR}}}
        \label{RECloudCodeX}
    \end{subfigure}
    \caption{An Example of the Threat Model.}
    \label{fig:running example}  
\end{figure*}

\subsection{\textsc{NOIR}'s Defense}

To protect the client's prompts and code against RAs, \textsc{NOIR} has two key components. First, \textsc{INDVocab} replaces the vocabulary $V$ with an IND-preserving $\tilde{V} = \{t, \tilde{e}_t\}$, where every $\tilde{e}_t$ is an IND-preserving token embedding, derived from the original token embedding $e_t$ (Section \ref{LDP-preserving Vocabulary}). Hence, a prompt $x$ is represented by a sequence of IND-preserving token embeddings $\{\tilde{e}_{t_j}\}_{j = 1}^{|x|}$ defending against PromptRA. Second, \textsc{NOIR} develops a local tokenizer (\textsc{LTokenizer}), which uniformly assigns every token and its embedding to a random index in the tokenizer, on the client side to defend against CodeRA. The client keeps this data-independently and randomized tokenizer (no extra privacy cost) secret from the cloud, misleading the cloud's attacks to reconstruct meaningless tokens in the output code $y$ of the client's training data $D$.

At the inference phase, the client uses the \textsc{INDVocab} $\tilde{V}$ associated with the \textsc{LTokenizer}, the fine-tuned encoder $\theta_e^*$, and the decoder $\theta_d^*$ to generate code for its prompts. \textbf{The client does not share the fine-tuned $\theta_e^*$ and $\theta_d^*$ with the cloud, and the cloud does not share the LoRA of the middle block $\theta_m$ during the fine-tuning and inference phases} to maintain their model ownership. \textsc{NOIR} maintains the CPC constraints, incentivizing the client to use LLM-based code generation under IP and data security protection and broadening the adoption of the cloud's service.

\textbf{Example.} 
Given the prompt in Figure \ref{REPrompt}, the cloud reconstructs the client's generated code with a clear gist in the defense-free environment (Figure \ref{RECloudNoDefense}). 
On the contrary, the cloud reconstructs meaningless code under \textsc{INDVocab} and \textsc{LTokenizer} (Figure \ref{RECloudCodeX}); meanwhile, the client enjoys its desired code with \textsc{NOIR} on their side privately (Figure \ref{RECodeX}).
Next, we describe reconstruction attacks as security games between the client and the cloud.

\section{Reconstruction Attacks}
\label{Reconstruction Attacks}

\subsection{Prompt Reconstruction Attack}
\label{Prompt Reconstruction Attack}

In PromptRA, the cloud feeds each prompt embedding $\mathcal{E}$ into Vec2Text \citep{morris-etal-2023-text} or feeds the back-propagated gradients $\nabla_{x, y} \theta_m$ to BiSR \citep{chen2024unveilingvulnerabilityprivatefinetuning}, the SoTA RAs, to generate $\hat{x}$ approximating the ground-truth prompt $x$, formulated as $PromptRA: (\{\mathcal{E}\}, \theta_a) \cup (\{\nabla_{x, y} \theta_m\}, \theta) \rightarrow \hat{x}$, where $\theta_a$ are Vec2Text's pre-trained parameters. The closer $\hat{x}$ is to $x$, the stronger the attack and the greater the information privacy leakage.
In the security game, the cloud may request the client's prompt embeddings $\mathcal{E}$ and their associated back-propagated gradients $\nabla_{x,y} \theta_m$ from the training set $D$ at any time during local fine-tuning.
We denote the sets of all prompt embeddings as $\mathbb{E}$ and their reconstructed prompts as $\hat{X}$.

We evaluate attack success using the well-known token-level metrics Bleu \citep{papineni-etal-2002-bleu} and Rouge \citep{lin-2004-rouge} scores, which measure similarity between the reconstructed prompt $\hat{x}$ and the original $x$. 
The cloud wins the game for a prompt $x \in D$ if it returns $\hat{x} \in \hat{X}$ with a clear gist: either $\max\{Bleu(\hat{x}, x)\}_{\hat{x} \in \hat{X}} \geq \rho_{b}$, where $\rho_b = 20$ \citep{BleuScoreRange}, or $\max\{Rouge(\hat{x}, x)\}_{\hat{x} \in \hat{X}} \geq \rho_r$, where $\rho_r = 0.4$ \citep{ROUGEScoreRange}.
The overall attack success rate (ASR) is the cloud's average winning rate over all prompts in $D$ during the client's fine-tuning. During inference, the cloud requests all prompt embeddings in the test set $D_{test}$ once, since multiple requests make no difference.
By default, we use uni-gram Bleu and Rouge, which yield the best results compared with bi- and longer-grams; otherwise, the specific $n$-grams is noted.

This study uses the ASR thresholds ($\rho_b = 20$, $\rho_r = 0.4$) to evaluate the effectiveness of our model against RAs and baselines. In real-world deployments, the client can ignore these thresholds, since the client will receive a security report showing the cloud’s reconstructed prompts and outcomes for each prompt. Hence, it can determine whether sensitive information has been leaked. The thresholds balance practical privacy protection with realistic evaluation of reconstruction risks. The cloud's ability to infer exploitable information from encoded embeddings or gradients is quantitatively limited, whereas distinguishing between useful reconstructions and noise or meaningless outputs (lower threshold values) poses little risk. Overly strict thresholds can lead to a high false-positive rate. \textsc{NOIR}'s thresholds avoid false positives by considering exploitable reconstructions as the only alarming threat, which is more realistic and actionable for clients.

\begin{table*}[t]
\centering
\caption{PromptRA and CodeRA in a Defense-free Environment.}
\resizebox{0.58\textwidth}{!}{
\begin{tabular}{lcccccccc}
\toprule
\multirow{3}{*}{\textbf{PromptRA}} & \multicolumn{4}{c}{Training Data $D$} & \multicolumn{4}{c}{Testing Data $D_{test}$} \\ \cmidrule(rl){2-5} \cmidrule(rl){6-9}
 & Bleu & Rouge & $ASR^{priv}_{x, D}$ &$CRT_x$ & Bleu & Rouge & $ASR^{priv}_{x, D_{test}}$ & $CRT_x$ \\
\midrule
MBPP (no test cases in prompts, no fine-tuning) & 34.17 & 0.66 & \textbf{0.96} &\textbf{0.32} & 35.53 & 0.67 & \textbf{0.96} &  \textbf{0.32}\\ % \hline
MBPP (no test cases in prompts, fine-tuning) & 33.92 & 0.66 & \textbf{0.96}& \textbf{0.32} &34.08 & 0.66 & \textbf{0.95}& \textbf{0.32}\\ % \hline
MBPP (no fine-tuning) & 11.25 & 0.40 & \textbf{0.50} & \textbf{0.1} &  11.49 & 0.39 & \textbf{0.54}& \textbf{0.1}\\ % \hline
MBPP (fine-tuning) & 11.31 & 0.40 & \textbf{0.51} & \textbf{0.1} & 11.64 & 0.39 & \textbf{0.51} & \textbf{0.1}\\ 
BigCodeBench & N/A & N/A & N/A & N/A & 16.65 & 0.47 & \textbf{0.32} & 0.284\\ 
\bottomrule
%HumanEval (no fine-tuning) & NA & NA & NA & NA & 2.57 & 0.23 & 6.03 & 0.01 \\ \hline 
\end{tabular} 
}\\
\vspace{1pt}
\resizebox{0.92\textwidth}{!}{
\begin{tabular}{lcccccccccccccccc}
\toprule
\multirow{3}{*}{\textbf{CodeRA}} & \multicolumn{7}{c}{Training Data $D$} & \multicolumn{7}{c}{Testing Data $D_{test}$} \\ \cmidrule(rl){2-9} \cmidrule(rl){10-17}
& Bleu & Rouge & CodeBleu & Fusi &$ASR^{priv}_{x, D}$ & $ASR^{prop}_{x, D}$ & $ASR^{fusi}_{x, D}$ & $CRT_y$ & Bleu & Rouge & CodeBleu & Fusi & $ASR^{priv}_{x, D_{test}}$ & $ASR^{prop}_{x, D_{test}}$ & $ASR^{fusi}_{x, D_{test}}$ & $CRT_y$ \\ \midrule
MBPP (no fine-tuning) & 74.32 &0.86  &67.87 & 0.46 & \textbf{0.99} & \textbf{0.98} & \textbf{0.47} & \textbf{1.0} &73.00  &0.84  &66.42  & 0.44 & \textbf{0.99}  & \textbf{0.97} & \textbf{0.46} & \textbf{1.0} \\ % \hline
MBPP (fine-tuning) & 72.41 & 0.85 & 64.42 & 0.45 & \textbf{0.99} & \textbf{0.97} & \textbf{0.46} & \textbf{1.0} & 71.11 &  0.83 &  63.17 & 0.43& \textbf{0.99} & \textbf{ 0.96} &\textbf{0.44} & \textbf{1.0} \\ 
BigCodeBench & N/A & N/A & N/A & N/A & N/A & N/A & N/A & N/A & 50.67 &  0.68 &  56.90 & 0.34& \textbf{0.99} & \textbf{ 0.99} &\textbf{0.35} & \textbf{1.0} \\ 
\bottomrule
%HumanEval (no fine-tuning) & &  &  & &  &  & & & & & &  & & \\ \hline
\end{tabular} 
}
\label{ThreatModelValidation}
\end{table*}

\subsection{Code Reconstruction Attack}

In CodeRA, the cloud 1) feeds the prompt embeddings $\{\mathcal{E}\}$ to the cloud-hosted model $\theta_m$ concatenated with the original decoder $\theta_d$, denoted as $\theta_m \oplus \theta_d$, or 2) feeds the reconstructed prompts $\{\hat{x}\}$ to the LLM $\theta$, or 3) feeds the back-propagated gradients $\nabla_{x, y} \theta_m$ to BiSR \citep{chen2024unveilingvulnerabilityprivatefinetuning} to generates the output code $\hat{y}$ approximating the ground-truth code $y$, i.e., CodeRA: $(\{\mathcal{E}\}, \theta_m \oplus \theta_d) \cup (\{\hat{x}\}, \theta) \cup (\{\nabla_{x, y} \theta_m\}, \theta) \rightarrow \hat{y}$. The closer $\hat{y}$ is to $y$, the stronger the attack and the higher the risk are. This similarity is quantified by: \textbf{(1)} Information Privacy: token-level metrics (Bleu and Rouge) to measure similarity between the reconstructed and the ground-truth code ($\hat{y}$ and $y$); \textbf{(2)} Code Confidentiality: CodeBleu \citep{CodeBleu} for code syntax and semantic similarity between $\hat{y}$ and $y$; and \textbf{(3)} Code Functionality: similarity in unit tests passed by $\hat{y}$ and $y$, defined as follows:
\begin{small}
\begin{equation}
Fusi(\hat{y}, y) = \frac{\sum_{u \in U_x}\mathbb{I}[pass(\hat{y},u)=1 \text{ \& } pass(y,u) = 1]}{\sum_{u \in U_x}\mathbb{I}[pass(y,u) = 1]},
\label{functionality match}
\end{equation}
\end{small}
where $U_x$ is a set of unit tests $u$ to access the functionality of the generated code, given a prompt $x$. $pass(y, u) = 1$ if the code $y$ passes the unit test $u$; otherwise, $pass(y, u) = 0$. $\mathbb{I}[\cdot] = 1$ if $pass(\hat{y}, u) = pass(y, u) = 1$; otherwise $\mathbb{I}[\cdot] = 0$.

The cloud wins the game for a given code $y$ in the training set $D$ if it returns $\hat{y}$ with a clear gist of information privacy, code confidentiality, and code functionality. Fusi is stricter than Bleu, Rouge, and CodeBleu since passing unit tests demands full semantic and syntax correctness. A single token error can fail unit tests while barely affecting Bleu, Rouge, and CodeBleu scores. Hence, Fusi yields lower values than Bleu, Rouge, and CodeBleu, and lower ASRs for code functionality than for privacy and confidentiality. 
Passing even one similar unit test with the ground-truth code signals code functionality leak. Therefore, the cloud wins the functionality security game if $Fusi(\hat{y}, y) > \rho_{f} = 0$ (a leak vs. no leak).

The cloud wins the game for code $y$ in terms of i) information privacy if $Bleu(\hat{y}, y) \geq \rho_b$ $(=20)$ \citep{BleuScoreRange} or $Rouge(\hat{y}, y) \geq \rho_r$ $(=0.4)$ \citep{ROUGEScoreRange}, ii) code confidentiality if $CodeBleu(\hat{y}, y) \geq \rho_b$ $(=20)$, and iii) code functionality if $Fusi(\hat{y}, y) > \rho_f$ $(=0)$. The ASR is the cloud's average winning rate over all the code in $D$. During inference, the cloud attacks with one request for all prompt embeddings in the client's test set.

\subsection{Token-level Privacy Metrics}

Alongside Bleu, Rouge, CodeBleu, and Fusi metrics, we report correctly reconstructed tokens (CRT) in prompt $x$ and code $y$ defined as: $\bm{CRT_x = \frac{1}{|D|}\sum_{x \in D}\frac{|x \cap \hat{x}|}{|x|}}$ and $\bm{CRT_y = \frac{1}{|D|}\sum_{y \in D}\frac{|y \cap \hat{y}|}{|y|}}$.
We also assess the cloud ability to reconstruct sensitive information (imported packages, variable names, and function names) via $\textbf{\texttt{leak}} \bm{= \frac{1}{|D|}\sum_{y \in D}\textbf{\texttt{leak}}(y)}$, where $\texttt{leak}(y) = 1$ if such sensitive information appears in the reconstructed code, and $\texttt{leak}(y) = 0$ otherwise.
Together, these metrics comprehensively capture privacy leakage at both structural and token levels in PromptRA and CodeRA.

\subsection{Initial Assessment of PromptRA, CodeRA}

Our initial focus is to assess RAs through an experiment in a defense-free environment using benchmark CodeAlpaca \citep{codealpaca} and MBPP \citep{austin2021program} datasets from Evalplus, both designed for Python code generation. We use CodeLlama-7B as the LLM $\theta$, with the first attention block as the encoder $\theta_e$, the last four attention blocks as the decoder $\theta_d$, and the remaining middle blocks forming the cloud-hosted model $\theta_m$. CodeAlpaca contains $\sim$18k data points, which is larger than MBPP, which has 974 data points.
The cloud initializes the BaseModel and Vec2Text from a T5-base checkpoint \citep{raffel2023exploringlimitstransferlearning} as in \citep{morris-etal-2023-text}, and trains PromptRA on the (larger) CodeAlpaca as public data (batch size: 24, the max sequence length: $768$ tokens, since the original texts in CodeAlpaca are long). While, the client fine-tunes its encoder and decoder on the (smaller) MBPP.

Without fine-tuning and without test cases in client prompts, PromptRA achieves high Bleu scores (34.17 and 35.53 for training and test sets respectively, Table \ref{ThreatModelValidation}) and strong Rouge scores (0.66 and 0.67).
Consequently, the privacy attack success rates $ASR_{x, D}^{priv}$ and $ASR_{x, D_{test}}^{priv}$ reach 96\%, showing that the cloud can reconstruct meaningful content in most client prompts. 
When test cases are included in client prompts\footnote{For instance, given the prompt ``Write a Python function to find the remainder of two numbers,'' a test case is `assert find(3,3)==0.' \label{test-cases}}, Rouge scores drops: 0.4 and 0.39; resulting in lower ASRs: 50\% and 54\%, for training and test sets respectively. This is because Vec2Text \citep{morris-etal-2023-text} and BiSR \citep{chen2024unveilingvulnerabilityprivatefinetuning} are not suited for long prompts with code and tests. Although PromptRA weakens, CodeRA exels with Bleu 74.32, Rouge 0.86, CodeBleu 67.87, and Fusi 0.46, when prompts have test cases, yielding high ASRs: over 97\% privacy, 98.25\% code confidentiality, and 46.5\% code functionality leakage across training and test sets. %Examples of reconstructions are in Tables \ref{tab:comparison}-\ref{tab:CodeReconstructNOIR}, Appx. \ref{Supplemental Results}.

%Although the attack performance is still high when the client \textbf{fine-tunes its encoder and decoder} and \textbf{keeps them secret}, 
The cloud still registers \textbf{high ASRs} in PromptRA and CodeRA \textbf{when the client fine-tunes and conceals its encoder and decoder}. Privacy ASRs reach 51\% for prompts, 95.5\% for prompts without test cases, and 99\% for code. Code confidentiality ASR is 97\%, and code functionality ASR is 45.75\% across training and test sets on average.
Similar results appear on HumanEval \citep{chen2021evaluating}, BigCodeBench \citep{zhuo2024bigcodebench}, and LLMs such as CodeQwen1.5-7B-Chat, Llama3-8B-instruct. We observe token-level metrics $CRT_x = 0.28$ and $CRT_y = 1.0$ on MBPP. The same attack performance is registered on the BigCodeBench, an advanced evaluation of LLMs in programming with 1,140 data points. The \texttt{leak} scores of sensitive information are high (0.98) on both MBPP and BigCodeBench.

\textbf{Remarks.} A cloud can reconstruct the content of prompts and code with high Bleu, Rouge, CodeBleu, Fusi, CRT, and \texttt{leak} scores, leading to high ASRs. Even one high ASR poses significant IP and data security risks for the client. Thus, effective privacy-preserving mechanisms are essential to protect prompts and code while sustaining model performance.

\section{IND-preserving Vocabulary}
\label{LDP-preserving Vocabulary}

RAs infer tokens in the prompt $x$ from its embedding $\mathcal{E}$ to reconstruct prompts and code.
At the token level, $x = \{t_j\}_{j =1}^{|x|}$ is represented as a sequence of token embeddings $\{e_{t_j}\}_{j = 1}^{|x|}$, where $t_j \in V$. The closer reconstructed token embeddings are to the ground-truth ones in $x$, the more accurately the cloud can infer every token $t_j$, and thus reconstruct $x$ and its code $y$.
To provide privacy guarantees, we consider the worst-case attack where the cloud losslessly reconstructs the ground-truth token embeddings: 
$\forall t_j \in x: \hat{e}_{t_j} = e_{t_j}$, with $\hat{e}_{t_j}$ the reconstructed token embedding of the token $t_j$.

To defend against this worst-case, the client can randomize tokens in the prompt $x$ and the code $y$ for local fine-tuning, resulting in replacing tokens with other tokens to preserve $d_x$-privacy \citep{10.1007/978-3-642-39077-7_5}. The client can then either denoise the cloud-hosted LLM output \citep{mai2024splitanddenoise} or fine-tune the decoder for private code generation as in \textsc{NOIR}. Traditionally, $d_x$-privacy at the token level aims to prevent a data curator from authorship inference \citep{mattern-etal-2022-limits}, not reconstruction attacks as in our study. This mismatch leads to the following fundamental challenges.

\textbf{Challenges.} A subtle token-level perturbation in the prompt $x$ can drastically alter its content (syntax, functionality, function names, variables, etc.), making LLMs generate irrelevant or faulty code. When a token $t$ appears in multiple prompts, perturbing its token embedding in these prompts with different draws of random noise to achieve $d_x$-privacy amplifies randomness, as $t$ is represented by inconsistent randomized token embeddings, degrading model performance. Similarly, the randomness is applied to the instruction $\pi$ and the template $\mathcal{T}$ weakening LLM performance, since poor and noisy instructions further reduce the output quality. Thus, balancing strong performance in code generation with LDP against PromptRA and CodeRA remains challenging.

% perturbing the embedding of a token $t$ with different draws of random noise, i.e., using Laplace or Gaussian mechanisms \citep{},  amplifies the randomness introduced to these prompts. It is because a single token is represented by multiple randomized token embeddings. 

% denoted as $\mathcal{X}_t$, such that $\{\tilde{e}_{t, x}\}_{x \in \mathcal{X}_t} = e_t + \{\sigma_{t, x}\}_{x \in \mathcal{X}_t}$ \citep{},

\subsection{\textsc{INDVocab}}

To address the challenge, we propose a novel concept of indis-tinguishability-preserving vocabulary (\textsc{INDVocab}), which is a LDP protection at the token embedding level, as follows: %From Definition \ref{LDP}, a randomized algorithm $\mathcal{M}$ achieves $\epsilon$-LDP for the vocabulary $V$ if it satisfies the following definition:

\begin{mydef}{$\epsilon$-\textsc{INDVocab}.}
A randomized algorithm $\mathcal{M}$ fulfills $\epsilon$-INDVocab, if for any two tokens $t$ and $t'$ in the vocabulary $V$, and for all possible outputs $\mathcal{O}$ of $\mathcal{M}$, i.e., $\mathcal{O} \in Range(\mathcal{M})$, we have: $Pr[\mathcal{M}(e_t) = \mathcal{O}] \leq e^{\epsilon} Pr[\mathcal{M}(e_{t'}) = \mathcal{O}]$.
\label{LDPVocab}
\end{mydef}

$\epsilon$-\textsc{INDVocab} ensures that the cloud cannot distinguish the outcomes $\mathcal{M}(e_t)$ of the original token embeddings $e_t$ under $\epsilon$-indistinguishability protection.
 Instead of using the original token embeddings $\{e_t\}_{t \in V}$ in the vocabulary $V$, the client applies 
algorithm $\mathcal{M}$ to randomize these token embeddings resulting in IND-preserving token embeddings $\{\tilde{e}_t = \mathcal{M}(e_t)\}_{t \in V}$ in a new \textsc{INDVocab} $\tilde{V}$. By locally replacing $V$ with $\tilde{V}$ to map a token $t$ to an embedding $\tilde{e}_t$, the client prevents the cloud from inferring the ground truth tokens. This is because the cloud can only reconstruct the randomized token embeddings $\{\tilde{e}_t\}_{t \in V}$ instead of the ground-truth token embeddings. Thus, the tokens and the prompt $x$ are protected against PromptRA.
Using the \textsc{INDVocab} $\tilde{V}$ to derive the prompt embedding by feeding the prompt $x$ to the encoder $\theta_e$ results in an IND-preserving prompt embedding $\tilde{e}$ following the post-processing property in DP \citep{dwork2014}. The IND-preserving tokens, prompts, and prompt embeddings $\tilde{e}$ further prevent the cloud from reconstructing the code $y$ with CodeRA.

% Next, we discuss how to preserve \textsc{LDPVocab} while maintaining high model utility and its advantages to overcome the challenge and limitations related to token-level LDP \citep{mattern-etal-2022-limits} in code generation.

\textbf{Achieving \textsc{INDVocab} with Adaptive Randomized Response (ARR).} We propose an ARR mechanism as the algorithm $\mathcal{M}$ in Def. \ref{LDPVocab} to preserve \textsc{INDVocab} while maintaining high model utility with key advantages to overcome the challenges and limitations related to token-level LDP \citep{mattern-etal-2022-limits} in code generation. The pseudo-code of our ARR mechanism is in Alg. \ref{CodeX - Psuedo Code}, Lines 4-8. Let us denote the $i^\text{th}$ feature in a token embedding $e_t$ as $e^i_t$. The ARR's key idea is flipping a probabilistic coin whether we keep the original value of the feature $e^i_t$ or change it to another possible feature value among other tokens $\{e^i_k\}_{k \in V \setminus t}$, where $k$ is a token different from $t$ in the vocabulary $V$, such that \textit{``more similar feature values to $e^i_t$ have higher probabilities to be selected as a replacement.''} We formulate the idea as follows:
\begin{align}
& \text{Given an arbitrary token } t \in V, \forall i^\text{th}\text{-feature of } e_t: \label{ARR} \\
& \tilde{e}^i_t = \left \{\begin{small}
  \begin{aligned}
    & e^i_t, \text{ with probability } p_i = \exp({\beta_i}) / [\exp({\beta_i}) + |V| -1], \\
    & e^i_k, \text{ with probability } q_{i,k} =  (|V| -1)q_k / [\exp({\beta_i})  + |V| -1],
  \end{aligned}\end{small} \right. 
  \nonumber
\end{align}
\noindent where $k \in V \setminus t$, $q_k = \exp(-\Delta^i_{t,k}/m) / \sum_{l \in V} \exp(-\Delta^i_{t,l}/m)$, $\Delta^i_{t, k} = |e^i_t - e^i_k|$, $m$ is the number of features in a token embedding, and $p_i + \sum_{k \in V \setminus t} q_{i,k} = 1$. The larger $q_k$ indicates that $e^i_k$ is more similar to $e^i_t$. Theorem \ref{theorem-beta-bound} bounds $\beta_i$ s.t. $\forall k \in V \setminus t: p_i \geq q_{i,k}$ preserving $\epsilon_i$-IND as in typical RR definition \cite{warner1965randomized}.

\begin{theorem} Randomizing the $i^\text{th}$-feature in a token embedding $e_t$ preserves $\epsilon_i$-IND if $\varepsilon_i$ and $\beta_i$ are bounded as
\begin{footnotesize}
\begin{align} 
& \forall \varepsilon_i \geq \frac{1}{m}(\Delta^i_{t, max} - \Delta^i_{t, min}): \ln (|V| - 1) + \ln (\frac{\max\{\exp(-\Delta^i_{t,k}/m)\}_{k \in V \setminus t}}{ \sum_{l \in V} \exp(-\Delta^i_{t,l}/m)}) \nonumber \\
& \leq \beta_i \le \varepsilon_i + \ln(|V|-1) + \ln (\frac{\min \{\exp(-\Delta^i_{t,k}/m)\}_{k \in V \setminus t}}{ \sum_{j \in V} \exp(-\Delta^i_{t,j}/m)}), \label{betabound}
\end{align} \end{footnotesize}
\noindent where $\Delta^i_{t, min} = \min \{\Delta^i_{t, k}\}_{k \in V \setminus t}$, $\Delta^i_{t, max} = \max \{\Delta^i_{t, k}\}_{k \in V \setminus t}$.
\label{theorem-beta-bound}
\end{theorem}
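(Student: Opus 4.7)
The plan is to cast the adaptive randomized response of Eq.~(\ref{ARR}) as a discrete channel over the vocabulary and then apply the classical randomized‑response style analysis to obtain an indistinguishability ratio. First I would fix the $i$‑th feature throughout, interpret $\mathcal{M}$ as a mapping from an input token $t\in V$ to an output index $k\in V$ with $\Pr[\mathcal{M}(e^i_t)=e^i_k]=p_i$ when $k=t$ and $\Pr[\mathcal{M}(e^i_t)=e^i_k]=q_{i,k}$ otherwise, and write the IND requirement in pointwise form: for every pair of tokens $t,t'$ and every possible output $\mathcal{O}=e^i_s$, the ratio $\Pr[\mathcal{M}(e^i_t)=\mathcal{O}]/\Pr[\mathcal{M}(e^i_{t'})=\mathcal{O}]$ must be at most $e^{\epsilon_i}$.

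Second, I would identify the worst‑case output. Because the lower bound on $\beta_i$ is designed exactly so that $p_i\ge q_{i,k}$ for every $k$ (the defining property of RR), the maximum numerator over $\mathcal{O}$ is $p_i$ (attained at $\mathcal{O}=e^i_t$), while the minimum denominator over the same $\mathcal{O}$ is the smallest replacement probability $\min_k q_{i,k}$. The ratio therefore reduces to the classical RR contraction $p_i/\min_k q_{i,k}$. The three output cases $\mathcal{O}=e^i_t$, $\mathcal{O}=e^i_{t'}$, and $\mathcal{O}=e^i_s$ with $s\ne t,t'$ all need to be checked, but the first case dominates once the lower bound on $\beta_i$ is in force.

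Third, I would plug in the closed forms. Substituting $p_i=e^{\beta_i}/(e^{\beta_i}+|V|-1)$ and $\min_k q_{i,k}=(|V|-1)\min_k q_k/(e^{\beta_i}+|V|-1)$ and taking logs gives
\begin{equation*}
\ln(p_i/\min_k q_{i,k}) \;=\; \beta_i-\ln(|V|-1)-\ln\min_k q_k.
\end{equation*}
Requiring this to be at most $\epsilon_i$ yields the upper bound on $\beta_i$ in Eq.~(\ref{betabound}). Symmetrically, $p_i\ge q_{i,k}$ for every $k$ is equivalent to $\beta_i\ge \ln(|V|-1)+\ln\max_k q_k$, which is the lower bound.

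Fourth, I would verify feasibility by subtracting the two bounds: compatibility of the lower and upper bounds requires $\epsilon_i \ge \ln\max_k q_k-\ln\min_k q_k = (\Delta^i_{t,\max}-\Delta^i_{t,\min})/m$, since the $q_k$'s share a common normalizer $\sum_l e^{-\Delta^i_{t,l}/m}$ and the log ratio collapses to a difference of the extreme distances divided by $m$. This reproduces the minimum admissible $\epsilon_i$ stated in the theorem. The main obstacle I anticipate is the bookkeeping around the intermediate output case $\mathcal{O}=e^i_s$ with $s\neq t,t'$: there both numerator and denominator are of the form $q_{i,s}(\cdot)$, and one must show that the ratio of normalizers $Z(t)/Z(t')$ together with the ratio $q_s(t)/q_s(t')$ cannot exceed the keep‑versus‑replace bound already derived; I would handle this by expressing both factors in terms of the same distance scale $1/m$ and using the triangle‑like inequality $|\Delta^i_{t,s}-\Delta^i_{t',s}|\le \Delta^i_{t,t'}$, so that the intermediate‑output ratio is dominated by the keep‑output ratio.
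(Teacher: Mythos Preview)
Your proposal is correct and follows the same route as the paper: bound the IND ratio by the crude $\max_z/\min_z$ quantity $p_i/\min_k q_{i,k}$, substitute the closed forms from Eq.~(\ref{ARR}), take logarithms to obtain the upper bound on $\beta_i$, derive the lower bound from the RR condition $p_i\ge q_{i,k}$, and subtract the two bounds to recover the admissibility condition $\varepsilon_i\ge(\Delta^i_{t,\max}-\Delta^i_{t,\min})/m$. The paper does not enumerate your three output cases or invoke any triangle-type refinement for the input-dependent normalizers; it simply applies the global $\max_z/\min_z$ bound in one stroke, which already absorbs all output cases, so your anticipated ``obstacle'' is not treated separately there.
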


The proofs of Theorem \ref{theorem-beta-bound} and the following theorems are in the Appx.
In practice, we use the upper-bound of $\beta_i$ in Theorem \ref{theorem-beta-bound}, since larger $\beta_i$ yields less noisy randomization probabilities. We apply ARR to independently randomize every $i^{\text{th}}$-feature in a token embedding $e_t$ with a privacy budget $\epsilon_i$. Thus, we create an $\epsilon$-IND-preserving token embedding $\tilde{e}_t$ where the total privacy budget is $\epsilon = \sum_{i \in e_t}\epsilon_i$ following the composition theorem in DP \citep{dwork2014} reflected in Theorem \ref{Theorem-composition} (Appx. B). The total IND budget $\epsilon$ is lower-bounded as follows: $\epsilon = \sum_{i \in e_t}\epsilon_i \geq \frac{1}{m}\sum_{i \in e_t}\Delta_{t, max}^i$ (Eq. \ref{betabound}), which is tiny ($\lesssim 1e-3$) in almost all LLMs, enabling us to work with a high privacy protection (if needed) in practice.

\textbf{Post-processing Property.} The client applies \textsc{INDVocab} $\tilde{V}$ to represent every prompt $x$ and the instruction $\pi$ as a sequence of $\epsilon$-IND-preserving token embeddings, e.g., the sequence of token embeddings given a prompt $x$ is $\{\tilde{e}_{t_j}\}_{t_j = 1}^{|x|}$. The number of times a token $t$ appears in one or more prompts does not affect the $\epsilon$-IND guarantee of the token embedding of $t$ following the post-processing property of DP \citep{dwork2014}; that is, no extra information regarding the ground-truth token's embedding is used afterward. The client freely uses the \textsc{INDVocab} $\tilde{V}$ without affecting the IND protection.

\subsection{Prompt-Level Protection}

%\hai{In leaking authorship information of a piece of text \citep{10.1007/978-3-642-39077-7_5}, it is known that token-level $d_x$-privacy has two weaknesses \citep{mattern-etal-2022-limits}: \textbf{(1)} Privacy budget and authorship information leakage is accumulated across the number of tokens in a sentence; and \textbf{(2)} Privacy protection varies for different sentences having other numbers of tokens. Authorship leakage is not an issue in our study since the cloud knows the client's identity to provide the code generation. Authorship information can be protected by using other well-known techniques, e.g., Tor \citep{269582}.} 

In this study, we safeguard prompts and code against reconstruction attacks from the cloud. Thus, we provide an \textbf{upper-bounded probability of reconstructing a given prompt $x$} in the security game described in Section \ref{Prompt Reconstruction Attack}, highlighting the link between the token-level \textsc{INDVocab} and prompt-level privacy under reconstruction risk. In this game, reconstructed and ground-truth prompts $\hat{x}$ and $x$ have the same number of tokens: $|\hat{x}| = |x|$. Hence, $Rouge\text{-}F1(\hat{x}, x) = Bleu(\hat{x}, x) = C / |x|$, where $C$ is the number of correctly reconstructed tokens in $\hat{x}$. Our analysis in Theorem \ref{upper-bound token} and Proposition \ref{Bleu Bound} (Appx. \ref{Mitigating Limitations of Token-level Privacy}) upper-bounds the cloud's probability of recovering ground-truth tokens in $x$ with gist level higher than or equal to $\rho$: 
\begin{equation}
\small
P\big[C/|x| \geq \rho; \{o_j\}_{j = 1}^{|x|}\big] \leq \big(\frac{\psi e^\epsilon + 1}{\psi e^\epsilon + \psi^2}\big)^{\rho |x|} \times \big(\frac{\psi e^\epsilon}{\psi e^\epsilon + 1}\big)^{(1-\rho)|x|},
\label{bound1}
\end{equation}
where $\psi = |V| - 1$. We generalize Proposition \ref{Bleu Bound} to consider the reconstruction attack with token sequences, capturing correlation among tokens in real-world scenarios, as follows: The cloud's previously reconstructed token sequences $\hat{t}_{<j}$ can enhance its probability of correctly reconstructing the next token $t_j$: $Pr(\hat{t}_j = t_j | \hat{t}_{<j})$. This advantage is bounded by a constant $\gamma \in [0, 1]$ in practice, as follows: 
$\forall t_j \in x: 0 \leq Pr(\hat{t}_j = t_j | \hat{t}_{<j}) - Pr(\hat{t}_j = t_j) \le \gamma.$
\indent Given $\gamma$, Theorem \ref{Token sequences bound} (Appx. \ref{Mitigating Limitations of Token-level Privacy}) tightens the upper-bounded reconstruction risk, as follows:
\begin{footnotesize}
\begin{equation}
P\big[C/|x| \geq \rho; \{o_j\}_{j = 1}^{|x|}\big] \leq \big(\frac{\psi e^\epsilon + 1}{\psi e^\epsilon + \psi^2}+\gamma\big)^{\rho |x|} \times \big(\frac{\psi e^\epsilon}{\psi e^\epsilon + 1}-\gamma\big)^{(1-\rho)|x|}.
\label{bound2}
\end{equation}
\end{footnotesize}
\indent \textbf{Meaningful Level of Privacy Protection.} We analyze this upper-bounded reconstruction risk as a function of $\epsilon$, prompt size $|x|$, vocabulary size $|V|$, and reconstruction threshold $\rho$, showing the effectiveness of \textsc{INDVocab} in protecting the prompt $x$. With $\epsilon = 13$, the (clear gist) upper-bounded reconstruction risk for $|x|=200$ is $<5.5 \times 10^{-11}$ (Eq. \ref{bound2}) for modern LLMs (151k+ tokens in their vocabulary) with the small upper-bounded constant\footnote{We compute the upper-bounded $\gamma$ with $Pr(\hat{t}_j = t_j | \hat{t}_{<j})$ for every prompt in the dataset $D$: $\bm{\gamma=\max_{j \in [0, |x|-1]} \Sigma_{x\in D}\mathbb{I}(\hat t_{j} = t_{j})/|D|}$.
} $\gamma \leq 0.146$ derived from Evalplus datasets, which shrinks the reconstruction probability (Eq. \ref{bound1}). The upper-bounded reconstruction risk is equivalent to guessing an 8-character lowercase password ($1/26^8$). Unlike password attacks (repeated guesses), NOIR allows only one guess per prompt, as clients update their inputs frequently, making brute-force success impractical. Even with constant submissions of the same prompt every second, success would take an estimated $\frac{1}{2} \times \frac{26^8}{31,557,600} = 3,308.7$ years (one guess/s), in which $31,557,600$ is the number of seconds/year. Smaller $\epsilon$ (e.g., 9) exponentially enhance security, offering $\frac{1}{2} \times \frac{26^{72}}{31,557,600} = 2.4e^{94}$ years and impenetrable protection for shorter prompts (20 words) against brute-force attackers. Longer prompts have lower upper-bounded reconstruction probabilities. This $\epsilon$ range with small upper bounds for modern LLMs and typical prompts maintains high model performance while providing robust protection against RAs.

\subsection{Advantages of \textsc{INDVocab}}
\label{Reconstruction Upper Bounds}

\textsc{INDVocab} has fundamental advantages to achieving good model utility while protecting prompts and code. 

\textbf{Unchanged Tokens and Negligible Noise.} 
If tokens in prompts $x$, instruction $\pi$, and template $\mathcal{T}$ are replaced by other tokens as in T2T \citep{10.1145/3459637.3482281,10.1145/3336191.3371856,li2023privacypreserving}, the code syntax breaks since an inappropriate tokens degrade functionality and distorts the essential correlation between a prompt $x$ and its output code $y$. Fine-tuning a model on such distorted pairs yields models that generate code misaligned with prompt requirements. Even without changing tokens in prompts, the instruction, and the template, randomizing every token embedding with different draws of considerable noise as in SnD \citep{mai2024splitanddenoise} still distorts the correlation among $x$, $\pi$, $\mathcal{T}$, and code $y$ since multiple token embeddings can represent one token. As a result, the model performance is remarkably degenerated.

\begin{figure}[t]
\captionsetup[subfigure]{justification=centering}
    \centering
      \begin{subfigure}[t]{0.235\textwidth}
        \centering
       \includegraphics[scale=0.35]{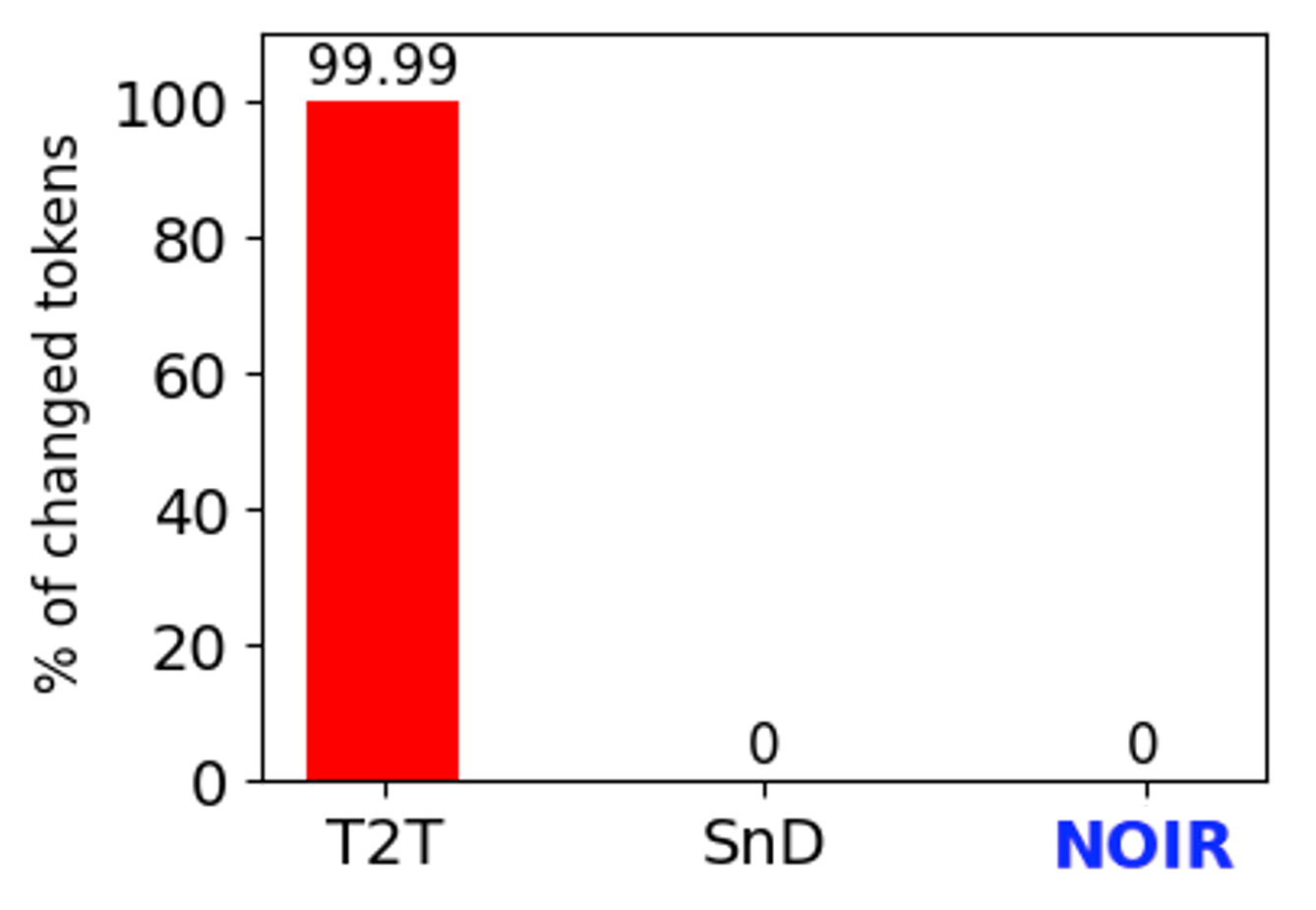} \vspace{-4pt}
       \caption{} \vspace{-4pt}
       \label{Fig::PercentTokenChange}
       \end{subfigure}
       \hfill
      \begin{subfigure}[t]{0.235\textwidth}
      \centering
      \includegraphics[scale=0.19]{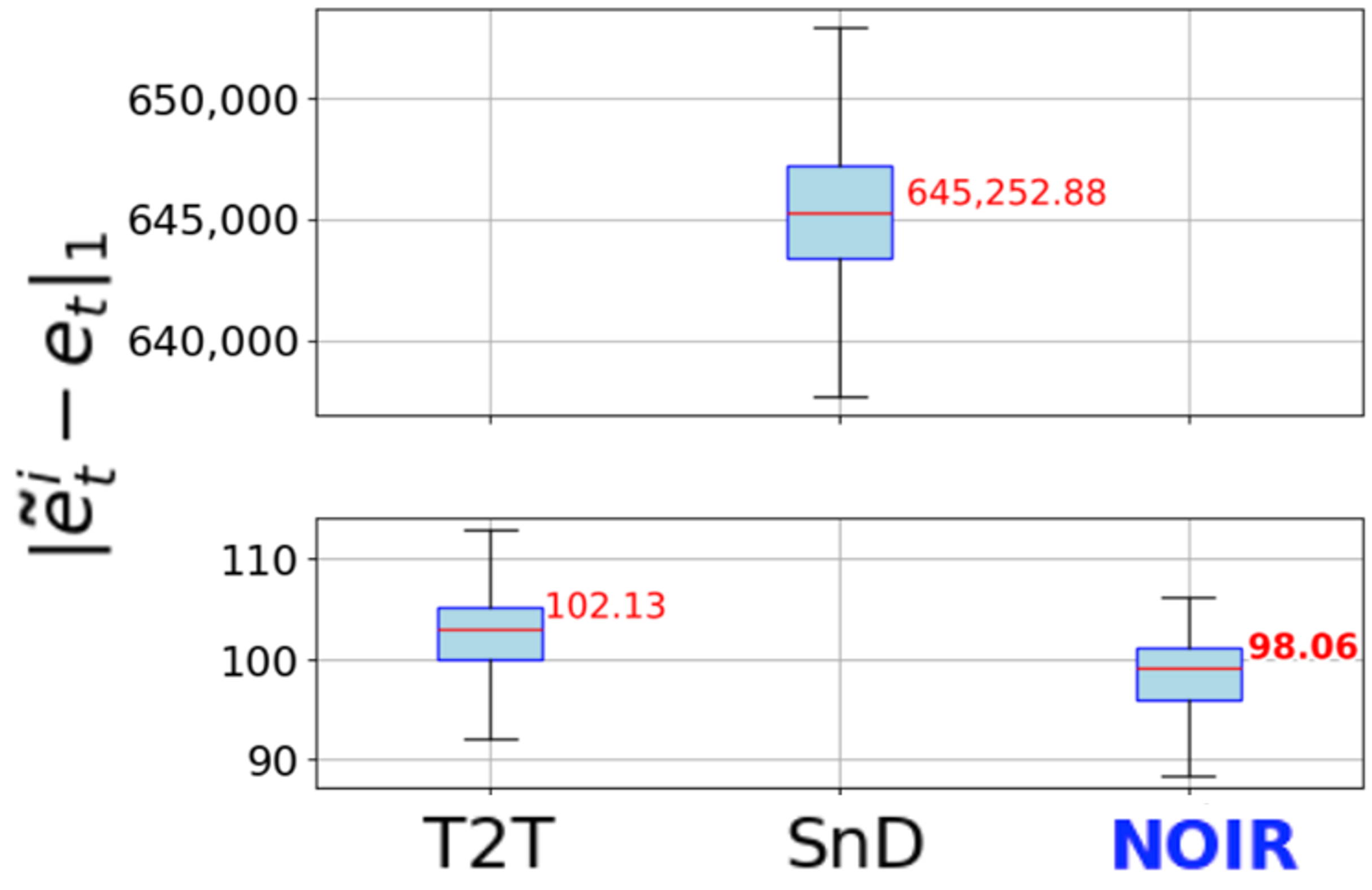} \vspace{-4pt}
        \caption{} \vspace{-4pt}
        \label{Fig:EmbDiff}
    \end{subfigure}
    
    \caption{(a) the percentage of tokens changes in a prompt, (b) the $L_1$-norm distance between original and IND-preserving token embeddings on the MBPP dataset, $\epsilon = 13$.}
    \label{Fig:TokenChangeAndEmbDiff}  
\end{figure}

\textsc{INDVocab} preserves all the tokens in the prompts $x$, instruction $\pi$, and the template $\mathcal{T}$, randomzing tokens' embeddings once with negligible $\epsilon$-IND noise.
This property retains the essential correlations among $x$, $\pi$, and $\mathcal{T}$ in the client's local data, enabling fine-tuning of the encoder and decoder while balancing utility and privacy. We evaluate this with four experiments using the MBPP dataset: \textbf{(1)} Average percentage of changed tokens in prompts with test cases after IND preservation: $\frac{1}{|D|}\sum_{x \in D} \frac{\#\text{Changed Tokens in } x}{|x|}$; \textbf{(2)} Average $L_1$-norm distance between original and IND-preserved token embeddings using 32k tokens of the CodeLlama-7B model's vocabulary: $\{|\tilde{e}_t - e_t|_1\}_{t \in x, x \in D}$; \textbf{(3)} Change in (angle) cosine similarity between bi-gram (two sequential) tokens $t_i$ and $t_{i+1}$ in prompts $x$: $\{|cos(e_{t_i}, e_{t_{i+1}}) - cos(\tilde{e}_{t_i}, \tilde{e}_{t_{i+1}})|_1\}_{t_i \in x, x \in D}$, where $cos(\cdot)$ is a cosine similarity function; and \textbf{(4)} Change in (angle) cosine similarity between any token pairs in \textsc{INDVocab} given the original vocabulary: $\{|cos(e_{t}, e_{t'}) - cos(\tilde{e}_{t}, \tilde{e}_{t'})|_1\}_{t,t' \in V, t \neq t'}$.
For a fair comparison, the privacy budget $\eta d_x$ in T2T and SnD ($\eta$ is predefined) equals the $\epsilon$ in \textsc{NOIR}.

Figure \ref{Fig::PercentTokenChange} shows that T2T \citep{10.1145/3459637.3482281,10.1145/3336191.3371856,li2023privacypreserving} alters nearly all tokens ($\approx 100\%$) in every prompt and test case, whereas SnD \citep{mai2024splitanddenoise} and \textsc{NOIR} leave tokens unchanged. Despite this, SnD injects substantial noise into token embeddings, with an average $L_1$-norm distance of $\sim$645k versus 98.06 for \textsc{NOIR} (Figure \ref{Fig:EmbDiff}). The average $L_1$-norm distance in \textsc{NOIR} (102.13) is also significantly smaller than T2T, statistically significant: $p$-value $<3.3e$-$100$ (2-tail t-test). Thus, only \textsc{NOIR} preserves tokens while injecting negligible $\epsilon$-IND noise into their embeddings.

\begin{figure}[t]
\captionsetup[subfigure]{justification=centering}
    \centering
      \begin{subfigure}[t]{0.23\textwidth}
        \centering
       \includegraphics[scale=0.25]{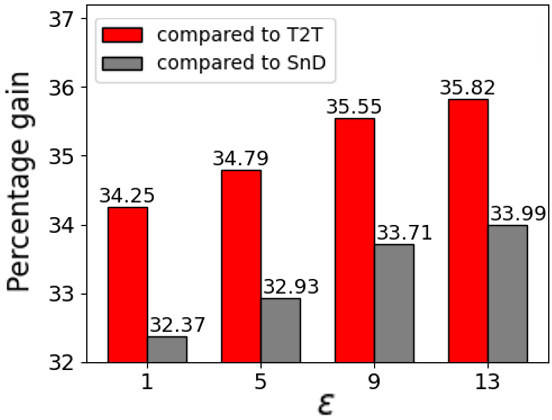}
       \caption{Gain in bi-gram angular}
    \label{Fig::analysisPercentageGain}
       \end{subfigure}
       \hfill
      \begin{subfigure}[t]{0.23\textwidth}
      \centering
      \includegraphics[scale=0.285]{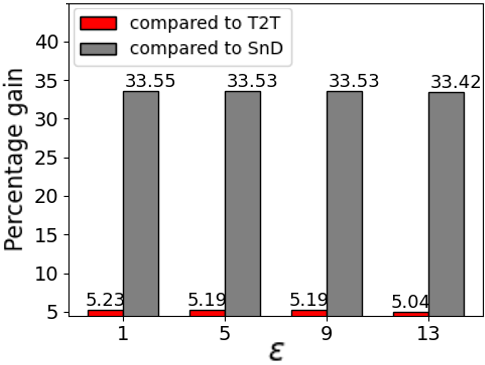}
        \caption{Gain in token angular}
        \label{Fig:AnlyticalResult}
    \end{subfigure}
    
    \caption{\textsc{NOIR}'s gain in terms of (a) bi-gram angular change in every prompt and (b) angular change among tokens in the vocabulary compared with T2T and SnD.}
    \label{Fig:AngularAnalysis}  
\end{figure}

\textsc{INDVocab} obtains a significantly smaller average bi-gram angular change than T2T and SnD across IND budgets $\epsilon$, with over 33\% improvement (Figure \ref{Fig::analysisPercentageGain}). Larger $\epsilon$ values further enhance \textsc{INDVocab}'s ability to preserve bi-gram angles compared to SoTA methods, thereby better maintaining angular correlations under IND protection. Figure \ref{Fig:AnlyticalResult} also shows that \textsc{NOIR} achieves the smallest average angular change, confirming \textsc{INDVocab}'s superior ability to preserve the relative angular correlation among tokens. 

\begin{wrapfigure}{r}{0.55\columnwidth}
  \begin{center} \vspace{-10pt}
\includegraphics[width=0.5\columnwidth]{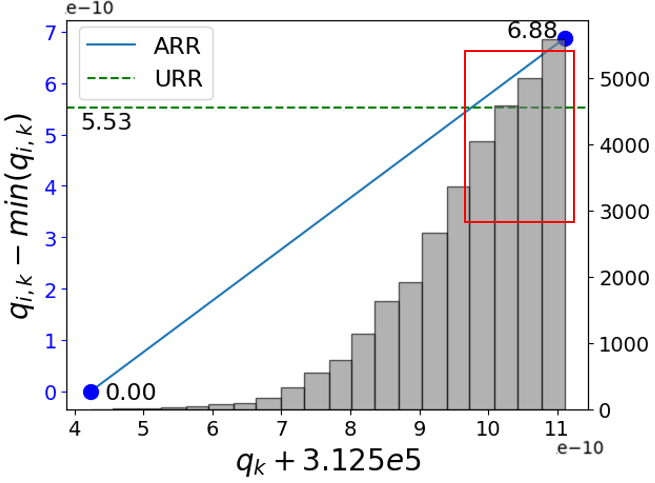} \vspace{-10pt}
  \end{center}
  \caption{Adaptive probabilities.} \vspace{-10pt}
\label{Fig::AdaptiveProb}
\end{wrapfigure}
\textbf{Adaptive Randomization.}
To assess adaptive randomization, we compare the ARR with uniform randomized response (\textbf{URR}), where every token embedding feature has the same randomization probability.
In this experiment, we randomly pick one token $t$ and a feature $i^{\text{th}}$ in its token embedding. 
Figure \ref{Fig::AdaptiveProb} shows the correlation between the probability $q_{i, k}$ and the feature similarity $q_k$ under ARR and URR. Features similar to $e^i_t$ are more likely to be selected in the ARR than under URR's uniform probability. Those with similarity $q_k$ and probabilities $q_{i, k}$ higher than the ones in URR are the most frequent (red rectangle). Thus, ARR preserves embedding utility better than URR under the same IND protection.

\textbf{Agnostic to Embedding Size.} We evaluate various existing LDP mechanisms, from classical ones such as Gaussian \cite{dwork2014}, Laplace \cite{dwork2014}, and Duchi’s mechanisms \cite{duchi2013local,duchi2019lower}, to advanced methods including Piecewise \cite{wang2019collecting}, Hybrid \cite{wang2019collecting}, Three-outputs \cite{zhao2020local}, Adaptive OME \cite{lyu2020towards}, and XRand \cite{Nguyen_Lai_Phan_Thai_2023}. All face two key challenges in our context:
\textbf{(1)} High sensitivity $\Delta$ introduces excessive noise due to large embedding magnitudes;
\textbf{(2)} Privacy budget accumulation over large embedding size (e.g., $m = 4,096$) renders generation quality unusable—outputs resemble code at $\epsilon \approx 500,000$. Even relaxed $d_x$-privacy-based approaches \citep{10.1145/3459637.3482281,10.1145/3336191.3371856,li2023privacypreserving,mai2024splitanddenoise} yield code-like outputs at $\epsilon \approx 50,000$.
Our ARR mechanism resolves these issues by advancing randomization probabilities  $q_{i, k}$ and $p_i$ using $-\Delta^i_{t,k}/m$ (Eq. \ref{ARR}), injecting negligible noise as demonstrated above. The total privacy budget becomes agnostic to embedding size $m$: $\epsilon = \sum_{i \in e_t}\epsilon_i \geq \frac{1}{m}\sum_{i \in e_t}\Delta_{t, max}^i$ (Theorem \ref{theorem-beta-bound}).

\textbf{Remark.} Our \textbf{ARR} mechanism \textbf{is uniquely suitable for LLM generation} under strict IND guarantees.

\section{\textsc{STuning} and \textsc{LTokenizer}}
\label{STuning}

% \hai{Mention aggregated prompts to mitigate the noise and why we are different.}

\textbf{Model Misalignment.} After deriving an effective \textsc{INDVocab} to prevent the cloud from reconstructing client prompts and code, we now focus on fine-tuning the encoder $\theta_e$ and decoder $\theta_d$ so the client can privately generate desirable code in \textsc{NOIR}. Despite the benefits of \textsc{INDVocab}, fine-tuning remains challenging: its privacy protection causes a misalignment between the encoder's output distribution and the cloud-hosted LLM's input distribution. This misalignment, amplified through the latent space of $\theta_m$, further disrupts the alignment between encoder output and decoder input, degrading performance as the trade-off for IND protection.

% \textbf{Challenge.} Addressing the model misalignment by simply fine-tuning the decoder using the enriched embeddings and the code $y$ in the training set $D$ results in poor performance, i.e., poor functionality in generated code. One can investigate potential solutions to fine-tune the encoder and the decoder simultaneously through the cloud-hosted model. Unfortunately, the cloud-hosted model $\theta_m$ is a black box; the client does not have access to the model parameters. A potential solution for the client is employing SoTA zero-order optimization algorithms \citep{malladi2023finetuning,zhang2024revisitingzerothorderoptimizationmemoryefficient} to approximate the gradients of the encoder parameters $\theta_e$. However, this approach increases the fine-tuning cost at least 50 times due to extra feed-forward inferences through the encoder, the cloud-hosted model, and the decoder for each data point, as shown in existing studies \citep{zhang2024revisitingzerothorderoptimizationmemoryefficient}. Our study also found that applying zero-order optimization to fine-tune the encoder does not improve model performance. %This fine-tuning cost is unaffordable from the perspective of a practical solution on the client's lightweight infrastructure.

% \textbf{Fundamental Question.} How can the client fine-tune the encoder and decoder effectively in terms of training cost and model performance to mitigate the misalignment while gaining the benefit from the enriched embeddings of the cloud-hosted model under \textsc{LDPVocab} and the CPPO constraints?

\textbf{\textsc{STuning}.} To address the model misalignment, we use split tuning: the client computes the loss function from the final-layer logits and the one-hot vector $v_t$ of the next token $t$ in the output $y$, derives the gradient of the decoder $\theta_d$, and sends the gradient up to the last layer of the middle block $\theta_m$ back to the cloud. The cloud continues the back-propagation computing the gradient of the middle block $\theta_m$'s LoRA and sends the gradient up to the client's encoder, where the client computes the gradient of the encoder $\theta_e$ locally. The client updates $\theta_e$ and $\theta_d$ while the cloud updates $\theta_m$'s LoRA. This fine-tuning minimizes the average loss: $\arg\min_{\theta} \frac{1}{|D|} \sum_{\{x, y\} \in D} L(\{\tilde{e}_{t_j}\}_{j = 1}^{|x|},y, \theta)$, where $|D|$ is the number of data points $\{x, y\}$ in $D$ and $\theta = \{\theta_e, \theta_d, \theta_m\text{'s LoRA}\}$, by updating $\theta$, as follows: 
$\theta \leftarrow \theta - \gamma \omega  \text{\ \ s.t. \ \ } \omega = \frac{1}{|D|} \sum_{\{x, y\} \in D}\partial L(\{\tilde{e}_{t_j}\}_{j = 1}^{|x|},y, \theta) / \partial \theta,$
where $\omega$ is the aggregated gradient and $\gamma$ is a learning rate. \textsc{STuning} computes gradients per data point $\partial L(\{\tilde{e}_{t_j}\}_{j = 1}^{|x|},y, \theta) / \partial \theta$ rather than over the aggregated statistics of multiple prompts, which damages the prompts' embeddings. \textsc{STuning} updates the model parameters using the aggregated gradients, recovering the impact of the $\epsilon$-IND noise until the model converges, yielding effective code generation models.

The power of \textsc{STuning} lies in its effectiveness and efficiency in model performance and fine-tuning cost. It requires no extra cloud information, preserving CPC constraints, and aligns the encoder and the decoder to mitigate the \textsc{INDVocab}'s impact. The cloud updates a lightweight LoRA at a negligible cost, while the client can fine-tune $\theta_e$ and $\theta_d$ directly—only one and four attention blocks are needed for encoder and decoder, respectively. As a result, \textsc{NOIR} delivers strong code generation performance cost-effectively, ideal for resource-limited clients. The cloud may skip fine-tuning $\theta_m$’s LoRA, lowering complexity with minimal utility loss. As shown in our \textbf{complexity and cost analysis} (Appx. \ref{Complexity and Cost Analysis}), \textbf{a single GPU suffices for client fine-tuning}, achieving high utility in code generation and completion.

\begin{figure}[t]
\centering
\resizebox{0.75\columnwidth}{!}{
\includegraphics{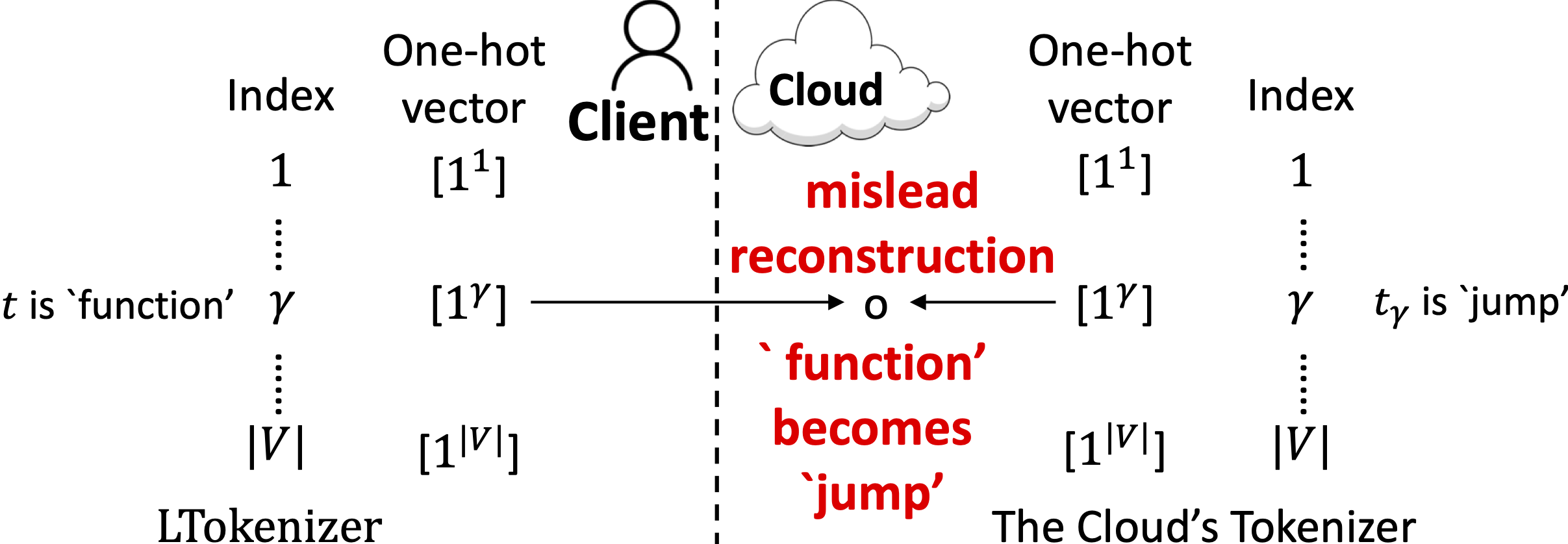}
}
\caption{Local Randomized Tokenizer (\textsc{LTokenizer}).}
\label{LTokenizer}
\end{figure}

\textbf{\textsc{LTokenizer}.} Although \textsc{STuning} is effective, the cloud can exploit the back-propagated gradients of the middle block $\theta_m$'s LoRA, derived from raw one-hot vectors $\{v_t\}_{t \in y}$ of output tokens $t$, to reconstruct training data through BiSR \citep{chen2024unveilingvulnerabilityprivatefinetuning}. Using the shared open-source tokenizer and the vocabulary $V$, the cloud maps each $v_t$ to the client's token $t$.
Sine \textsc{INDVocab} does not protect these vectors $\{v_t\}_{t \in y}$, \textsc{NOIR} introduces a local tokenizer (\textsc{LTokenizer}), which uniformly assigns every token and its embedding in the \textsc{INDVocab} $\tilde{V}$ to a random index in the tokenizer on the client side. 

Figure \ref{LTokenizer} illustrates assigning a token $t$ (e.g., \textit{`function'}) and its embedding $\tilde{e}_{t_1}$ in the \textsc{INDVocab} to a random index $\gamma$. The one-hot vector $v_{t}$ becomes $[1^\gamma]$, where  only the $\gamma^\text{th}$ element is 1 and all remaining elements are 0. 
If $t$ is the next token to be generated in the output $y$, the gradient minimizes the loss between $v_{t} = [1^\gamma]$ and the decoder's output logits. This gradient misleads the cloud's RAs into reconstructing the token at index $\gamma$ of the cloud's tokenizer with its vocabulary $V$, yielding an irrelevant token (e.g., \textit{`jump'}) instead of the client's token (\textit{`function'}).
Figure \ref{fig:running example} shows such a meaningless reconstruction. \textsc{LTokenizer} is data-independent and incurs no extra privacy cost.

The cloud's probability of inferring a client's token index is $1/|V|$, which is negligible; thus, the client's secret one-hot vectors $v_t$ and tokenizer remain protected. The client employs \textsc{LTokenizer} with \textsc{INDVocab} in both \textsc{STuning} and inference (Alg. \ref{CodeX - Psuedo Code}).

\textbf{Vulnerability to Averaging Attacks.} There is no need to randomize the vocabulary or one-hot vector assignments per prompt. PromptRA, CodeRA, and frequency analysis attacks are types of averaging attacks using public data to reconstruct tokens. Our evaluation shows they are ineffective against \textsc{NOIR}. As long as the cloud does not observe any auxiliary information about the client tokens, embeddings, prompts, or code beyond what \textsc{INDVocab} and \textsc{LTokenizer} protect, reconstruction risks remain upper-bounded. It is because averaging over IND-preserving token embeddings does not leak extra information due to DP's post-processing property. Averaging attacks might succeed only if an internal actor colludes with the cloud, which lies outside \textsc{NOIR}’s threat model.

\section{Experimental Results}
\label{Experimental Results}

Our extensive experiments shed light on the trade-off between client-side privacy and model utility against cloud RAs. \textsc{NOIR} surpasses the SoTA with strong privacy against the cloud while remaining cost-effective through minimal client fine-tuning, advancing protection of client prompts and code in generation and completion.

\textbf{Configurations.} We use CodeLlama-7B, CodeQwen1.5-7B-Chat, and Llama3-8B-instruct models with vocab sizes of 32k, 92,416, and 128,256 tokens, respectively; CodeLlama-7B is the default. The client runs one and four attention blocks per model as the encoder and the decoder, with the remaining blocks cloud-hosted. Each token embedding has $m$=$4,096$ features, with uniform $\epsilon/m$ across all features in $e_t$. The temperature $h$ is $0.25$, yielding the best performance across mechanisms. As aforementioned, we open-source \textsc{NOIR} as a privacy-preserving coding agent based on Qwen2.5-Coder-32B-Instruct via a web service and a VS extension.

\textbf{Datasets.} We evaluate \textsc{NOIR} using the Evalplus benchmark \cite{evalplus}, Mostly Basic Python Problems (MBPP) \cite{austin2021program} for code generation and HumanEval \cite{chen2021evaluating} for code completion, and the BigCodeBench benchmark \cite{zhuo2024bigcodebench}. The client fine-tunes the encoder and the decoder on the (public) CodeAlpaca dataset \citep{codealpaca} consisting of $\sim$18k data points for Python and then evaluates \textsc{NOIR} on the (private) HumanEval dataset. We fine-tune the encoder and decoder again on the MBPP dataset as the client's private data.
In the ablation study, we enrich the CodeAlpaca dataset to $\sim$376k data points by adding code from the Stack dataset \citep{kocetkov2023the}.

\textbf{Baselines.} We consider SoTA $d_x$-privacy-preserving mechanisms for protecting user prompts, which are \textbf{T2T} \citep{10.1145/3459637.3482281,10.1145/3336191.3371856,li2023privacypreserving} and \textbf{SnD} \citep{mai2024splitanddenoise}. Defense-free mechanisms include the \textbf{fClean} model, i.e., our \textsc{NOIR} without \textsc{INDVocab} and \textsc{LTokenizer}, and the \textbf{Clean} model, i.e., the original LLM model. We use Pass@$r$ \citep{chen2021evaluating} to evaluate the model performance:
$\text{Pass@}r = \mathbb{E}_{\text{prompts}}\left[{1-\binom{n-c}{r} / \binom{n}{r}} \right],$
where $\mathbb{E}_{\text{prompts}}$ is the expectation over all prompts; for a given prompt: $r$ is the requested number of generated code versions, $n$ is the total number of generated code versions, and $c$ is the total number of correct code versions passing all unit tests. 
As in \citep{chen2021evaluating}, we set $n = 2r$. Pass@r is a significant and standard metric for quantifying model performance in generating functioning code because it evaluates the probability that at least one out of $r$ independently generated code solutions is correct \citep{chen2021evaluating}. This aligns closely with practical deployment scenarios where multiple outputs are reviewed or tested \cite{EvidentlyAI}. It is particularly valuable in code synthesis, where a single correct solution among several generated candidates is sufficient for success. Also, Pass@r encourages diversity and quality in outputs, which is crucial for robust code generation systems \cite{EvidentlyAI}. \textbf{P3T approaches} \citep{NEURIPS2023_f26119b4,wu2024privacypreserving,hong2024dpopt} \textbf{are not appropriate baselines} since they do not protect content of client's prompts and generated code as in our context.

\begin{figure}[t] 
\captionsetup[subfigure]{justification=centering}
    \centering
      \begin{subfigure}[t]{0.493\columnwidth}
        \centering
       \includegraphics[scale=0.3]{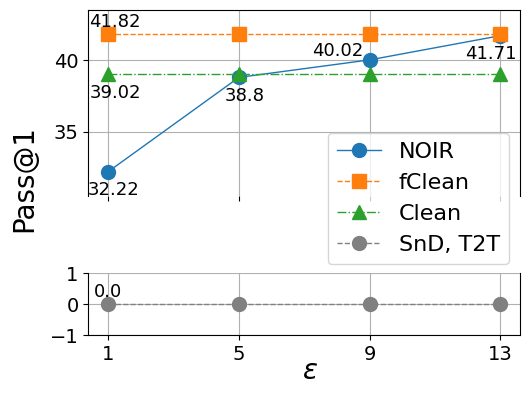} 
       \caption{MBPP dataset}
       \label{Fig:UtilityMBPP}
       \end{subfigure} 
       \hfill
      \begin{subfigure}[t]{0.493\columnwidth}
      \centering
      \includegraphics[scale=0.3]{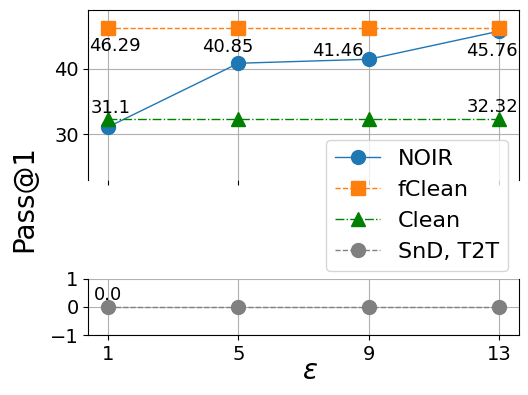} 
        \caption{HumanEval dataset}
        \label{Fig:UtilityHumanEval}
    \end{subfigure} 
    \caption{Pass@1 and IND budget $\epsilon$.} 
    \label{Fig:ModelUtilityVsEps}  
\end{figure}

\begin{figure}[t]
\captionsetup[subfigure]{justification=centering}
    \centering
      \begin{subfigure}[t]{0.493\columnwidth}
        \centering
       \includegraphics[scale=0.3]{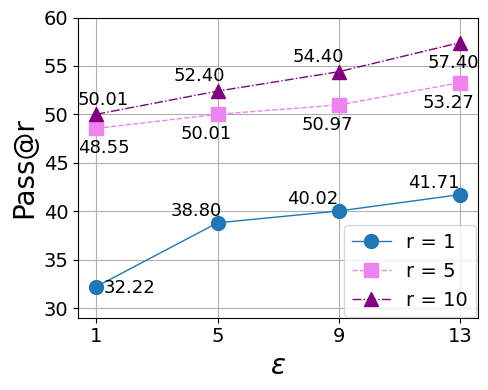} 
       \caption{MBPP dataset}
       \label{Fig:Pass@rMBPP}
       \end{subfigure} 
       \hfill
      \begin{subfigure}[t]{0.493\columnwidth}
      \centering
      \includegraphics[scale=0.3]{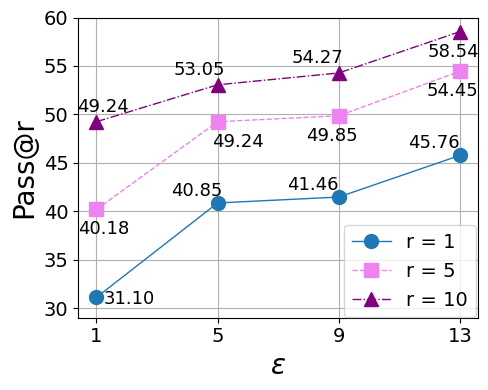}
        \caption{HumanEval dataset} 
        \label{Fig:Pass@rHuman_Eval}
    \end{subfigure}
    \caption{Pass@$r \in \{1, 5, 10\}$ and IND budget $\epsilon$.}
    \label{Fig:Pass@rVsEps}  
\end{figure}

\textbf{Q1: What is the trade-off between privacy and model performance?} 
Figure \ref{Fig:ModelUtilityVsEps} shows that SnD and T2T fail to generate code in both code generation and completion tasks across all IND budgets $\epsilon$. They only produce code-like outputs at very large $\epsilon>5,000$, but their Pass@1 remains 0, consistent with prior findings \citep{mai2024splitanddenoise}. 
In contrast, \textsc{NOIR} consistently outperforms these SoTA baselines. Higher $\epsilon$, indicating weaker IND protection, yields better Pass@1. At $\epsilon = 13$, \textsc{NOIR} achieves Pass@1 scores of 41.71 (MBPP) and 45.76 (HumanEval), with only marginal drops of $0.26\%$ and $1.14\%$ compared with upper-bounded defense-free performance. This strong performance stems from its embedding-based design, adaptive noise in \textsc{INDVocab}, and effective \textsc{STuning}. 

\textbf{Q2: What is the cost to significantly improve model performance with strong IND protection?}
With stronger protection (smaller $\epsilon$), Pass@1 performance drops more sharply (Figure \ref{Fig:ModelUtilityVsEps}). For instance, at $\epsilon = 1$ (very strong IND), 
\textsc{NOIR}'s Pass@1 scores are 32.22 (MBPP) and 31.1 (HumanEval). Fortunately, the drop can be mitigated by generating multiple code versions $r$ per prompt and selecting the best based on passed test cases. 
Even at $\epsilon = 1$, \textsc{NOIR} achieves much higher Pass@10 ($r = 10$): 50.01 (MBPP) and 49.24 (HumanEval) registering 55\% and 58.3\% improvements respectively, as shown in Figure \ref{Fig:Pass@rVsEps}. The gain grows with larger $\epsilon$, and \textsc{NOIR} Pass@10 even surpasses the defense-free Pass@1 at the cost of producing more code versions.

\textbf{Q3: What is the cost for the performance improvement?} The cost is marginal. The client generates multiple code versions of a prompt as the input. Thus, the local computational cost is $(r-1)$ multiplied by the number of test cases for each prompt to identify the best code to use, which is negligible given the lightweight decoder.
The communication and financial costs are $(r-1)$ times to generate this extra code in Pass@$r$, as a trade-off for the significant performance boost.

\textbf{Q4: How effective is \textsc{NOIR} in defending against RAs?} Figures \ref{Fig::PromptRA}, \ref{Fig::AsrCodeRA} show the privacy ASRs of PromtpRA and CodeRA under \textsc{NOIR}'s defense. On MBPP's training set without test cases in prompts, \textsc{NOIR} significantly reduces PromptRA's privacy ASR (Figure \ref{Fig::AsrMBPPTrain}) from 96\% (no test case) and 51\% (with test cases) of the defense-free mechanisms to 7\% on average at $\epsilon = 1$, achieving 92.7\% and 86.27\% improvements.
Similar reductions appear on MBPP's test set (Figure \ref{Fig::AsrMBPPTest}). For HumanEval code completion, the PromptRA's privacy ASR is 0.0, showing it cannot reconstruct a clear gist of any prompts (Figure \ref{Fig::AsrHumanEvalTest}), which is intuitive since Vec2Text \citep{morris-etal-2023-text} and BiSR \citep{chen2024unveilingvulnerabilityprivatefinetuning} were not designed for pure code prompts with \textsc{INDVocab} and \textsc{LTokenizer}.

\begin{figure}[t] 
\captionsetup[subfigure]{justification=centering}
    \centering
      \begin{subfigure}[t]{0.32\columnwidth}
        \centering
       \resizebox{1.0\textwidth}{!}{\includegraphics[scale=0.3]{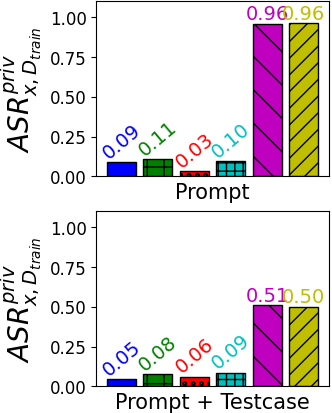}}
       \caption{$ASR^{priv}_{x, D_{train}}$ in Training Set (MBPP)}
       \label{Fig::AsrMBPPTrain}
       \end{subfigure}
       \hfill
      \begin{subfigure}[t]{0.32\columnwidth}
      \centering
      \resizebox{1.0\textwidth}{!}{\includegraphics[scale=0.3]{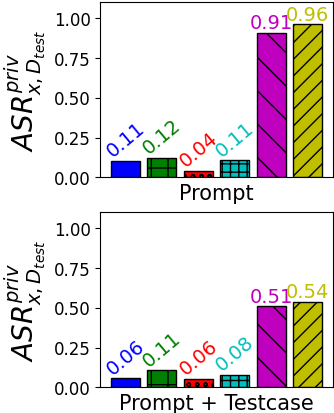}}
      \caption{$ASR^{priv}_{x, D_{test}}$ in Test Set (MBPP)}
        \label{Fig::AsrMBPPTest}
    \end{subfigure}
    \hfill
      \begin{subfigure}[t]{0.32\columnwidth}
      \centering
      \resizebox{1.0\textwidth}{!}{\includegraphics[scale=0.45]{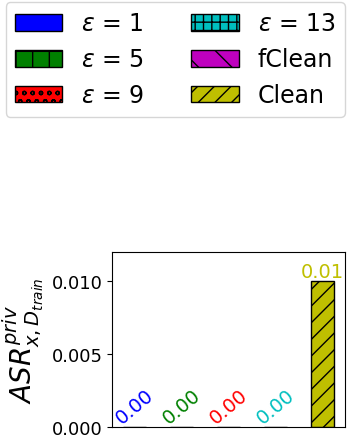}}
      \caption{$ASR^{priv}_{x, D_{test}}$ in HumanEval}
        \label{Fig::AsrHumanEvalTest}
    \end{subfigure}  
    \caption{\textsc{NOIR} against PromptRA. The lower, the better.}
    \label{Fig::PromptRA}  
\end{figure}

\begin{figure}[t] 
\captionsetup[subfigure]{justification=centering}
    \centering
    \begin{subfigure}[t]{0.999\columnwidth}
        \centering
       \includegraphics[scale=0.365]{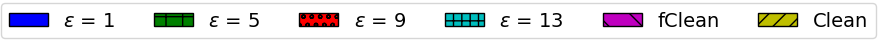}
       % \caption{18k}
       % \label{Fig::Legend}
       \end{subfigure}
    \begin{subfigure}[t]{0.32\columnwidth}
      \centering
      \resizebox{1.0\textwidth}{!}{\includegraphics[scale=1.0]{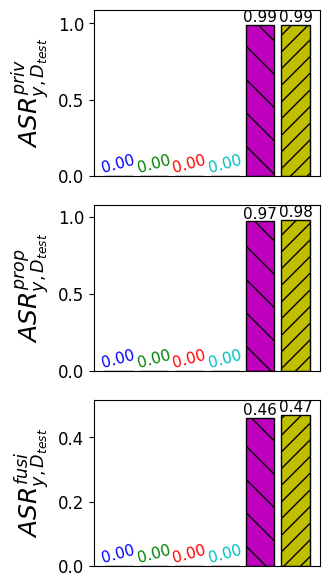}}
      \caption{MBPP Traning}
      \label{Fig::AsrMBPPTrainCode}
    \end{subfigure} \hfill
    \begin{subfigure}[t]{0.32\columnwidth}
      \centering
      \resizebox{1.0\textwidth}{!}{\includegraphics[scale=1.0]{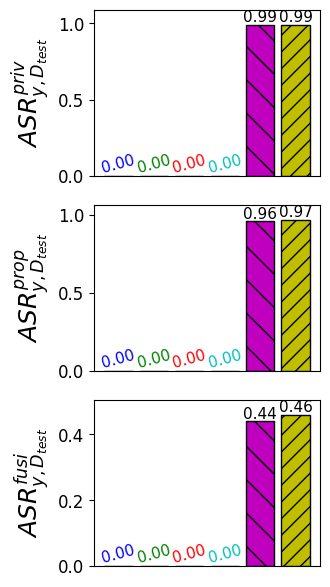}}
      \caption{MBPP Test}
      \label{Fig::AsrMBPPTestCode}
    \end{subfigure} \hfill
    \begin{subfigure}[t]{0.32\columnwidth}
      \centering
      \resizebox{1.0\textwidth}{!}{\includegraphics[scale=1.0]{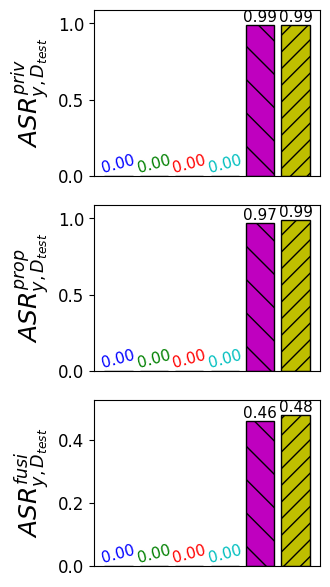}}
      \caption{HumanEval}
      \label{Fig::AsrHumanEvalCode}
    \end{subfigure}
    \caption{\textsc{NOIR} against CodeRA. The lower, the better.} 
    \label{Fig::AsrCodeRA}
\end{figure}

Although PromptRA is ineffective in code completion, CodeRA shows severe ASRs in privacy, confidentiality, and functionality (99\%, 97\%, and 46\%, respectively) against the defense-free mechanism on HumanEval (Figure \ref{Fig::AsrHumanEvalCode}). With IND protection and \textsc{LTokenizer}, \textsc{NOIR} reduces privacy and confidentiality ASRs to 0\% across IND budgets $\epsilon$, and the cloud cannot reconstruct functioning code (Fusi ASR $=$ 0.0\%). Similar results hold for code generation on MBPP training and test sets (Figures \ref{Fig::AsrMBPPTrainCode},\ref{Fig::AsrMBPPTestCode}). Under \textsc{NOIR}, $CRT_x$ and $CRT_y$ are 0.028 and 0.108 respectively, far smaller than 0.28 and 1.00 of the defense-free mechanism on MBPP. Sensitive information \texttt{leak} is 0.038 (\textsc{NOIR}), compared with 0.98 (defense-free). The reconstructed top-5 tokens prompts and code $\{$`$\langle s\rangle$', `\_', `a', `.', `to'\} and \{`per', `i', `f', `res', `p'$\}$ are meaningless. The client retains high model performance under strong IND protection with marginal cost (Q3).

\textbf{Q5: Is \textsc{NOIR} robust against frequency analysis attacks?}
In this experiment, the cloud performs sequence-based (Figure \ref{Fig:FreqAnalysis}) and token-based frequency analysis attacks, following \citep{10.1145/2810103.2813651} and \citep{Biryukov2011}. A token-based attack is a sequence-based one with the sequence length $k = 1$. Out of 32k vocabulary tokens, the cloud only recovers a few from the instruction template, common to all prompts, but none from the client prompts $x$ and code $y$. Specifically, token-based attack reconstruct 6 tokens \{`Write', `a', `Python', `function', `to', `$\langle | \text{im\_end} | \rangle$'\}, while sequence-based attacks (Figure \ref{Fig:FreqAnalysis}) recover 3 tokens \{`Write', `a', `Python'\}. After excluding these, the privacy ASR is $0.0$. Similar results hold when using public data (e.g., MBPP training set) with a similar distribution with the client's local data (e.g., MBPP test set). These findings confirm that \textsc{NOIR} resists frequency analysis attacks.

\begin{figure}[t] 
\captionsetup[subfigure]{justification=centering}
    \centering
        \begin{subfigure}[t]{0.99\columnwidth}
        \centering
        \includegraphics[scale=0.21]{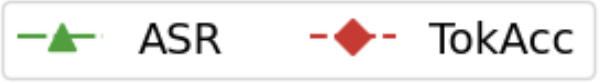}
       \end{subfigure}
        \begin{subfigure}[t]{0.31\columnwidth}
      \centering
      \includegraphics[scale=0.21]{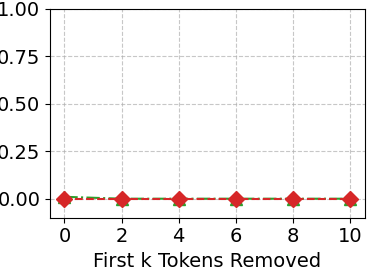}
        \caption{MBPP test using 376k dataset on sequence-based attack}
    \label{Fig:outdistopkmbppseq}
    \end{subfigure}
    \begin{subfigure}[t]{0.31\columnwidth}
      \centering
      \includegraphics[scale=0.21]{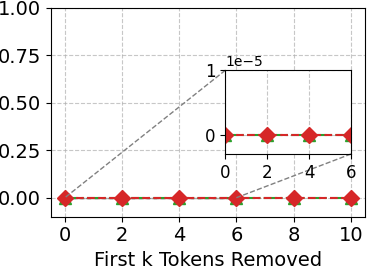}
        \caption{HumanEval using 376k dataset on sequence-based attack}
    \label{Fig:outdistopkheseq}
    \end{subfigure}
    \begin{subfigure}[t]{0.32\columnwidth}
      \centering
      \includegraphics[scale=0.19]{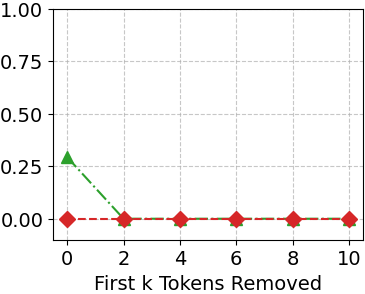}
        \caption{MBPP test using MBPP train dataset on sequence-based attack}
    \label{Fig:indistopkseq}
    \end{subfigure}
    \caption{Frequency analysis attacks ($k=3$).} 
    \label{Fig:FreqAnalysis}  
\end{figure}

% \begin{figure}[t]
% \centering
% \resizebox{0.8\columnwidth}{!}{
% \includegraphics{images/abStudy3(varyingDecoder).png}
% }
% \caption{The numbers of attention blocks in the encoder and the decoder.}
% \label{Fig::VaryingDecBlocksMBPP} 
% \end{figure}

% \begin{figure}[t]
% \centering
% \resizebox{0.8\columnwidth}{!}{
% \includegraphics{images/abStudy3(varyingDecoder)-Human_Eval.png}
% }
% \caption{Given 1 block for the encoder and $\epsilon = 13$, the model performance of \textsc{NOIR} on HumanEval dataset compared to the other baselines when varying the number of blocks for the decoder.}
% \label{Fig::VaryingDecBlocksHuman_Eval} 
% \end{figure}

\textbf{Q6: What is the model performance for different LLMs and vocabulary sizes?} We vary the LLMs by using CodeQwen1.5-7B-Chat, Llama3-8B-instruct, and CodeLlama-7B models, which differ in capabilities and vocabulary size. CodeQwen1.5-7B-Chat in \textsc{NOIR} achieves the best performance on MBPP and HumanEval (Figure \ref{Fig:VaryingTheLlms}). A gap remains between \textsc{NOIR} with CodeQwen1.5-7B-Chat and the clean model, due to insufficient training data and small privacy budgets. We address this by increasing both and analyzing the resulting RA costs.

\begin{figure}[t] 
\captionsetup[subfigure]{justification=centering}
     \centering
          \begin{subfigure}[t]{0.99\columnwidth}
            \centering
           \includegraphics[scale=0.30]{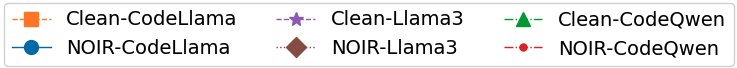}
           \end{subfigure}
    \centering
      \begin{subfigure}[t]{0.493\columnwidth}
        \centering
       \includegraphics[scale=0.3]{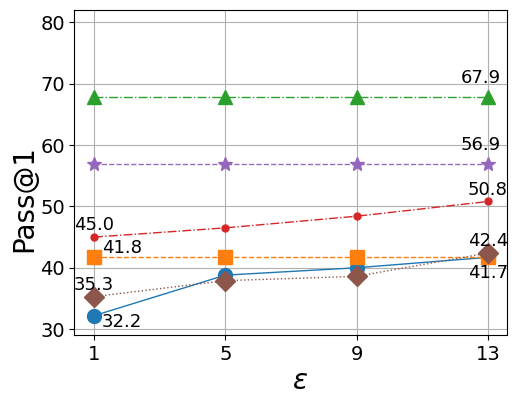} 
       \caption{MBPP dataset} 
       \label{Fig:VaryingLlmsMBPP}
       \end{subfigure}
       \hfill
      \begin{subfigure}[t]{0.493\columnwidth}
      \centering
      \includegraphics[scale=0.3]{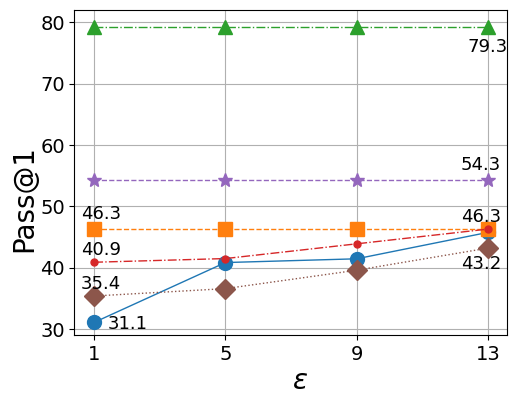} 
        \caption{HumanEval dataset}
    \label{Fig:VaryingLlmsEval}
    \end{subfigure}
    \caption{Varying LLM models.} \vspace{-2.5pt}
    \label{Fig:VaryingTheLlms}  
\end{figure}

\begin{figure}[t] 
\captionsetup[subfigure]{justification=centering}
     \centering
      \begin{subfigure}[t]{0.99\columnwidth}
        \centering
       \includegraphics[scale=0.35]{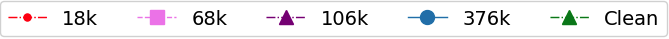}
       \end{subfigure}
    \centering
      \begin{subfigure}[t]{0.493\columnwidth}
        \centering
       \includegraphics[scale=0.3]{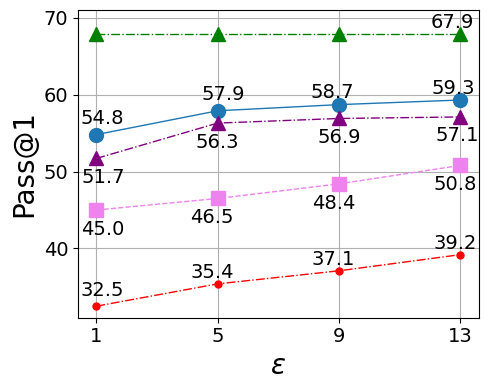}
       \caption{Enlarging the training dataset}
       \label{Fig::Qwen(EnlargeDatast)_MBPP}
       \end{subfigure}
       \hfill
       \begin{subfigure}[t]{0.493\columnwidth}
        \centering
       \includegraphics[scale=0.3]{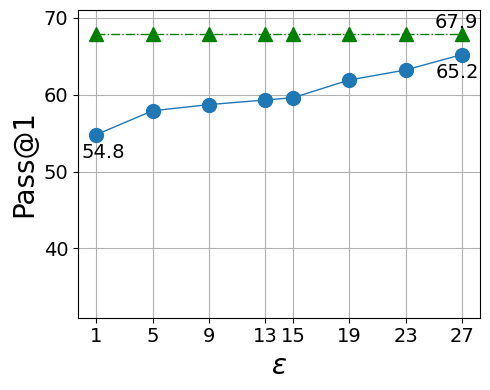}
        \caption{Increase the privacy budget $\epsilon$}
       \label{Fig:VaryingEpsMBPP}
       \end{subfigure}
       \hfill
    \caption{Enlarge the training data (CodeQwen1.5-7B-Chat).}
    \label{Fig:QwenEnlargeDataset}  
\end{figure}

\begin{figure}[t] 
\captionsetup[subfigure]{justification=centering}
    \centering
      \begin{subfigure}[t]{0.97\columnwidth}
        \centering
       \includegraphics[scale=0.32]{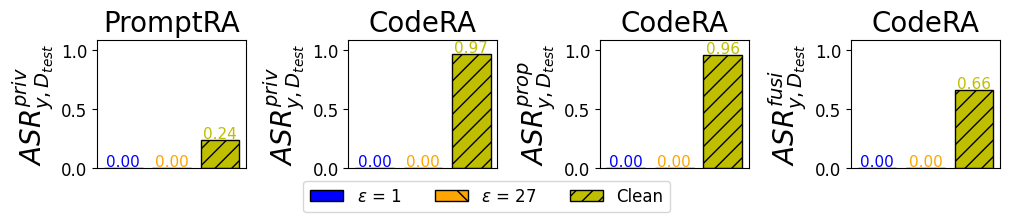} 
       \caption{\textsc{NOIR}}
       \label{Fig::ASRNOIR}
       \end{subfigure}
      \begin{subfigure}[t]{0.97\columnwidth}
        \centering
       \includegraphics[scale=0.32]{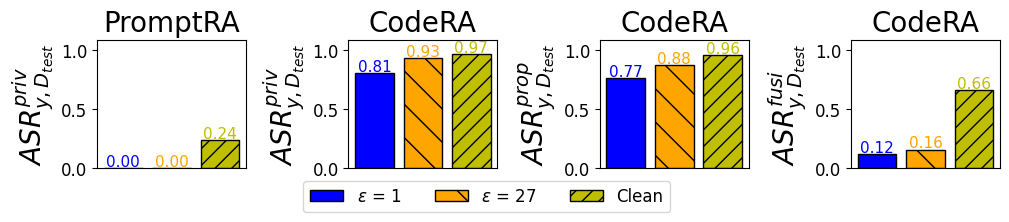}
       \caption{\textsc{NOIR} without \textsc{LTokenizer}}
       \label{Fig::ASRnoLTokenizer}
       \end{subfigure}
       \hfill
             \begin{subfigure}[t]{0.97\columnwidth}
        \centering
       \includegraphics[scale=0.32]{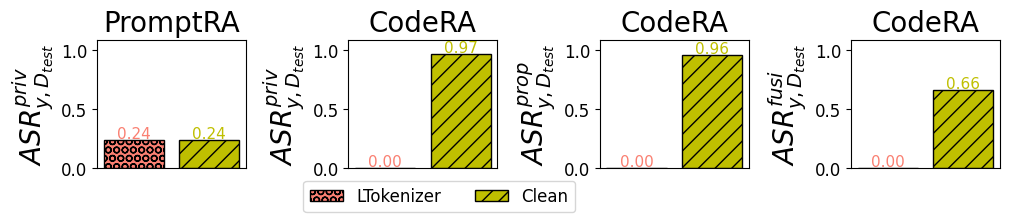}
       \caption{\textsc{NOIR} without \textsc{INDVocab}}
       \label{Fig::ASRnoLDBVocab}
       \end{subfigure} 
    \caption{Enlarging training data to 370k data points and IND $\epsilon$ to 27 under PrompRA and CodeRA.}
    \label{Fig:LDPVocab vs LTokenizer}  
\end{figure}

% \begin{figure}[!h]
% \centering
% \resizebox{0.99\columnwidth}{!}{
% \includegraphics{images/ASRVaryingEps.png}
% }
% \caption{\textsc{NOIR} against CodeRA when increase the privacy budget $\epsilon$ (CodeQwen1.5-7B-Chat). \hai{Move to Appendix - replace it with PromptRA}}
% \label{Fig:ASRVaryingEps} 
% \end{figure}

\textbf{Q7: What is the effect of enlarging the training data and privacy budget?} When clients (e.g., small enterprises) lack sufficient training data, they can augment it with publicly available data without disclosing the augmented set. For instance, we gradually expand CodeAlpaca from from $\sim$18k to $\sim$68k, $\sim$106k, and $\sim$376k data points by adding Python code from the Stack dataset \citep{kocetkov2023the}. As Figure \ref{Fig::Qwen(EnlargeDatast)_MBPP} shows, \textsc{NOIR}'s Pass@1 significantly improves from 39.2 to 59.3 on MBPP, reducing the gap to the Clean model (CodeQwen1.5-7B-Chat) from 42.3\% to 12.7\% at $\epsilon = 13$. With 376k data points, raising $\epsilon$ further improves Pass@1 from 54.8 to 65.2, nearly matching the defense-free fClean at 67.9 (a marginal 3.9\% drop; Figure \ref{Fig:VaryingEpsMBPP}). On HumanEval, Pass@1 rises from 46.3 to 65.9 as $\epsilon$ increases from 13 to 27. On BigCodeBench ``Instruct'', \textsc{NOIR}'s Pass@1 achieves 31.1 versus 31.7 for the clean model at $\epsilon = 27$, a marginal $1.8\%$ drop. Notably, \textsc{NOIR} remains robust against CodeRA and PromptRA with ASRs $= 0.0$.

\textbf{Q8: What is the cost of enlarging the training data and privacy budget regarding RAs?} Figure \ref{Fig::ASRNOIR} shows that increasing the privacy budget and training data does not raise reconstruction ASRs ($\sim 0.0$). This is because the task instruction $\pi$ of the CodeQwen1.5-7B-Chat model combined with the prompt $x$ is longer and involves more tokens in the vocabulary than in CodeLlama-7B, reducing the upper-bounded reconstruction probability (Section \ref{Reconstruction Upper Bounds}). Therefore, CodeQwen1.5-7B-Chat is more secure under \textsc{INDVocab} and \textsc{LTokenizer} protection. 
Hence, privacy budget increases should be paired with larger training data to balance performance and privacy.

\textbf{Q9: What are the roles of \textsc{INDVocab} and \textsc{LTokenizer} against RAs?} This experiment highlights the defensive roles of these components. CodeRA becomes very effective when the client disables \textsc{LTokenizer} and uses the open-source tokenizer (Figure \ref{Fig::ASRnoLTokenizer}). PromptRA takes its turn to be effective when the client disables \textsc{INDVocab} and uses the open-source vocabulary (Figure \ref{Fig::ASRnoLDBVocab}). Thus, \textsc{INDVocab} and \textsc{LTokenizer} defend against PromtpRA and CodeRA, respectively. Combining them is crucial for producing high-quality code while resisting attacks (Figure \ref{Fig::ASRNOIR}).

\textbf{Q10: What is the effectiveness of \textsc{STuning}, including LoRA and zeroth-order (ZO) optimization?} Training only the decoder or a shortcut model (concatenating the encoder and the decoder while removing the middle block $\theta_m$) yields poor performance (Pass@1 $=$ 0.0 at $\epsilon = 13$) with CodeLlama-7B. This is expected: decoder-only training cannot counter IND-preserving noise amplified in the large latent space of $\theta_m$, whie shortcut-only training and inference ignores key features learned by $\theta_m$. Hence, integrating encoder, decoder, and $\theta_m$ in \textsc{STuning} and inference is essential for good utility.

Applying LoRA to the encoder and decoder yields little benefit due to their lightweight design. Instead, we fine-tune the encoder using the SoTA ZO optimization \citep{zhang2024revisitingzerothorderoptimizationmemoryefficient}, which approximates gradients via enriched embeddings. However, ZO requires 50 extra inferences per data point per training iteration and offers no notable performance gain. 

\textbf{Q11: What is the client cost?}
We use an A100-80GB GPU to measure the client's cost of \textsc{NOIR} fine-tuning on CodeQwen1.5-7B-Chat across datasets from 18k to 376k data points. 
As shown in Figure \ref{Fig:ClientCommunicationCost}, GPU memory usage, fine-tuning time, and equivalent AWS hosting costs grow sub-linearly with dataset size, ensuring scalability without excessive client communication cost. A 20.89x dataset size increase raises GPU memory usage only 1.36x (49.92GB $\rightarrow$ 67.70GB; Figure \ref{Fig:ClientOverheadGB}), while AWS fees remain affordable for small enterprises (\$152 $\rightarrow$ \$2,394; Figure \ref{Fig:clientOverheadCost.png}). Therefore, our system scales efficiently with minimal cost.

% and fine-tuning time increases by 15 times (from 4.6 hours to 73 hours) with the corresponding hosting fee from $\$152$ to $\$2394$

\begin{figure}[t] 
\captionsetup[subfigure]{justification=centering}
    \centering
      \begin{subfigure}[t]{0.403\columnwidth}
        \centering
       \includegraphics[scale=0.375]{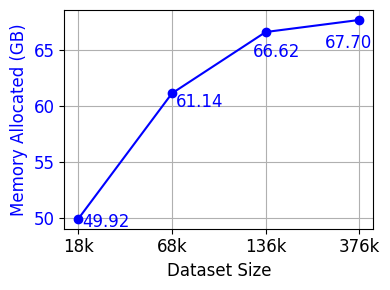} 
       \caption{GPU Memory} \vspace{-2.5pt}
       \label{Fig:ClientOverheadGB}
       \end{subfigure}
       \hfill
      \begin{subfigure}[t]{0.54\columnwidth}
      \centering
      \includegraphics[scale=0.38]{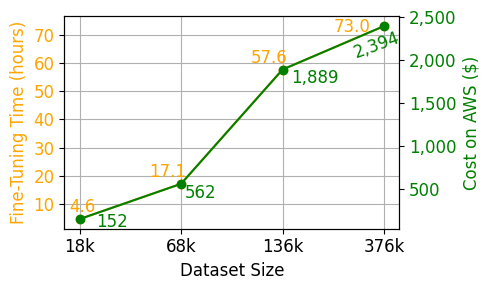}  
        \caption{Time and Cost on AWS} \vspace{-2.5pt}
    \label{Fig:clientOverheadCost.png}
    \end{subfigure} 
    \caption{Client's fine-tuning cost on 1 A100 GPU (80GB).}
    \label{Fig:ClientCommunicationCost}  
\end{figure}

\textbf{Q12: Can we reduce the cloud cost?} The cloud has the option not to fine-tune the $\theta_m$'s LoRA to reduce its computational complexity with a marginal drop $\sim$2\% of model utility as a trivial trade-off.

\textbf{Q13: What are the best numbers of attention blocks?} We set $\epsilon = 13$ and study how the number of attention blocks in the encoder and decoder affects performance.
Fixing the encoder to one block, increasing the number of decoder's attention blocks improves \textsc{NOIR}'s performance.
In contrast, adding attention blocks to the encoder does not help, as privacy noise propagates further into the latent space, limiting the cloud ability to enrich prompt embeddings and making encoder/decoder fine-tuning difficult.

\textbf{Q14: What is the NOIR's end-to-end latency?}
To evaluate NOIR's latency, we deploy the system with Qwen2.5-Coder-32B-Instruct when the encoder/decoder and middle block servers are in the same cloud region. Each server is equipped with a single A100 80GB GPU. For comparison, we also deploy the fully connected model on a single server with the same GPU achieving 22.96 tokens/s, whereas NOIR's latency is 21.49 tokens/s. These results indicate that when the client and cloud servers are co-located in a close proximity, NOIR incurs only a marginal latency overhead.

\section{Discussion}

In this section, we discuss limitations in terms of security auditing and practicality of \textsc{NOIR} and potential solutions.

\textbf{Practicality.} LLM vendors may not have strong incentives/motivation to adopt split learning. However, third-party providers (independent of LLM creators) can leverage existing open-source LLMs to provide \textsc{NOIR}, which will allow them to charge data security and IP sensitive enterprises for this privacy-preserving service. \textsc{NOIR} requires seamless integration between client-side (lightweight encoder/decoder) and cloud-hosted (middle block) components to make this model practical.

\textbf{Security Auditing.} While stronger security auditing mechanisms remain unavailable, adversaries could exploit advanced adaptive attacks to undermine $\epsilon$-IND. A powerful adversary might control the cloud-side LoRA to fine-tune middle blocks, reducing or eliminating noise effects. By analyzing the distribution of IND-preserved embeddings across multiple prompts, they could model noise patterns and dynamically adjust LoRA to improve reconstruction accuracy. A potential defense is to explore privacy-preserving LoRA protocols, training the cloud’s LoRA via zero-knowledge proofs or trusted execution environments to safeguard client data patterns. Such solutions may not be cost effective, and therefore further research is necessary to explore this topic.

Cross-embedding similarity could reveal semantic relationships between tokens, while token clustering across prompts might infer the original token’s semantic role, such as identifying “function” from consistent contextual embeddings like “def” or “defining a function.” Also, cross-prompts clustering could uncover semantic structures, enabling attackers to bypass $\epsilon$-IND through pattern inference. The key idea of these adaptive attacks is to discover frequent sequences of token embeddings then match them with frequent sequences of tokens extracted from public data. To defend against these sophisticated adversaries, \textsc{NOIR} will prioritize enhancing its privacy mechanisms. This includes developing adaptive strategies, such as context-dependent \textbf{dynamic prefix tokens} using encoder attention block position embeddings to alter tokenization during training or inference, thereby preventing the cloud from learning semantic patterns.

In a pilot study, we considered these cross-embedding-based adaptive attacks by clustering token embeddings based on 1) their similarity, 2) their positions, and 3) both their similarity and positions using public data (MBPP training set) on the Qwen2.5-Coder-32B-Instruct and Qwen3-32B models to identify common sequences across prompts. The position-based clustering attack (i.e., sequences of token embeddings are matched with sequence of tokens in the public data based on their positions in the prompts) shows the most promising result by recovering $\epsilon$-IND-preserving embeddings ($\epsilon=27$) of a few persistent tokens \{`Python', `function', `a', `write', `to'\} at fixed sequence patterns across prompts in the system prompt template. Only a few tokens are identified because the encoder attention block position embeddings disrupt the token embeddings, making it difficult to exploit the token embedding similarity. Randomly adding a dynamic prefix up to 8 random characters, e.g., <A@1p?He5>, to the system and instruction prompts prevents this marginal leak of the $\epsilon$-IND-preserving token embeddings by further disrupting the embedding sequence patterns (in terms of both similarity and positions) across prompts without affecting model performance given the model's strong reasoning capability (Appx. \ref{General Text}). Also, overly strict protection, such as increasing the dynamic prefix token to 20 characters, can notably degenerate the model performance.
Further research is needed to bolster \textsc{NOIR}’s defense against adaptive and cross-prompts attacks, ensuring robust protection in practice.

\textbf{Text Generation.} \textsc{NOIR} is limited to private code generation due to its unique challenge, fine-tuning data, and pipeline rather than general text generation. Extending \textsc{NOIR} to general text generation involves adapting the vocabulary and tokenizer to handle diverse natural language tasks, modifying the training data and architecture to support text generation, and ensuring robust privacy protections against reconstruction attacks. Our pilot study shows that \textsc{NOIR} achieves highly competitive reasoning performance using Qwen open-source LLMs (Appx. \ref{General Text}). A \textsc{NOIR}'s trial system for general text generation is available: \url{https://noir.oppyai.com}.

Several other research directions include:
\textbf{(1) Embedding Alignment:} The cloud must avoid responding to malicious embeddings from harmful prompts (e.g., ``generating malware code''), a harder task than detecting harmful prompts.
\textbf{(2) Proprietary LLMs:} Model architecture, vocabulary, and token embedding size are sensitive to the proprietary LLM-provider. A solution is for clients to use open-source encoders/decoders, while the cloud aligns embedding sizes with its proprietary LLM via feature transformation (e.g., NVIB \cite{henderson2023a}). Pilot results \hai{(Appx. \ref{NVIBPrelim})} show promising performance on benchmarks.

\section{Conclusion}
\label{Conclusion and Discussion}

This paper presents \textsc{NOIR}, a code generation framework that protects clients' prompts and code using open-source LLMs. Clients send prompt embeddings derived from a local encoder to the cloud-hosted model and receive enriched embeddings, which are decoded locally to generate code. To prevent cloud reconstruction of prompts and code, we introduce a novel concept of $\epsilon$-IND-preserving vocabulary (\textsc{INDVocab}) with a randomized tokenizer (\textsc{LTokenizer}), and propose \textsc{STuning}, a client-side fine-tuning method for the encoder and decoder. 
Extensive theoretical and empirical results show that \textsc{NOIR} significantly outperforms SoTA baselines while safeguarding data privacy, code confidentiality, and code functionality against reconstruction and frequency analysis attacks from an honest-but-curious cloud. To demonstrate practicality, we open-source \textsc{NOIR} as a privacy-preserving coding agent using a Qwen2.5-Coder-32B-Instruct model with Pass@1: 83.6 (MBPP), 85.4 (HumanEval), and 47.2 (BigCodeBench).

\section*{Acknowledgment}
This work was supported by the National Science Foundation (NSF) grants CNS-(1935928, 1935923), CNS 2237328, and DGE 2043104. We thank Thai Nguyen for the study in Appx. \ref{NVIBPrelim} and the dedicated reviewers who help improve \textsc{NOIR}.

%-------------------------------------------------------------------------------
% optional clearing of the page
% \cleardoublepage
\appendix
\section*{Ethical Considerations}

Our research offers the potential to benefit all stakeholders, including: 1) Individuals and enterprises, who use LLMs as software coding tools; and 2) LLM-providers. No stakeholders will be adversely impacted by the publication of your research now and in the future, since we develop a solution to protect IP and data security in prompting LLMs for private code generation without affecting any currently available commercial systems.

\section*{Open Science}

We open-source \textsc{NOIR} ({\color{blue}\url{https://tinyurl.com/NOIR-Artifact}}) based on Qwen2.5-Coder-32B-Instruct and release its API through a privacy-preserving coding agent for the public use in a web-service ({\color{blue}\url{https://noir.oppyai.com}}) and a Visual Studio extension ({\color{blue}\url{https://tinyurl.com/NOIR-Artifact}}), integrated directly into the software development pipeline. The source-code is available with detailed experimental configurations for reproducibility purposes.

\bibliography{references}

%%%%%%%%%%%%%%%%%%%%%%%%%%%%%%%%%%%%%%%%%%%%%%%%%%%%%%%%%%%%%%%%%%%%%%%%%%%%%%%%

\appendix

\vspace{-10pt}
\section{Proof of Theorem \ref{theorem-beta-bound}}

\begin{proof}
Given two possible values $e^i_t$ and $e^{i}_{t}$$'$ of the $i^\text{th}$-feature in the token embedding $e_t$, i.e., $e^i_t, e^{i}_{t}$$' \in \{e^i_l\}_{l \in V}$, and any possible output $z \in Range(ARR)$, where $Range(ARR)$ denotes every possible output of ARR, we have:
\begin{footnotesize}
\begin{align}
\frac{P(ARR(e^i_t) = z )}{P(ARR({e^{i}_{t}}') = z)} \le \frac{\max P(ARR(e^i_t) = z)}{\min P(ARR({e^{i}_{t}}') = z)} = \frac{\max P(ARR(e^i_t) = z)}{\min P(ARR(e^i_k) = z)}
 \label{proof-beta}
\end{align}
\end{footnotesize}

Following the typical definition of RR mechanisms \citep{warner1965randomized}, the probability $p_i$ must be larger than or equal to any probabilities $q_{i, k}$. We assume having this condition: 
\begin{equation}
\forall k \in V \setminus t: p_i \geq q_{i, k}. 
\label{PandQCondition}
\end{equation}

From Eqs. \ref{proof-beta} and \ref{PandQCondition}, we have that
\begin{small}
\begin{align} 
  &\frac{P(ARR(e^{i_t}) = z )}{P(ARR({e^{i}_{t}}') = z)} \leq \frac{ \frac{\exp({\beta_i})}{\exp({\beta_i}) + |V| -1} }{  \min (\frac{\exp(-\Delta^{i_{t,k}}/m)}{ \sum_{l \in V} \exp(-\Delta^{i_{t,l}}/m)} \frac{|V| - 1}{\exp({\beta_i}) + |V| -1})  } \nonumber \\
  &= \frac{ \exp(\beta_i) }{(|V| -1) \min ( \frac{\exp(-\Delta^{i_{t,k}/m})}{ \sum_{l \in V} \exp(-\Delta^i_{t,l}/m)} ) } \le \exp(\varepsilon_i).
\end{align}
\end{small}

Taking a natural logarithm of Eq. \ref{proof-beta}, we obtain: $\ln ( \frac{ \exp(\beta_i) }{(|V| -1) \min ( \frac{\exp(-\Delta^i_{t,k}/m)}{ \sum_{l \in V} \exp(-\Delta^i_{t,l}/m)} ) }  ) \le \ln (\exp(\varepsilon_i))$
\begin{small}
\begin{align}
\Leftrightarrow \beta_i \le \varepsilon_i + \ln(|V|-1) + \ln (  \frac{\min\{\exp(-\Delta^i_{t,k}/m)\}_{k \in V \setminus t}}{ \sum_{l \in V} \exp(-\Delta^i_{t,l}/m)}).
  \label{Condition1}
\end{align}
\end{small}

\noindent Let us recall that, for the Eq. \ref{Condition1} to hold, we need the condition in Eq. \ref{PandQCondition} to hold. In fact, we can rewrite Eq. \ref{PandQCondition} as follows:
$p_i \geq \max \{q_{i, k}\}_{k \in V \setminus t} = \frac{|V| -1}{\exp({\beta_i})  + |V| -1} \frac{\max\{\exp(-\Delta^i_{t,k}/m)\}_{k \in V \setminus t}}{ \sum_{l \in V} \exp(-\Delta^i_{t,l}/m)}.$

\noindent This is equivalent to: $\frac{\exp({\beta_i})}{\exp({\beta_i}) + |V| -1} \geq \frac{|V| -1}{\exp({\beta_i})  + |V| -1} \frac{\max\{\exp(-\Delta^i_{t,k}/m)\}_{k \in V \setminus t}}{ \sum_{l \in V} \exp(-\Delta^i_{t,l}/m)} \Leftrightarrow \exp({\beta_i}) \geq (|V| - 1) \frac{\max\{\exp(-\Delta^i_{t,k}/m)\}_{k \in V \setminus t}}{ \sum_{l \in V} \exp(-\Delta^i_{t,l}/m)}$ 
\begin{small}
\begin{align}
\Leftrightarrow \beta_i \geq \ln (|V| - 1) + \ln (\frac{\max\{\exp(-\Delta^i_{t,k}/m)\}_{k \in V \setminus t}}{ \sum_{l \in V} \exp(-\Delta^i_{t,l}/m)}). 
\label{Condition2}
\end{align}
\end{small}

For Eqs. \ref{Condition1} and \ref{Condition2}, which represent the upper-bound and the lower-bound of $\beta_i$ respectively, to hold simultaneously, resulting in feasible $\beta_i$, we need the upper-bound to be larger than or equal to the lower-bound, as follows:
\begin{footnotesize}
\begin{align}
& \varepsilon_i + \ln(|V|-1) + \ln (  \frac{\min\{\exp(-\Delta^i_{t,k}/m)\}_{k \in V \setminus t}}{ \sum_{l \in V} \exp(-\Delta^i_{t,l}/m)}) \geq \ln(|V| - 1) + \nonumber \\
% \Leftrightarrow & \varepsilon_i \geq \ln (\frac{\max\{\exp(-\Delta^i_{t,k}/m)\}_{k \in V \setminus t}}{\min\{\exp(-\Delta^i_{t,k}/m)\}_{k \in V \setminus t}}) \nonumber \\
% \Leftrightarrow & \varepsilon_i \geq \ln \frac{\max\big\{1 / \exp(\Delta^i_{t,k}/m)\big\}_{k \in V \setminus t}}{\min\big\{1 / \exp(\Delta^i_{t,k}/m)\big\}_{k \in V \setminus t}} \nonumber \\ 
%\Leftrightarrow & \varepsilon_i \geq \ln \frac{1/\min \{\exp(\Delta^i_{t,k}/m)\}_{k \in V \setminus t}}{1/\max \{\exp(\Delta^i_{t,k}/m)\}_{k \in V \setminus t}} \nonumber \\ 
& \ln (\frac{\max\{\exp(-\Delta^i_{t,k}/m)\}_{k \in V \setminus t}}{ \sum_{l \in V} \exp(-\Delta^i_{t,l}/m)}) \Leftrightarrow \varepsilon_i \geq \ln (\frac{\max \{\exp(\Delta^i_{t,k}/m)\}_{k \in V \setminus t}}{\min \{\exp(\Delta^i_{t,k}/m)\}_{k \in V \setminus t}}) 
\label{EpsilonCond1}
\end{align}
\end{footnotesize}

\noindent Let us denote $\Delta^i_{t, min} = \min \{\Delta^i_{t, k}\}_{k \in V \setminus t}$ and $\Delta^i_{t, max} = \max \{\Delta^i_{t, k}\}_{k \in V \setminus t}$. From Eq. \ref{EpsilonCond1}, we have that
\begin{small}
\begin{align}
\varepsilon_i \geq \ln (\frac{\exp(\Delta^i_{t, max} / m)}{\exp(\Delta^i_{t, min} / m)})
\Leftrightarrow \varepsilon_i \geq \frac{1}{m}(\Delta^i_{t, max} - \Delta^i_{t, min}).
\label{lowerboundEpsilon}
\end{align}
\end{small}

\noindent Consequently, from Eqs. \ref{Condition1}, \ref{Condition2}, and \ref{lowerboundEpsilon}, Theorem \ref{theorem-beta-bound} holds. 
\end{proof}

% \section{Proof of Theorem \ref{upper-bound token}}

\section{Theorem \ref{Theorem-composition}}
\label{Appendix Theorem-composition}

\begin{theorem} Applying ARR to independently randomize every $i^{\text{th}}$-feature with a privacy budget $\epsilon_i$ in a token embedding $e_t$ preserves $\epsilon$-IND, where $\epsilon = \sum_{i\in e_t}\epsilon_i$.
\label{Theorem-composition}
\end{theorem}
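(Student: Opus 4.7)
The plan is to invoke the standard sequential composition argument for local differential privacy, adapted to the $\epsilon$-IND setting of Definition~\ref{LDPVocab}. First, I would fix two arbitrary tokens $t, t' \in V$ with embeddings $e_t, e_{t'}$, and any candidate output vector $\tilde{e} = (\tilde{e}^1, \ldots, \tilde{e}^m)$ in the joint range of the per-feature ARR applications. The objective is to show that
\[
\frac{\Pr[\mathrm{ARR}(e_t) = \tilde{e}]}{\Pr[\mathrm{ARR}(e_{t'}) = \tilde{e}]} \le e^{\epsilon}, \qquad \epsilon = \sum_{i \in e_t} \epsilon_i,
\]
which is exactly the condition required by Definition~\ref{LDPVocab}.

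Next, I would exploit the independence of the randomization across features. Algorithm~\ref{CodeX - Psuedo Code} (lines~5--9) applies ARR independently to each feature $i$ with its own $\beta_i$ (hence its own $\epsilon_i$), using fresh randomness per coordinate. Consequently, the joint probability factorizes as $\Pr[\mathrm{ARR}(e_t) = \tilde{e}] = \prod_{i=1}^{m} \Pr[\mathrm{ARR}(e_t^i) = \tilde{e}^i]$, and identically for $e_{t'}$. This reduces the joint ratio to a product of $m$ per-feature ratios.

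I would then apply Theorem~\ref{theorem-beta-bound} coordinate-wise: for each feature $i$ and any realized output $\tilde{e}^i$, the proof of that theorem establishes $\Pr[\mathrm{ARR}(e_t^i) = \tilde{e}^i] \le e^{\epsilon_i} \Pr[\mathrm{ARR}({e_{t'}^i}) = \tilde{e}^i]$ whenever $\beta_i$ is chosen inside the admissible interval of Eq.~\ref{betabound}. Multiplying these $m$ bounds yields
\[
\prod_{i=1}^{m} \frac{\Pr[\mathrm{ARR}(e_t^i) = \tilde{e}^i]}{\Pr[\mathrm{ARR}(e_{t'}^i) = \tilde{e}^i]} \le \prod_{i=1}^{m} e^{\epsilon_i} = e^{\sum_i \epsilon_i} = e^{\epsilon},
\]
which closes the argument. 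Since $t, t', \tilde{e}$ were arbitrary, the bound holds uniformly, and the joint mechanism satisfies $\epsilon$-IND.

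The main obstacle is not the algebra but the justification that the per-feature bound from Theorem~\ref{theorem-beta-bound} applies uniformly over all pairs $(e_t^i, {e_{t'}^i})$ of admissible feature values, rather than only for the specific $t, t'$ used to calibrate $\beta_i$. This requires verifying that the upper bound of $\beta_i$ in Eq.~\ref{betabound} is worst-case over $k \in V\setminus t$, so that any competing feature value is absorbed into the $\min/\max$ terms already appearing in the theorem's statement. Once this worst-case reading is made explicit, the composition step is a direct product of per-feature bounds with no additional slack, mirroring the classical DP sequential composition theorem~\citep{dwork2014}.
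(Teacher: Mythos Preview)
Your proposal is correct and follows essentially the same approach as the paper's proof: factorize the joint probability via independence across features, bound each per-feature ratio by $e^{\epsilon_i}$ using Theorem~\ref{theorem-beta-bound}, and multiply to obtain $e^{\sum_i \epsilon_i}$. The paper's version is terser---it writes the whole argument as a single chain of inequalities using a $\max/\min$ formulation over feature values---and does not explicitly discuss the obstacle you raise about $\beta_i$'s dependence on $t$, but the $\max/\min$ step absorbs that worst-case in the same way you describe.
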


\begin{proof}
Given any two possible embeddings $e_t = \{e^i_t\}_{i \in e_t}$ and $e'_t = \{e^i_t$$'\}_{i \in e'_t}$ of a token $t$, i.e., $\forall i^{\text{th}}\text{-feature in } e_t: e^i_t, e^{i}_{t}$$' \in \{e^i_l\}_{l \in V}$, and any possible output $\mathcal{O} = \{z^i \in Range(ARR)\}_{\forall i^{\text{th}}\text{-feature in } e_t}$, where $Range(ARR)$ denotes every possible output of ARR,
we have:
$\frac{P(ARR(e_t) = \mathcal{O})}{P(ARR(e'_t) = \mathcal{O})} =  \le \prod_{i \in e_t} \frac{\max P(ARR(e^i_t) = z^i)}{\min P(ARR({e^i_t}') = z^i)} = \prod_{i \in e_t} \frac{\max P(ARR(e^i_t) = z^i)}{\min P(ARR(e^i_k) = z^i)} \le \prod_{i \in e_t} \exp(\epsilon_i) = \exp(\sum_{i \in e_t} \epsilon_i).$
Consequently, Theorem \ref{Theorem-composition} holds.
\end{proof}

\section{Prompt-level Protection} 
\label{Mitigating Limitations of Token-level Privacy}

Token-level $d_x$-privacy has two key weaknesses: (1) privacy budget and authorship leakage scale with the number of tokens in a sentence, and (2) protection varies across sentences with different token counts \citep{mattern-etal-2022-limits}. Authorship leakage \citep{10.1007/978-3-642-39077-7_5} is not a concern in our study, as the cloud knows the client's identity. While techniques like Tor \citep{269582} can address authorship privacy, we aim to establish an upper bound on the probability of prompt $x$ being reconstructed, thereby exploring the relationship between \textsc{INDVocab} and reconstruction risk.

\begin{wrapfigure}{r}{0.5\columnwidth}
  \begin{center} \vspace{-10pt}
\includegraphics[width=0.5\columnwidth]{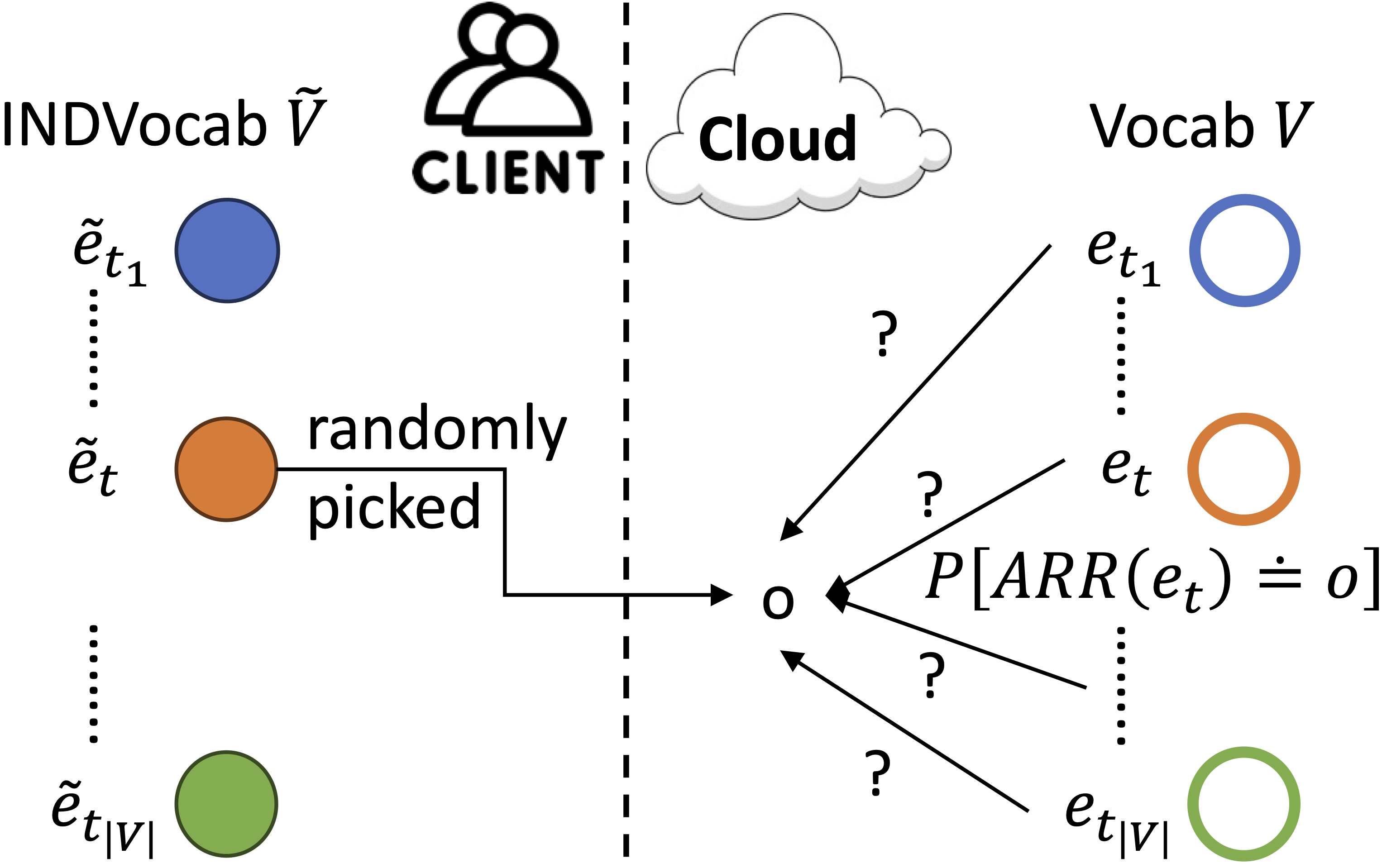} \vspace{-20pt}
  \end{center}
  \caption{Token Reconstruction Security Game.} \vspace{-10pt}
\label{ReconstructionGame}
\end{wrapfigure}

To achieve our goal, we introduce a security game (Figure \ref{ReconstructionGame}) where the client selects an $\epsilon$-IND-preserving token embedding $o \in \{\tilde{e}_t\}_{t \in V}$ and sends it to the cloud along with the original vocabulary $V$ associated with the ground-truth token embeddings $\{e_t\}_{t\in V}$. The cloud wins if it correctly identifies the ground-truth token $t$ corresponding to $o$. The cloud receives no feedback on its success, and the client does not provide additional information derived from the original embeddings, adhering to the CPC constraints. The frequency of game play or the selection of $\epsilon$-IND-preserving embeddings does not compromise the IND protection due to its post-processing property \citep{dwork2014}. Each game remains independent, as the cloud gains no insight into the \textsc{INDVocab} $\tilde{V}$ from participating in multiple games. Theorem \ref{upper-bound token} quantifies the cloud's probability of correctly inferring $t$ from $o$, $P[ARR(e_t) \doteq o]$. Note that $\forall t \neq t': \tilde{e}_t \neq \tilde{e}_{t'}$.

\begin{theorem} Given an $\epsilon$-\textsc{INDVocab} $\tilde{V}$, the probability that an arbitrary $t \in V$ is the ground-truth token of an observed $\epsilon$-IND token embedding $o \in \{\tilde{e}_t\}_{t \in V}$, denoted as $P[ARR(e_t) \doteq o]$, is bounded as follows:
\begin{footnotesize}
\begin{equation}
\forall t \in V: \frac{1}{1 + (|V|-1)e^\epsilon} \leq P[ARR(e_t) \doteq o] \leq \frac{e^\epsilon}{e^\epsilon + |V| - 1}.
\label{bounded token}
\end{equation}
\end{footnotesize}
\label{upper-bound token}
\end{theorem}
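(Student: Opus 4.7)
My plan is to reduce the statement to a Bayesian inference inequality and then invoke the $\epsilon$-IND guarantee of ARR (Definition 2, together with the composition result of Theorem 2) to control the likelihood ratios token-by-token. Concretely, in the security game the cloud observes $o \in \{\tilde{e}_t\}_{t \in V}$ and must guess the ground-truth $t$. Because the game is independent across plays and the client reveals no side information beyond what \textsc{INDVocab} already protects (post-processing property), the cloud's posterior is obtained from a uniform prior over $V$ via Bayes' rule:
\begin{equation}
P[ARR(e_t) \doteq o] \;=\; \frac{P[ARR(e_t) = o]}{\sum_{t' \in V} P[ARR(e_{t'}) = o]}.
\label{plan-bayes}
\end{equation}

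The first step is to isolate the term for the candidate token $t$ in the denominator and rewrite \eqref{plan-bayes} as
\begin{equation}
P[ARR(e_t) \doteq o] \;=\; \frac{1}{1 + \sum_{t' \neq t} \dfrac{P[ARR(e_{t'}) = o]}{P[ARR(e_t) = o]}}.
\label{plan-ratio}
\end{equation}
This makes the problem entirely about likelihood ratios, which is exactly what $\epsilon$-IND controls.

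Next, I would apply Definition 2 (extended to full embeddings via Theorem 2) to obtain, for every $t' \in V \setminus t$,
\begin{equation}
e^{-\epsilon} \;\le\; \frac{P[ARR(e_{t'}) = o]}{P[ARR(e_t) = o]} \;\le\; e^{\epsilon}.
\label{plan-indbound}
\end{equation}
Substituting the upper bound of \eqref{plan-indbound} into the $|V|-1$ summands of \eqref{plan-ratio} yields the lower bound $\tfrac{1}{1+(|V|-1)e^\epsilon}$; substituting the lower bound $e^{-\epsilon}$ yields, after simplification, the upper bound $\tfrac{e^\epsilon}{e^\epsilon + |V| - 1}$. This gives the two-sided bound stated in the theorem.

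\paragraph{Anticipated obstacle.}
The arithmetic manipulations are routine; the genuinely delicate point is justifying the Bayesian setup in \eqref{plan-bayes}. I need to argue that (i) the cloud's prior over $V$ is uniform---which follows because, by the CPC constraints and the post-processing property of DP, repeated plays and knowledge of $\tilde{V}$ reveal nothing about the mapping $t \mapsto \tilde{e}_t$ beyond the IND-protected randomization itself---and (ii) the likelihood-ratio bound from Definition 2, originally written for two arbitrary input tokens with a fixed output event, applies pointwise to the observed $o$. The latter is immediate once we treat $\{ARR(e_t)=o\}$ as a single event in $Range(ARR)$ and observe that $\tilde{e}_t \neq \tilde{e}_{t'}$ for $t \neq t'$, so the events partition the posterior cleanly. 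Once these two subtleties are addressed, the proof reduces to substituting \eqref{plan-indbound} into \eqref{plan-ratio} and simplifying.
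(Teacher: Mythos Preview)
Your proposal is correct and follows essentially the same route as the paper: both arguments combine the normalization $\sum_{t'} P[ARR(e_{t'}) \doteq o] = 1$ with the $\epsilon$-IND likelihood-ratio bound and solve the resulting linear inequality. The paper writes $P[ARR(e_t)\doteq o] = 1 - \sum_{t'\neq t} P[ARR(e_{t'})\doteq o]$ and then substitutes $P[ARR(e_{t'})\doteq o]\gtrless e^{\mp\epsilon}P[ARR(e_t)\doteq o]$, whereas you write the posterior as $\big(1+\sum_{t'\neq t}\tfrac{P[ARR(e_{t'})=o]}{P[ARR(e_t)=o]}\big)^{-1}$ and bound each ratio; these are algebraically identical, and your explicit invocation of the uniform prior simply spells out what the paper leaves implicit when it passes from the likelihood bound $P[ARR(e_{t'})=\mathcal{O}]$ to the posterior bound $P[ARR(e_{t'})\doteq o]$.
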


\begin{proof}
Given an arbitrary $\epsilon$-IND-preserving token embedding $o \in \{\tilde{e}_t\}_{t \in V}$ and the vocabulary $V$, the cloud will make an inference whether $o$ is a result of randomizing one of the $|V|$ token embeddings in the vocabulary. Therefore, given $|V|$ possible outcomes from the cloud perspective, we have
\begin{small}
\begin{align}
&\sum_{t \in V} P[ARR(e_t) \doteq o] = 1 \label{eq: sum 1} \\
\Leftrightarrow & \forall t \in V: P[ARR(e_t) \doteq o] = 1 - \sum_{t' \in V \setminus t} P[ARR(e_{t'}) \doteq o]. \nonumber
\end{align}
\end{small}

From Definition \ref{LDPVocab} and Theorems \ref{theorem-beta-bound}-\ref{Theorem-composition}, we also have that: $\forall t' \in V \setminus t, \forall \mathcal{O} = \{z^i \in Range(ARR)\}_{\forall i^{\text{th}}\text{-feature in } e_t}: P[ARR(e_{t'}) = \mathcal{O}] \geq \frac{P[ARR(e_t) = \mathcal{O}]}{e^\epsilon} \Leftrightarrow -P[ARR(e_{t'}) = \mathcal{O}] \leq - \frac{P[ARR(e_t) = \mathcal{O}]}{e^\epsilon}$. Given the observed outcome $o \in \mathcal{O}$, we have the that: $-P[ARR(e_{t'}) \doteq o] \leq - \frac{P[ARR(e_t) \doteq o]}{e^\epsilon}$. By applying this result to all $t' \in V \setminus t$, we have that
\begin{small}
\begin{equation}
-\sum_{t' \in V \setminus t} P[ARR(e_{t'}) \doteq o] \leq -(|V| - 1)\frac{P[ARR(e_t) \doteq o]}{e^\epsilon}.
\label{eq: sum 2}
\end{equation}
\end{small}

From Eqs. \ref{eq: sum 1} and \ref{eq: sum 2}: we have that $\forall t \in V: P[ARR(e_t) \doteq o] \leq 1 -(|V| - 1)\frac{P[ARR(e_t) \doteq o]}{e^\epsilon}$
\begin{small}
\begin{align} 
\Leftrightarrow P[ARR(e_t) \doteq o] \leq \frac{1}{1 + \frac{|V| - 1}{e^\epsilon}} = \frac{e^\epsilon}{e^\epsilon + |V| - 1}.
\label{upperbound token}
\end{align}
\end{small}

In addition, we also have that $\forall t' \in V \setminus t, \forall \mathcal{O} = \{z^i \in Range(ARR)\}_{\forall i^{\text{th}}\text{-feature in } e_t}: P[ARR(e_{t'}) = \mathcal{O}] \leq e^\epsilon P[ARR(e_t) = \mathcal{O}]$. Given the observed outcome $o \in \mathcal{O}$, we have the that: $P[ARR(e_{t'}) \doteq o] \leq e^\epsilon P[ARR(e_t) \doteq o]$. By applying this result to all $t' \in V \setminus t$, we have that 
\begin{small}
\begin{equation}
1 - \sum_{t' \in V \setminus t} P[ARR(e_{t'}) \doteq o] \geq 1 - (|V| - 1) e^\epsilon P[ARR(e_t) \doteq o].
\label{lowerbound 1}
\end{equation}
\end{small}

From Eqs. \ref{eq: sum 1}, \ref{lowerbound 1}, we have $P[ARR(e_t) \doteq o] \geq 1 - (|V| - 1) e^\epsilon P[ARR(e_t) \doteq o]$
\begin{small}
\begin{align}
\Leftrightarrow P[ARR(e_t) \doteq o] \geq \frac{1}{1 + (|V|-1)e^\epsilon}.
\label{lowerbound 2}
\end{align}
\end{small}

\noindent Consequently, from Eqs. \ref{upperbound token} and \ref{lowerbound 2}, Theorem \ref{upper-bound token} holds.
\end{proof}

% \begin{figure}[t]
% \centering
% \resizebox{0.6\columnwidth}{!}{
% \includegraphics{images/ReconstructionGame.png}
% }
% \caption{Token Reconstruction Security Game.}
% \label{ReconstructionGame} 
% \end{figure}

From Theorem \ref{upper-bound token}, it is evident that the \textit{``larger the number of tokens''} in the vocabulary $V$ and the \textit{``smaller the privacy budget $\epsilon$''} is, the \textit{``lower the probability''} for the cloud to infer the ground-truth token, offering rigorous privacy protection. 

%It is worth noting that the cloud does not observe the game outcome to maintain the CPPO constraints. Therefore, 

We can extend the security game by allowing the client to pick an arbitrary prompt $x = \{t_j\}_{j = 1}^{|x|}$ represented as a sequence of $\epsilon$-IND preserving tokens $\{o_j\}_{j = 1}^{|x|}$ s.t. $\forall j \in [1, |x|]: o_j = \tilde{e}_{t_j}$ and the client sends $\{o_j\}_{j = 1}^{|x|}$ to the cloud. The cloud wins the game if its reconstructed prompt $\hat{x}$ provides a clear gist of $x$ by identifying ground-truth tokens $\{t_j\}$ in $x$. For instance, $Bleu(\hat{x}, x) \geq \rho$ $(= 20)$ for a clear gist of $x$ \citep{BleuScoreRange}, formulated as
$P[Bleu(\hat{x}, x) \geq \rho; \{o_j\}_{j = 1}^{|x|}]$. Like the previous security game, the client does not send the game outcome to the cloud. Therefore, \textbf{the cloud observes no extra information} derived from the ground-truth token embeddings \textbf{in playing the games}. As a result, the number of games, the number of times a token appears in one or more prompts, and the number of times a prompt is selected in these games do not affect the $\epsilon$-IND protection of the token embeddings (the post-processing property of DP \citep{dwork2014}).

In this security game, the numbers of tokens in the reconstructed and ground-truth prompts, $\hat{x}$ and $x$, are the same: $|\hat{x}| = |x|$. Hence, Rouge precision is equal to Rouge recall: $Rouge\text{-}precision(\hat{x}, x), Rouge\text{-}recall(\hat{x}, x) = C / |x|$, where $C$ is the number of correct reconstructed tokens in $\hat{x}$. As a result, $Rouge\text{-}F1(\hat{x}, x)$ \citep{lin-2004-rouge} is equal to $C / |x|$, which is equivalent to $Bleu(\hat{x}, x) = C/|x|$. Therefore, we focus on analyzing the correlation between the IND budget $\epsilon$, the vocabulary size $|V|$, the size of the prompt $|x|$, and the threshold $\rho$ in reconstructing a clear gist of $x$ using the $Bleu$ score below.
The following proposition limits the cloud's upper-bounded probability of winning the security game.

% derive an upper-bounded probability for the server to correctly identify tokens $t$ in a prompt $x$. 

% Given a set of token embeddings \hai{$\{o_t\}_{t \in x}$ and $\forall o_t: o_t \in \{\tilde{e}_t\}_{t \in V}$} observed by the server, we provide an upper-bounded probability for the server to correctly identify all tokens $t$ in a prompt $x$, denoted as $P[\{ARR(e_t) = o_t\}_{t \in x}]$, in the following proposition based on Theorem \ref{upper-bound token}.

% we consider the strongest attack, that is, the server can losslessly reconstruct the randomized token embeddings $\{\tilde{e}_t\}_{t \in x}$ given a prompt $x$. Based upon that, we provide an upper-bounded probability for the server to precisely reconstruct the prompt $x$, i.e., correctly identify all tokens in $x$, as follows:

\begin{prop} The cloud's probability to identify ground-truth tokens in $x$ with a gist level higher than or equal $\rho$ is upper-bounded as
$P\big[Bleu(\hat{x}, x) \geq \rho; \{o_j\}_{j = 1}^{|x|}\big] \leq \big(\frac{\psi e^\epsilon + 1}{\psi e^\epsilon + \psi^2}\big)^{\rho |x|} \times \big(\frac{\psi e^\epsilon}{\psi e^\epsilon + 1}\big)^{(1-\rho)|x|}, $ where $\psi = |V| - 1$.
\label{Bleu Bound}
\end{prop}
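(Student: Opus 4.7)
The plan is to reduce the event $\{Bleu(\hat{x},x)\geq\rho\}$ to a sum of independent per-token reconstruction events and then apply Theorem~\ref{upper-bound token} coordinate by coordinate. First, I would note that since ARR randomizes every feature of every token embedding independently, and the cloud receives no feedback during the security game, the inferences $\hat{t}_j$ can be treated as independent Bernoulli trials conditional on the observed transcript $\{o_j\}_{j=1}^{|x|}$. Second, because $|\hat{x}|=|x|$ in this game, the paragraph preceding the proposition identifies $Bleu(\hat{x},x)=C/|x|$ with $C=\sum_{j=1}^{|x|}\mathbb{1}[\hat{t}_j=t_j]$, so the target event rewrites as $\{C\geq\rho|x|\}$.

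Writing $\psi=|V|-1$, $\bar{p}_c = e^\epsilon/(e^\epsilon+\psi)$, and $\bar{p}_e = \psi e^\epsilon/(\psi e^\epsilon+1)$, Theorem~\ref{upper-bound token} directly gives $P[\hat{t}_j=t_j]\leq\bar{p}_c$, and its complementary lower bound yields $P[\hat{t}_j\neq t_j]\leq \bar{p}_e$. By independence, any configuration with exactly $k$ correct and $|x|-k$ incorrect positions has joint probability at most $\bar{p}_c^{k}\bar{p}_e^{|x|-k}$. I would then exploit $\bar{p}_c\leq\bar{p}_e$, which holds in the regime of large vocabulary and moderate $\epsilon$ that the paper targets, to monotonically upper-bound each term of $P[C\geq\rho|x|]=\sum_{k\geq\rho|x|}P[C=k]$ by $\bar{p}_c^{\rho|x|}\bar{p}_e^{(1-\rho)|x|}$. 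The final algebraic step relies on the identity $f_1 f_2 = \bar{p}_c$ with $f_1=(\psi e^\epsilon+1)/(\psi e^\epsilon+\psi^2)$ and $f_2=\bar{p}_e$, which rewrites $\bar{p}_c^{\rho|x|}\bar{p}_e^{(1-\rho)|x|}$ as $f_1^{\rho|x|}f_2^{(1-\rho)|x|}$, matching the claim.

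The delicate part is controlling the combinatorial slack that appears when one sums over all subsets of positions realising at least $\rho|x|$ correct tokens, since a naive union bound would introduce a factor $\sum_{k\geq\rho|x|}\binom{|x|}{k}$ that is absent from the stated right-hand side. Absorbing this factor requires using the strict multiplicative gap $f_1/\bar{p}_c = 1 + 1/(\psi e^\epsilon)$, which encodes precisely that the first factor is intentionally larger than the tightest per-token bound $\bar{p}_c$, thereby leaving room in the $\rho|x|$-th power for the combinatorial terms. Once the per-token bounds of Theorem~\ref{upper-bound token} are combined with position-wise independence and this combinatorial absorption, the product form of the proposition follows, and the subsequent generalisation to token-sequence correlations simply shifts $\bar{p}_c$ and $\bar{p}_e$ by the additive correlation parameter $\gamma$, exactly as Theorem~\ref{Token sequences bound} asserts.
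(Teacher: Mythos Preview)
Your overall route mirrors the paper's: reduce $Bleu(\hat{x},x)$ to the count $C=\sum_j\mathbb{1}[\hat{t}_j=t_j]$, apply the per-token bounds of Theorem~\ref{upper-bound token} to get $P[\hat{t}_j=t_j]\le\bar{p}_c$ and $P[\hat{t}_j\neq t_j]\le\bar{p}_e$, use independence across positions, and invoke monotonicity in $C$ (valid because $\bar{p}_c\le\bar{p}_e$) to reduce to the boundary $C=\rho|x|$. That is exactly the paper's argument.

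Two remarks. First, your ``identity'' step is slightly off as stated: from $f_1f_2=\bar{p}_c$ and $f_2=\bar{p}_e$ you obtain $\bar{p}_c^{\rho|x|}\bar{p}_e^{(1-\rho)|x|}=f_1^{\rho|x|}f_2^{|x|}$, not $f_1^{\rho|x|}f_2^{(1-\rho)|x|}$. This is harmless for the direction you need---since $f_2<1$ the stated right-hand side is larger, and equivalently $f_1>\bar{p}_c$ gives the inequality directly---but it is an inequality, not a rewrite.

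Second, your proposed ``combinatorial absorption'' does not work: the slack $(f_1/\bar{p}_c)^{\rho|x|}=(1+1/(\psi e^\epsilon))^{\rho|x|}$ is essentially $1$ for realistic $\psi$ and $\epsilon$, far too small to absorb $\sum_{k\ge\rho|x|}\binom{|x|}{k}$. The paper does not attempt this either; it simply passes from $P[C\ge\rho|x|]$ to the upper bound on a \emph{single} fixed configuration with exactly $\rho|x|$ correct positions, invoking monotonicity of the per-configuration bound in $C$ and omitting the binomial sum altogether. So to match the paper you should drop the absorption argument and state the bound for the single most-likely configuration; you have correctly identified that this leaves the counting factor unaccounted for, but that is how the paper's proof is written.
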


\begin{proof}
Let us recall the Bleu score definition as follows: $Bleu(\hat{x}, x) = BP(\hat{x}, x) \times C / |x|,$
where $BP(\hat{x}, x)$ stands for the brevity penalty given for the mismatched length between the reconstructed prompt $\hat{x}$ and ground-truth prompt $x$, and $C$ is the number of correct reconstructed tokens in $\hat{x}$.
In our security game, $BP(\hat{x}, x) = 1$ since $\hat{x}$ and $x$ have the same length. 

The probability of having $C$ correct reconstructed tokens, assuming $\{\hat{t}_j = t_j\}_{j = 1}^C$ and $\{\hat{t}_j \neq t_j\}_{j = C+1}^{|x|}$ without loss of generality, is:
\begin{small}
\begin{multline}
P\big(\{\hat{t}_j = t_j\}_{j = 1}^C, \{\hat{t}_j \neq t_j\}_{j = C+1}^{|x|}; \{o_j\}_{j = 1}^{|x|}\big) \\ = \prod_{j = 1}^C P[ARR(e_{t_j}) \doteq o_j] \times \prod_{j = C+1}^{|x|} P[ARR(e_{t_j}) \not\doteq o_j],
\label{eq: C probability}
\end{multline}
\end{small}

\noindent where $P[ARR(e_{t_j}) \not\doteq o_j]$ indicates the cloud's probability to infer than $t_j$ is not the ground-truth token of the observed $\epsilon$-IND preserving token embedding $o_j$.

From Theorem \ref{upper-bound token}, we have that $\prod_{j = 1}^C P[ARR(e_{t_j}) \doteq o_j] \leq (\frac{e^\epsilon}{e^\epsilon + |V| - 1})^C$. In addition, we have that
\begin{small}
\begin{align}
& P[ARR(e_{t_j}) \not\doteq o_j] = 1 - P[ARR(e_{t_j}) \doteq o_j] \\
& \leq 1 - \frac{1}{1 + (|V|-1)e^\epsilon} = \frac{(|V| - 1)e^\epsilon}{1 + (|V| - 1)e^\epsilon}.
\label{incorrect upperbound}
\end{align}
\end{small}

\noindent Let's denote $\psi = |V|-1$, from Eqs. \ref{bounded token}, \ref{eq: C probability}, and \ref{incorrect upperbound}, we have
\begin{footnotesize}
\begin{align} 
& P(\{\hat{t}_j = t_j\}_{j = 1}^C, \{\hat{t}_j \neq t_j\}_{j = C+1}^{|x|}; \{o_j\}_{j = 1}^{|x|}) \leq (\frac{e^\epsilon}{e^\epsilon + |V| - 1})^C \times \\
& \big(\frac{(|V|-1)e^\epsilon}{1 + (|V|-1)e^\epsilon}\big)^{|x| - C} = \big(\frac{\psi e^\epsilon + 1}{\psi e^\epsilon + \psi^2}\big)^C \times \big(\frac{\psi e^\epsilon}{\psi e^\epsilon + 1}\big)^{|x|-C},
\label{eq: C monotonic}
\end{align}
\end{footnotesize}

From Eq. \ref{eq: C monotonic}, the probability to reconstruct a clearer gist of a prompt $x$, i.e., $C$ increases, is reduced monotonically. It is because $\psi \gg 1$ and $C \in [0, |x|]$. Therefore, we have that: $P\big[Bleu(\hat{x}, x) \geq \rho; \{o_j\}_{j = 1}^{|x|}\big]  = P\big[C \geq \rho |x|; \{o_j\}_{j = 1}^{|x|}\big] \leq P\big[C = \rho |x|; \{o_j\}_{j = 1}^{|x|}\big]
= P(\{\hat{t}_j = t_j\}_{j = 1}^{\rho |x|}, \{\hat{t}_j \neq t_j\}_{j = \rho |x|+1}^{|x|}; \{o_j\}_{j = 1}^{|x|}) \leq \big(\frac{\psi e^\epsilon + 1}{\psi e^\epsilon + \psi^2}\big)^{\rho |x|} \times \big(\frac{\psi e^\epsilon}{\psi e^\epsilon + 1}\big)^{(1-\rho)|x|}.$

Consequently, Proposition \ref{Bleu Bound} holds.
\end{proof}

\begin{theorem} The cloud's previously reconstructed token sequences $\hat{t}_{<j}$ can enhance its probability of correctly reconstructing the next token $t_j$: $P(\hat{t}_j = t_j | \hat{t}_{<j})$. This advantage is bounded by a constant $\gamma$ in practice, as follows: 
\begin{small}
\begin{equation}
\forall t_j \in x: 0 \leq P(\hat{t}_j = t_j | \hat{t}_{<j}) - P(\hat{t}_j = t_j) \le \gamma, \text{\ where } \gamma \in [0, 1].
\label{advantages token sequences}
\end{equation}
\end{small}
Given a bounded constant $\gamma$, we extend Proposition 1, bounding the probability to reconstruct a gist level higher than $\rho$ of the prompt $x$ exploiting $\hat{t}_{<j}$, as follows:
\begin{footnotesize}
\begin{equation}
P\big[Bleu(\hat{x}, x) \geq \rho; \{o_j\}_{j = 1}^{|x|}\big] \\ \leq \big(\frac{\psi e^\epsilon + 1}{\psi e^\epsilon + \psi^2} + \gamma \big)^{\rho |x|} \times \big(\frac{\psi e^\epsilon}{\psi e^\epsilon + 1} - \gamma\big)^{(1-\rho)|x|}. \nonumber
\end{equation}
\end{footnotesize}
\label{Token sequences bound}
\end{theorem}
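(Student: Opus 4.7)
The plan is to mirror Proposition~1's proof, but to chain-rule-decompose the joint reconstruction probability so that the per-token advantage $\gamma$ from the hypothesis can be injected into each conditional factor.

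First, for a specific arrangement with $C$ correctly reconstructed positions (taken as the first $C$ without loss of generality), factor the joint probability via the chain rule as
\begin{equation*}
\prod_{j=1}^{C} P(\hat t_j = t_j \mid \hat t_{<j}) \cdot \prod_{j=C+1}^{|x|} P(\hat t_j \neq t_j \mid \hat t_{<j}).
\end{equation*}
The hypothesis $0 \le P(\hat t_j = t_j \mid \hat t_{<j}) - P(\hat t_j = t_j) \le \gamma$ couples each conditional factor to its marginal analogue from Theorem~\ref{upper-bound token}. Applying the Theorem~\ref{upper-bound token} upper bound to each correct-token factor gives $P(\hat t_j = t_j \mid \hat t_{<j}) \le \tfrac{\psi e^\epsilon + 1}{\psi e^\epsilon + \psi^2} + \gamma$. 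Dually, writing $P(\hat t_j \neq t_j \mid \hat t_{<j}) = 1 - P(\hat t_j = t_j) - \alpha_j$ where $\alpha_j \in [0,\gamma]$ is the realized advantage, the worst-case saturation $\alpha_j = \gamma$ combined with the Theorem~\ref{upper-bound token} lower bound on $P(\hat t_j = t_j)$ yields $P(\hat t_j \neq t_j \mid \hat t_{<j}) \le \tfrac{\psi e^\epsilon}{\psi e^\epsilon + 1} - \gamma$. Multiplying across the $C$ correct and $|x|-C$ incorrect positions produces the per-arrangement bound
\begin{equation*}
\Bigl(\tfrac{\psi e^\epsilon + 1}{\psi e^\epsilon + \psi^2} + \gamma\Bigr)^{C} \Bigl(\tfrac{\psi e^\epsilon}{\psi e^\epsilon + 1} - \gamma\Bigr)^{|x|-C}.
\end{equation*}
Lifting this to $P[C/|x| \ge \rho]$ then reuses the monotonicity observation from the proof of Proposition~\ref{Bleu Bound}: for $\psi \gg 1$ the per-arrangement product is decreasing in $C$, so $P[C \ge \rho|x|] \le P[C = \rho|x|]$, and substituting $C = \rho|x|$ yields the theorem's stated form.

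The main obstacle is justifying the uniform saturation $\alpha_j = \gamma$ inside the product. Increasing $\alpha_j$ inflates the correct-token factor but deflates the incorrect-token factor, so a naive outcome-wise maximization would pick $\alpha_j = \gamma$ only for $j \le C$ and $\alpha_j = 0$ otherwise, yielding a differently shaped (and tighter) bound. To recover the stated product form one must argue that the cloud's conditional policy produces a single, position-independent advantage, or equivalently that in the gist regime $\rho \ge p_j + \gamma$ targeted by the theorem the sign of $\partial_{\alpha_j}\bigl[(p_j + \alpha_j)^{C}(1 - p_j - \alpha_j)^{|x|-C}\bigr]$, which equals $\mathrm{sgn}\bigl[C - (p_j + \alpha_j)|x|\bigr]$, is non-negative so that the joint maximum indeed occurs at $\alpha_j = \gamma$ uniformly. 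A companion check is that the Proposition~\ref{Bleu Bound} monotonicity in $C$ survives the $\pm \gamma$ perturbation, which requires $\gamma$ small enough that the perturbed correct factor remains below the perturbed incorrect factor when $\psi \gg 1$; this is supported by the empirically small $\gamma$ values (e.g., $\gamma \le 0.146$ on Evalplus) reported in the paper, and is what closes the argument.
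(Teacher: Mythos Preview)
Your proposal follows the paper's proof essentially step for step: chain-rule factor the joint reconstruction event, saturate every conditional at $P(\hat t_j = t_j) + \gamma$ (what the paper calls the ``worst-case for the client,'' from which it takes $P(\hat t_j \neq t_j \mid \hat t_{<j}) = 1 - P^*$), multiply, and reuse Proposition~\ref{Bleu Bound}'s monotonicity in $C$ to pass to $C = \rho|x|$. Your ``main obstacle'' paragraph is actually more careful than the paper---the paper simply asserts uniform saturation $\alpha_j = \gamma$ without discussion, whereas you flag the tension on the incorrect-token factors and supply the sign-of-derivative and single-policy justifications that the paper omits.
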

\begin{proof} The probability of having $C$ correct reconstructed tokens, assuming $\{\hat{t}_j = t_j|\hat{t}_{<j}\}_{j = 1}^C$ and $\{\hat{t}_j \neq t_j|\hat{t}_{<j}\}_{j = C+1}^{|x|}$ without loss of generality, is: $P\big[Bleu(\hat{x}, x) \geq \rho; \{o_j\}_{j = 1}^{|x|}\big]$
\begin{small}
\begin{align}
& = P\big[C \geq \rho |x|; \{o_j\}_{j = 1}^{|x|}\big] \leq P\big[C = \rho |x|; \{o_j\}_{j = 1}^{|x|}\big] \\
& = P(\{\hat{t}_j = t_j|\hat{t}_{<j}\}_{j = 1}^{\rho |x|}, \{\hat{t}_j \neq t_j |\hat{t}_{<j}\}_{j = \rho |x|+1}^{|x|}; \{o_j\}_{j = 1}^{|x|}). \label{mid-probability}
\end{align}
\end{small}
From Eq. \ref{advantages token sequences}, we have that $\forall j: P(\hat{t}_j = t_j | \hat{t}_{<j}) \le P(\hat{t}_j = t_j) + \gamma$. Let us consider the worst-case for the client, in which the cloud has the maximal probability of correctly reconstructing every token $t_j$, denoted as $P^*(\hat{t}_j = t_j | \hat{t}_{<j}) = P(\hat{t}_j = t_j) + \gamma$. The corresponding probability of incorrectly reconstructing a token $t_j$ is: $P(\hat{t}_j \neq t_j | \hat{t}_{<j}) = 1 - P^*(\hat{t}_j = t_j | \hat{t}_{<j})$. From Eq. \ref{mid-probability},
\begin{footnotesize}
\begin{align}
& P\big[Bleu(\hat{x}, x) \geq \rho; \{o_j\}_{j = 1}^{|x|}\big] \leq \prod_{j =1}^{\rho|x|} P^*(\hat{t}_j = t_j | \hat{t}_{<j}) \prod^{|x|}_{j = \rho|x|+1} P(\hat{t}_j \neq t_j | \hat{t}_{<j}) \nonumber \\
& = \big(\frac{\psi e^\epsilon + 1}{\psi e^\epsilon + \psi^2}+\gamma\big)^{\rho |x|} \times \big(\frac{\psi e^\epsilon}{\psi e^\epsilon + 1}-\gamma\big)^{(1-\rho)|x|}
\end{align}
\end{footnotesize}
Consequently, Theorem \ref{Token sequences bound} holds.
\end{proof}

\section{Complexity and Cost Analysis}
\label{Complexity and Cost Analysis}

We evaluate the complexity and cost of operating \textsc{NOIR}. Let $n$ and $d$ be the numbers of input tokens and the hidden size of attention layers. The network communication between the client and cloud involves two phases: sending the encoder's output embedding $\mathcal{E}$ (complexity $O(nd)$) and receiving the enriched embedding $\ddot{\mathcal{E}}$ (cost $O(nd)$), resulting in a total transmission cost of $O(nd)$. On the client side, the computation consists of three stages: generating the privatized token embedding ($O(n)$), and feeding embeddings through the encoder and decoder, each with complexity $O(n^2d)$, leading to a total cost of $O(n^2d)$ (comparable with \citep{NIPS2017_3f5ee243}). Fine-tuning costs are notably reduced, as the client computes the decoder's gradients twice and the encoder's gradients once per training iteration, compared to computing gradients for the entire LLM. In fact, using 1 and 4 attention blocks for the encoder and decoder, respectively, in CodeLlama-7B results in 71.9\% lower tuning costs. On the cloud side, prompting and hosting costs are reduced because the prompt embedding passes through fewer attention blocks (27 vs. 32), lowering the cloud prompting and hosting cost by 15.6\%. \textsc{NOIR} is scalable, requiring minimal resources for fine-tuning/operation.

\section{Pilot General Text Generation Pipeline}
\label{General Text}

To adapt \textsc{NOIR} for general text generation, we propose a six-phase fine-tuning strategy mirroring pre-training while escalating reasoning complexity. \textbf{Phase 1} trains the encoder, decoder, and middle block's LoRA on simple math/logic datasets to generate explicit reasoning steps. \textbf{Phase 2} integrates multi-hop and scientific reasoning datasets to synthesize coherent reasoning from multiple facts. \textbf{Phase 3} strengthens problem-solving across domains (math, code, writing) using high-quality datasets for deeper abstract reasoning. \textbf{Phase 4} extends handling of long contexts (up to 32,000 tokens) and multi-turn conversations with datasets like patents and dialogues, enabling interactive reasoning and plan revision. \textbf{Phase 5} employs Group Sequence Policy Optimization (GSPO) \cite{zheng2025group} to align reasoning sequences, ensuring structured, repeat-free outputs with correct final answers. \textbf{Phase 6} tailors the model to enterprise use cases via domain-specific corpora while retaining cross-domain capabilities. A small subset of prior data is retained in all phases to prevent forgetting. We deployed a Qwen3-32B with $\epsilon=27$ (\url{https://noir.oppyai.com}) and on-going trials with Qwen3-235B-A22B-Instruct using this training pipeline. The curated dataset has total of $\sim$2.7m data points across six phases. We achieve highly competitive performance across reasoning tasks on the LiveBench \cite{white2024livebench} while maintaining $\epsilon$-IND (Table \ref{LiveBenchPerformance}). Note that, poor data quality and fine-tuning pipeline can significantly lower \textsc{NOIR}'s performance rendering it unpractical for enterprises.

\begin{table}[h]
\footnotesize
\centering
\begin{tabular}{lcc}
\hline
\textbf{} & \textbf{NOIR} & \textbf{Qwen3-32B} \\
\hline
Reasoning Average & 66.5 & 48.2 \\
Mathematics Average & 71.8 & 67.4 \\
Instruction Following Average & 70.2 & 17.8 \\
Language Average & 41.3 & 55.0 \\
\hline
\end{tabular}
\caption{\textsc{NOIR}'s Performance on LiveBench-2026-01-08. Qwen3-32B's Performance is Recorded on the Leader Board.}
\label{LiveBenchPerformance}
\end{table}

\section{Preliminary Results on Proprietary LLMs}
\label{NVIBPrelim}

To adapt NOIR for black-box middle blocks from proprietary LLMs, we introduce a representation-learning module that maps both sides into a shared, pre-agreed embedding space, enabling seamless integration. This approach extends the Nonparametric Variational Information Bottleneck \cite{henderson2023a} (NVIB) autoencoder, which offers attention-like expressivity, model-agnostic flexibility, and theoretical rigor, making it ideal for bridging heterogeneous LLM components.

% \begin{figure}[h]
%     \centering
%     \resizebox{0.7\linewidth}{!}{
%     \includegraphics{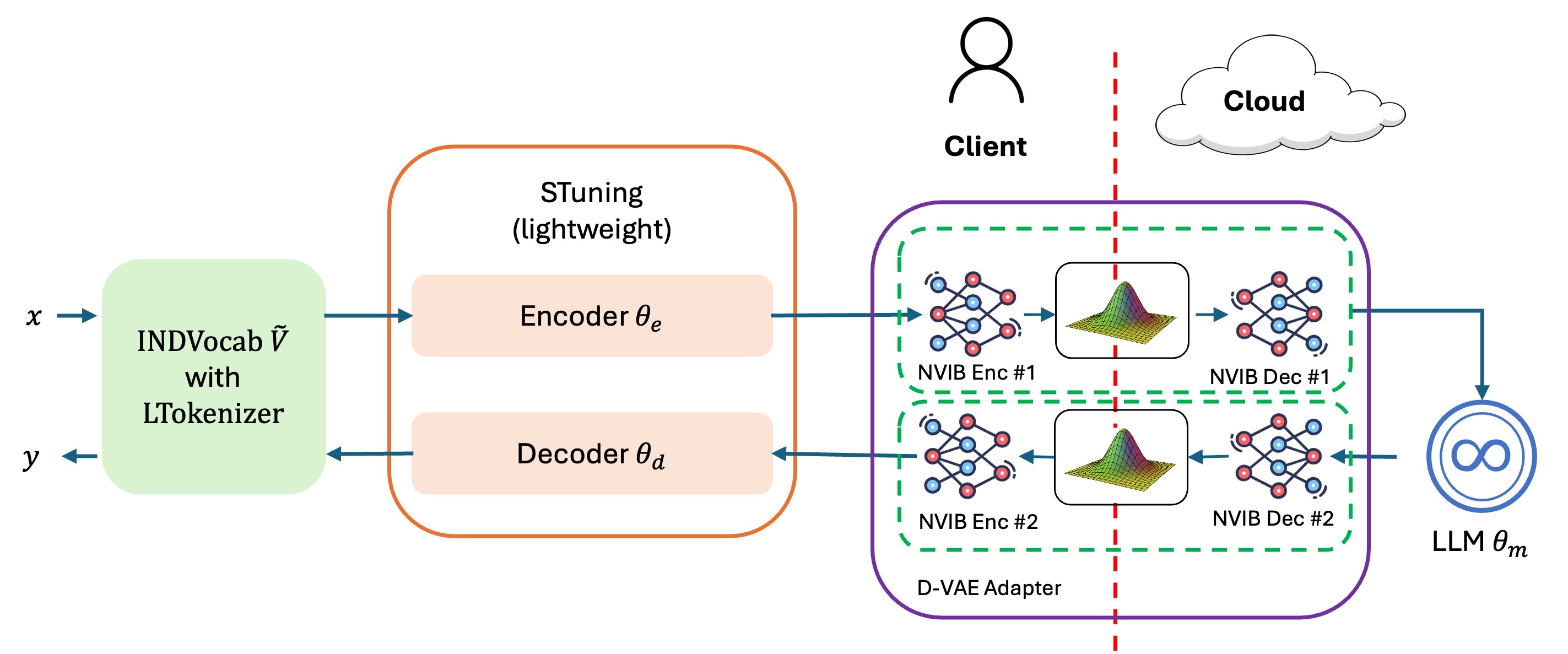}
%     }
%     \caption{D-VAE structure for encoder/decoder alignment.}
%     \label{img:nvib}
% \end{figure}

\begin{wrapfigure}{r}{0.65\columnwidth}
  \begin{center} \vspace{-17.5pt}
\includegraphics[width=0.65\columnwidth]{} \vspace{-25pt}
  \end{center}
  \caption{D-VAE structure for encoder/decoder alignment.} \vspace{-7.5pt}
\label{img:nvib}
\end{wrapfigure}

Figure \ref{img:nvib} shows the D-VAE adapter block, which consists of two paired NVIB autoencoders. One autoencoder processes client-to-cloud representations, while the other handles cloud-to-client responses. Each NVIB encoder compresses input hidden states into a shared latent space with a pre-defined dimensionality, ensuring dimensional alignment and distribution compatibility across heterogeneous models. The corresponding decoders reconstruct the latent variables into embeddings that match the expected interface of the next component.

\begin{wrapfigure}{r}{0.50\columnwidth}
  \begin{center} \vspace{-15pt}
\includegraphics[width=0.50\columnwidth]{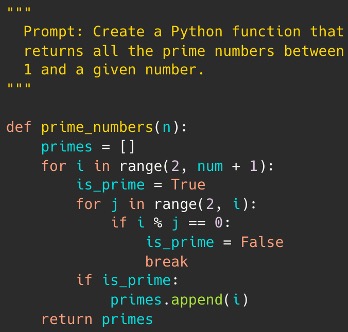} \vspace{-20pt}
  \end{center}
  \caption{An example of generated code and prompt.} \vspace{-5pt}
\label{img:nvib_examples}
\end{wrapfigure}

To evaluate our approach, we instantiate the client-side components using Qwen2.5-Coder-7B-Instruct, selecting the first attention block as the encoder and the last as the decoder for \textsc{NOIR}. The cloud-side middle block is set as the entire attention stack from LLaMA-3.1-8B, treated as a fixed black-box module. We train the end-to-end system for three epochs under two data regimes: a small-scale setting with 18k samples from the CodeAlpaca dataset and a large-scale setting with 100k samples from our curated dataset.

% \begin{figure}[H] 
% \captionsetup[subfigure]{justification=centering}
%     \centering
%       \begin{subfigure}[t]{0.45\columnwidth}
%         \centering
%        \includegraphics[scale=0.28]{images/nvib_example1.jpg}
%        % \caption{MBPP dataset}
%        % \label{Fig:BlockSizesMBPP}
%        \end{subfigure}
%        \hfill
%       \begin{subfigure}[t]{0.45\columnwidth}
%       \centering
%       \includegraphics[scale=0.28]{images/nvib_example2.jpg}
%     \end{subfigure}
%     \caption{Examples of D-VAE generated code and prompts.}
%     \label{img:nvib_examples}  
% \end{figure}

Our preliminary results show that the D-VAE aligns the two model structures. In the 18k CodeAlpaca setting, the system achieves a Pass@1 of 2.8\% on MBPP, indicating that limited data suffices for learning a functional cross-model interface. Scaling to 100k curated examples significantly boosts performance to 12.4\% Pass@1, underscoring the D-VAE adapter's ability to leverage larger datasets and enhance downstream code generation despite the heterogeneity and black-box nature of the intermediate LLM. Although the preliminary performance is far from practically usable, Figure \ref{img:nvib_examples} shows that the architecture generates coherent, syntactically valid, and task-appropriate code solutions, highlighting its potential and suggesting further improvements with scaling and refinement.

\end{document}